\documentclass[12pt]{article}
\usepackage[a4paper,textwidth=126mm,textheight=195mm]{geometry}
\usepackage{amsmath,amssymb,bm}
\usepackage{amsthm}
\usepackage{graphicx}
\usepackage{booktabs}
\usepackage{url}
\usepackage{algorithm}
\usepackage{algorithmic}

\newtheorem{theorem}{Theorem}[section]
\newtheorem{definition}[theorem]{Definition}
\newtheorem{lemma}[theorem]{Lemma}
\newtheorem{corollary}[theorem]{Corollary}
\newtheorem{proposition}[theorem]{Proposition}
\newtheorem{conjecture}[theorem]{Conjecture}
\newtheorem{remark}[theorem]{Remark}
\newtheorem{example}[theorem]{Example}

\title{Structural Properties of DomiRank Centrality:\\
Symmetry Axioms, an Entropy Bound, and Spectral Interpolation}
\author{Yingying Zhang\textsuperscript{1}, Chengye Zhao\textsuperscript{1,*}\\
{\small 1. College of Sciences, China Jiliang University, Hangzhou 310018, China}\\
{\small * Corresponding author. E-mail: cyzhao@cjlu.edu.cn}\\
{\small ORCID: 0009-0001-9073-2525 (Y. Zhang); 0000-0002-1149-4419 (C. Zhao)}}

\begin{document}
\maketitle

\begin{abstract}
DomiRank is a node-importance algorithm for unweighted networks, defined by a dynamical model governed by a competition-strength parameter, a dominance threshold, and a decay rate. We establish three structural results. First, on symmetry. Automorphism-related vertices share the same DomiRank value, and a graph whose values are pairwise distinct must be asymmetric. The orbit count bounds the number of distinct values, and regular and vertex-transitive graphs admit closed forms. Second, on entropy. The maximum of the DomiRank entropy over connected graphs, $\log_2 N$, is attained if and only if the graph is regular. The initial entropy derivative is the degree--log-degree edge covariance, making the decrease rigorous near vanishing competition for structurally disassortative graphs, and an assortative counterexample shows it does not extend to all graphs; a sufficient-condition theorem and a refined conjecture cover the rest. Third, on spectrum. The normalized distribution is $\sigma$-invariant if and only if the degree vector is an adjacency-matrix eigenvector. The parameter $\sigma$ interpolates continuously between degree and smallest-eigenvector centrality within the resolvent (Bonacich-type) family, its sensitivity being a second-order suppression operator on the degree vector. Graph transformations are also characterized, including the $m$-product. Experiments on four real networks and synthetic families up to $10^4$ vertices confirm the interpolation and exhibit monotone entropy decrease, continuous ranking interpolation, decaying positive fraction and ranking reversals; where the positive fraction has decayed far from one, the limit describes the solution vector rather than the induced ranking. DomiRank is thus a tunable complement to principal-eigenvector centrality. A companion study quantifies how much of its task-level edge over a degree baseline survives degree correction.
\end{abstract}

\noindent\textbf{Keywords:} DomiRank; graph centrality; graph automorphism; graph entropy; resolvent centrality

\section{Introduction}

In the analysis of complex networks, node-importance evaluation is a fundamental task for revealing structural features and mining core functional units. Related algorithms have been widely applied to social-network analysis \cite{p1,p2}, information-propagation prediction \cite{p3}, network modeling \cite{p4}, and critical-infrastructure protection \cite{p5,p6}. Among traditional methods, degree centrality \cite{p7} counts the number of direct neighbors and is simple and intuitive, but it ignores the overall structure and indirect connections; eigenvector centrality \cite{p8} emphasizes that a node's importance depends on the quality of its neighbors, yet suffers from low computational efficiency on large networks; PageRank \cite{p9} improves the iterative logic of eigenvector centrality by introducing a damping factor and has become a classic tool for Web ranking and citation networks, but its essence still relies on link-weight propagation and it cannot accurately capture the suppression effect and dominance capability of nodes over their neighbors. Moreover, the structural-symmetry information encoded in graph automorphisms --- and, more generally, the question of which vertices a centrality index is able to distinguish --- is rarely analyzed explicitly when a new index is proposed; the benchmark framework of \cite{p10} makes such an analysis systematic for existing indices.

DomiRank \cite{p11} was proposed to address these limitations. Based on a dynamical-system model with a competition-strength parameter and a dominance threshold, it builds an evaluation framework that quantifies node dominance, offering a new perspective on node-importance analysis. The graph automorphism group \cite{p12}, the classical algebraic object describing structural symmetry, reveals the equivalence mappings among vertices; its combination with node-importance evaluation provides important theoretical support for the soundness and robustness of the algorithm. At the same time, entropy \cite{p13}, as an effective tool for measuring the uniformity of a distribution, reflects the degree of differentiation of node importance and provides a quantitative basis for understanding the influence of algorithm parameters on the evaluation results.

These three aspects are complementary: graph automorphism characterizes ``which vertices are structurally equivalent'' (symmetry constraints), DomiRank entropy measures ``the degree of differentiation of the importance distribution'' (distributional uniformity), and graph transformations reveal ``the stability laws of the evaluation results under structural interventions'' (robustness). They examine the same theoretical object, DomiRank, from the three perspectives of equivalence classes, measures, and structural dynamics, jointly forming a systematic theoretical-analysis framework for this centrality.

The difficulty is that DomiRank has so far been used as a practical ranking algorithm, and its three defining features have been treated as implementation details rather than as objects of theory. The steady state of \eqref{eq:dynamic} depends on the competition parameter, so the vector it converges to is not one of the classical eigenvector-based rankings and no formula for those applies to it; the entropy it induces is not one of the standard graph entropies, so the usual extremal arguments do not transfer; and the graph transformations that practitioners apply to real networks (smoothing, partitioning, edge deletion, blow-up) have no known effect on the values. In what follows we answer the three questions this raises: what the score can and cannot distinguish, how its dispersion behaves as $\sigma$ grows, and which of its algebraic features survive a structural intervention.

The main contributions of this paper are as follows:
\begin{enumerate}
    \item \textbf{A symmetry axiom with an exact combinatorial strength.} We prove that DomiRank is invariant under the automorphism group (Theorem~\ref{thm:automorphism}) and that the number of distinct DomiRank values is bounded by the number of orbits (Corollary~\ref{cor:orbit_count}), with equality holding on every vertex-transitive graph (Theorem~\ref{thm:two_orbits}) and pairwise distinct values forcing asymmetry (Corollary~\ref{cor:asymmetric}); we exhibit the smallest-order instance in which structurally inequivalent vertices receive equal values, which shows the bound to be strict (Section~\ref{sec:automorphism}).

    \item \textbf{An entropy bound with a closed-form initial derivative.} We prove that the maximum DomiRank entropy $\log_2 N$ is attained if and only if the graph is regular (Theorem~\ref{thm:entropy_max}), compute the initial entropy derivative in closed form as the degree--log-degree edge covariance (Proposition~\ref{prop:initial_entropy}), and show through an assortative counterexample that unconditional monotonicity fails; the surviving statement is formulated as Conjecture~\ref{conj:entropy} and supported by exhaustive enumeration.

    \item \textbf{The spectral role of $\sigma$, and where it fails.} Within the resolvent (Bonacich-type) family, $\sigma$ moves the ranking continuously from degree centrality to smallest-eigenvector centrality (Theorem~\ref{thm:sigma_zero}, Proposition~\ref{prop:limit}), with a closed-form sensitivity formula in Proposition~\ref{prop:sensitivity}, and we give the exact characterization of its degeneracy: the normalized distribution is $\sigma$-invariant if and only if the graph is harmonic (Proposition~\ref{prop:invariance}), with the interpolation mechanism and its three axes of difference from the existing theory of parameter-dependent centralities made explicit in Remark~\ref{rem:interpolation}.

    \item \textbf{Behaviour under structural transformations.} We derive the response of DomiRank to vertex similarity, vertex partitions, edge swaps and $m$-products (Theorems~\ref{thm:similar_neighbor}, \ref{thm:partition}--\ref{thm:m_product}), and we treat edge deletion separately (Theorem~\ref{thm:edge_deletion}), where an identity survives under mild conditions but the stronger exact scaling fails, with the smallest counterexample recorded; the $m$-product case gives the explicit relation between the values on $G^{(m)}$ and those on $G$ at parameter $\sigma m$.
\end{enumerate}

Positioning within the centrality literature. It is worth making explicit where the present contribution sits, since each of its three pillars stands on an established line of work. The symmetry pillar answers a question posed long ago by Sabidussi \cite{p14} and formalized as a benchmark by Bao and Zhang \cite{p10}: a centrality index is sound only if structurally equivalent vertices receive equal values, and its discriminating power is then read off from the extent to which inequivalent vertices receive different ones. We adopt that benchmark rather than propose a new one, and we show that DomiRank passes it (Theorem~\ref{thm:automorphism}) while adding a parameter-dependent ranking evolution that parameter-free indices such as degree or eigenvector centrality cannot express (Section~\ref{sec:sigma}). The entropy pillar places DomiRank among the graph-entropy measures surveyed by Dehmer and Mowshowitz \cite{p13}, and contributes an exact maximum, a closed-form initial derivative with a signed combinatorial interpretation, and a documented counterexample to unconditional monotonicity. The spectral pillar places DomiRank in the classical resolvent, or Bonacich-type, family \cite{p15,p16,p17}, and shows that its distinctive behaviour, namely non-positive values together with a smallest-eigenvector limit, arises from the negative-parameter regime and from the choice $\boldsymbol{\beta} = \boldsymbol{k}$, rather than from the mere presence of a parameter (Remark~\ref{rem:interpolation}). This last point requires care, because the parameter-dependent behaviour of resolvent centralities has been studied systematically: Benzi and Klymko \cite{p18} analysed the limiting behaviour of Katz-type and subgraph-type centrality measures as the algebraic parameter varies, showing explicitly how the family interpolates between degree-like and eigenvector-like regimes. We do not claim the interpolation phenomenon itself as new; Remark~\ref{rem:interpolation} states the three axes along which the present result differs, the decisive one being that our endpoint is the \emph{smallest} rather than the principal eigenvector, which is reachable only because the competitive parameter is allowed to be negative. Two of the remaining ingredients are already known separately in the literature: vertex distinction through subgraph centrality was resolved by Ballini and Deniskin \cite{p19}, and the bipartitivity information carried by the smallest eigenvector is classical \cite{p20,p21}. Our contribution is to show that a single physically interpretable parameter ties the three together, and to state exactly where that unification fails (harmonic graphs, Proposition~\ref{prop:invariance}).

In parallel with this theoretical study, the companion paper \cite{p22} exploits the tunability of $\sigma$ established here to construct DomiRank-based fusion strategies for link prediction, demonstrating the practical value of the properties developed in this work; it is available as arXiv:2609.09658. A second companion, on null-model tests of centrality specificity, is cited in Section~\ref{sec:relation} as unpublished.

\paragraph{Scope.} The present paper is concerned exclusively with \emph{single-layer} simple undirected graphs. The extension of DomiRank to multiplex networks by Zheng, Engsig, Tejedor and Moreno \cite{p23} combines layer-specific DomiRank values with a global contextualization step in which multiplex-wide DomiRank scores are used to combine the per-layer rankings. That construction is not reproduced here, and no claim of this paper is made for multiplex graphs: symmetry orbits, the entropy bound, and the graph-transformation theorems below all refer to a single adjacency matrix. The relationship is one of foundation rather than of extension --- the single-layer analysis developed here is the necessary starting point for a spectral account of the multiplex version, and the degeneracy result of Proposition~\ref{prop:invariance} in particular singles out the graphs on which the competitive parameter carries no information at all, a condition that a naive layer-wise application of DomiRank would silently violate.

\section{Preliminaries}

We treat a network as an unweighted graph. Let $N$ be the number of vertices and assume $A \in \mathbb{R}^{N \times N}$ is the adjacency matrix of the network, defined by
\[
A_{ij}=
\begin{cases}
1 & \text{if vertices } i \text{ and } j \text{ are adjacent},\\
0 & \text{otherwise.}
\end{cases}
\]

We define the DomiRank centrality (denoted by $\boldsymbol{\Gamma} \in \mathbb{R}^{N \times 1}$) as the steady-state solution of the dynamical-system model, governed by
\begin{align}\label{eq:dynamic}
	\frac{d\boldsymbol{\Gamma}(t)}{dt} = \alpha A\left(\theta \mathbf{1}_{N \times 1} - \boldsymbol{\Gamma}(t)\right) - \beta \boldsymbol{\Gamma}(t),
\end{align}
where $\alpha$ is the competition-strength parameter controlling the suppression effect among neighbors, $\theta$ is the dominance threshold (we take $\theta = 1$ throughout), and $\beta$ is the natural decay rate ensuring convergence.

For the $i$-th vertex, equation \eqref{eq:dynamic} can be written as
\begin{align}\label{eq:node}
	\frac{d\Gamma_i(t)}{dt} = \alpha \sum_{j \in N(i)} \left(\theta - \Gamma_j(t)\right) - \beta \Gamma_i(t),
\end{align}
where $N(i)$ denotes the set of vertices directly connected to vertex $i$.

Theorem~\ref{thm:analytical} below reduces this dynamical system to a single linear solve, and every result in this paper is a statement about the vector it returns; we therefore take it as the starting point rather than re-deriving it in each section.

\begin{theorem}\label{thm:analytical}
When system \eqref{eq:dynamic} reaches its steady state, the analytic solution of the DomiRank centrality $\boldsymbol{\Gamma} \in \mathbb{R}^{N \times 1}$ is
\begin{align}\label{eq:solution}
\boldsymbol{\Gamma} = \theta \sigma \left(\sigma A + I\right)^{-1} A \mathbf{1}_{N \times 1},
\end{align}
where $\sigma = \dfrac{\alpha}{\beta} \in \mathbb{R}^+$ and $\det(\sigma A + I) \neq 0$. Throughout, the eigenvalues of $A$ are ordered $\lambda_1 \ge \lambda_2 \ge \cdots \ge \lambda_N$, so that $\lambda_1$ is the largest (algebraically) and $\lambda_N$ the smallest eigenvalue; $\lambda_N < 0$ for every connected graph with $N \ge 2$. In particular, $\sigma \in \left(0, -\dfrac{1}{\lambda_N}\right)$, where $\lambda_N$ is the smallest eigenvalue of $A$, in which case the system converges.

For each vertex $i$,
\begin{align}\label{eq:node_solution}
\Gamma_i = \sigma \sum_{j \in N(i)} \left( \theta - \Gamma_j \right).
\end{align}
\end{theorem}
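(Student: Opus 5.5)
The plan is to treat \eqref{eq:dynamic} as a linear, constant-coefficient ODE system and do three things in order: read off its equilibrium, rewrite that equilibrium in the form \eqref{eq:solution}, and then check when trajectories actually converge to it.

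First, the steady state. Setting $d\boldsymbol{\Gamma}/dt=0$ in \eqref{eq:dynamic} gives
\[
\alpha\theta A\mathbf{1}-\alpha A\boldsymbol{\Gamma}-\beta\boldsymbol{\Gamma}=0 .
\]
Dividing by $\beta>0$ turns this into
\[
(\sigma A+I)\boldsymbol{\Gamma}=\theta\sigma A\mathbf{1}, \qquad \sigma=\alpha/\beta .
\]
Whenever $\det(\sigma A+I)\neq0$ the equilibrium is unique and equals \eqref{eq:solution}. Note that $(\sigma A+I)^{-1}$ commutes with $A$, so the order of the factors in \eqref{eq:solution} does not matter. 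Reading the same identity componentwise gives
\[
\Gamma_i=\sigma\Bigl(\theta\sum_j A_{ij}-\sum_j A_{ij}\Gamma_j\Bigr),
\]
which is exactly \eqref{eq:node_solution}.

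Second, convergence. Rewrite the system as
\[
\dot{\boldsymbol{\Gamma}}=-\beta(\sigma A+I)\boldsymbol{\Gamma}+\alpha\theta A\mathbf{1} .
\]
Subtracting the equilibrium $\boldsymbol{\Gamma}^*$ leaves the homogeneous system $\dot{\mathbf{e}}=-\beta(\sigma A+I)\mathbf{e}$ for $\mathbf{e}=\boldsymbol{\Gamma}-\boldsymbol{\Gamma}^*$. Since $A$ is real symmetric, we diagonalize it in an orthonormal eigenbasis. Each eigen-component then decays like $e^{-\beta(1+\sigma\lambda_k)t}$. Global convergence from an arbitrary initial state is therefore equivalent to $1+\sigma\lambda_k>0$ for all $k$. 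For $k$ with $\lambda_k\ge0$ this is automatic because $\sigma>0$. The binding constraint is the smallest eigenvalue, $1+\sigma\lambda_N>0$, i.e.\ $\sigma<-1/\lambda_N$. Under this constraint $\sigma A+I$ is positive definite, hence invertible. This also justifies the standing hypothesis $\det(\sigma A+I)\neq0$ on the whole interval.

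Third, the claim $\lambda_N<0$ for a connected graph with $N\ge2$. Here $\operatorname{tr}A=0$, and $A\neq0$ because the graph has at least one edge, so not all eigenvalues vanish. Eigenvalues that sum to zero and are not all zero include a strictly negative one, so $\lambda_N<0$. Hence the interval $(0,-1/\lambda_N)$ is nonempty and well defined.

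None of these steps is hard. The one point needing care is the convergence claim, specifically the direction of the equivalence. It has to be stated for arbitrary initial conditions. If some $1+\sigma\lambda_k\le0$, an initial perturbation along the corresponding eigenvector does not decay, so the condition is also necessary. If the stated interval is meant only as a sufficient condition, the forward direction from the spectral decomposition is enough.
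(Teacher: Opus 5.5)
Your proposal is correct, and its core is exactly the paper's proof: set $d\boldsymbol{\Gamma}/dt=0$, divide by $\beta$ to get $(\sigma A+I)\boldsymbol{\Gamma}=\theta\sigma A\mathbf{1}$, invert $\sigma A+I$, and read the node-wise identity off the same steady-state equation. The paper's proof stops at that point. Convergence on $(0,-1/\lambda_N)$ is only asserted: the subsequent remark merely notes that $\sigma A+I$ then has positive eigenvalues and is therefore invertible. The claim $\lambda_N<0$ is not argued at all. Your error dynamics $\dot{\mathbf e}=-\beta(\sigma A+I)\mathbf e$, with modes decaying like $e^{-\beta(1+\sigma\lambda_k)t}$, and your trace argument ($\mathrm{tr}\,A=0$ with $A\neq 0$ symmetric) therefore supply the parts of the statement that the paper leaves unproved.
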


\begin{proof}
At steady state, $\dfrac{d\boldsymbol{\Gamma}(t)}{dt} = 0$, i.e.,
\[
\alpha A\left(\theta \mathbf{1}_{N \times 1} - \boldsymbol{\Gamma}\right) = \beta \boldsymbol{\Gamma}.
\]
Expanding,
\[
\alpha \theta A \mathbf{1}_{N \times 1} - \alpha A \boldsymbol{\Gamma} = \beta \boldsymbol{\Gamma}.
\]
Rearranging,
\[
\frac{\alpha}{\beta} \theta A \mathbf{1}_{N \times 1} = \left( \frac{\alpha}{\beta} A + I_{N} \right) \boldsymbol{\Gamma}.
\]
Letting $\sigma = \dfrac{\alpha}{\beta}$,
\[
\boldsymbol{\Gamma} = \sigma \theta \left( \sigma A + I_{N} \right)^{-1} A \mathbf{1}_{N \times 1}.
\]

From the steady-state form of equation \eqref{eq:node}, $\beta \Gamma_i = \alpha \sum_{j \in N(i)} (\theta - \Gamma_j)$, and with $\sigma = \alpha / \beta$ we obtain
\[
\Gamma_i = \sigma \sum_{j \in N(i)} \left( \theta - \Gamma_j \right).
\]
\end{proof}

\begin{remark}
To guarantee convergence, $\sigma A + I$ must be invertible and positive definite. By the spectral properties of $A$, when $\sigma \in (0, -1/\lambda_N)$ all eigenvalues of $\sigma A + I$ are positive, hence the matrix is invertible.
\end{remark}

\begin{example}\label{ex:first}
Consider the 5-vertex network of Figure~\ref{figure1}, whose adjacency matrix is
\[
A =
\begin{bmatrix}
0 & 1 & 1 & 0 & 0 \\
1 & 0 & 1 & 0 & 0 \\
1 & 1 & 0 & 1 & 1 \\
0 & 0 & 1 & 0 & 0 \\
0 & 0 & 1 & 0 & 0
\end{bmatrix}.
\]

The smallest negative eigenvalue of $A$ is $\lambda_{\min} \approx -1.814$, hence $\sigma \in \left(0, 0.551\right)$.

Taking $\sigma = 0.25$ and $\theta = 1$, equation \eqref{eq:solution} yields
\[
\boldsymbol{\Gamma} = \sigma \left(\sigma A + I\right)^{-1} A \mathbf{1} =
\begin{bmatrix}
0.2258 \\
0.2258 \\
0.8710 \\
0.0323 \\
0.0323
\end{bmatrix}.
\]
After normalization, $\begin{bmatrix} 0.1628 & 0.1628 & 0.6279 & 0.0233 & 0.0233 \end{bmatrix}^T$.
\end{example}

\begin{figure}[!h]
	\centering
	\includegraphics[width=\textwidth]{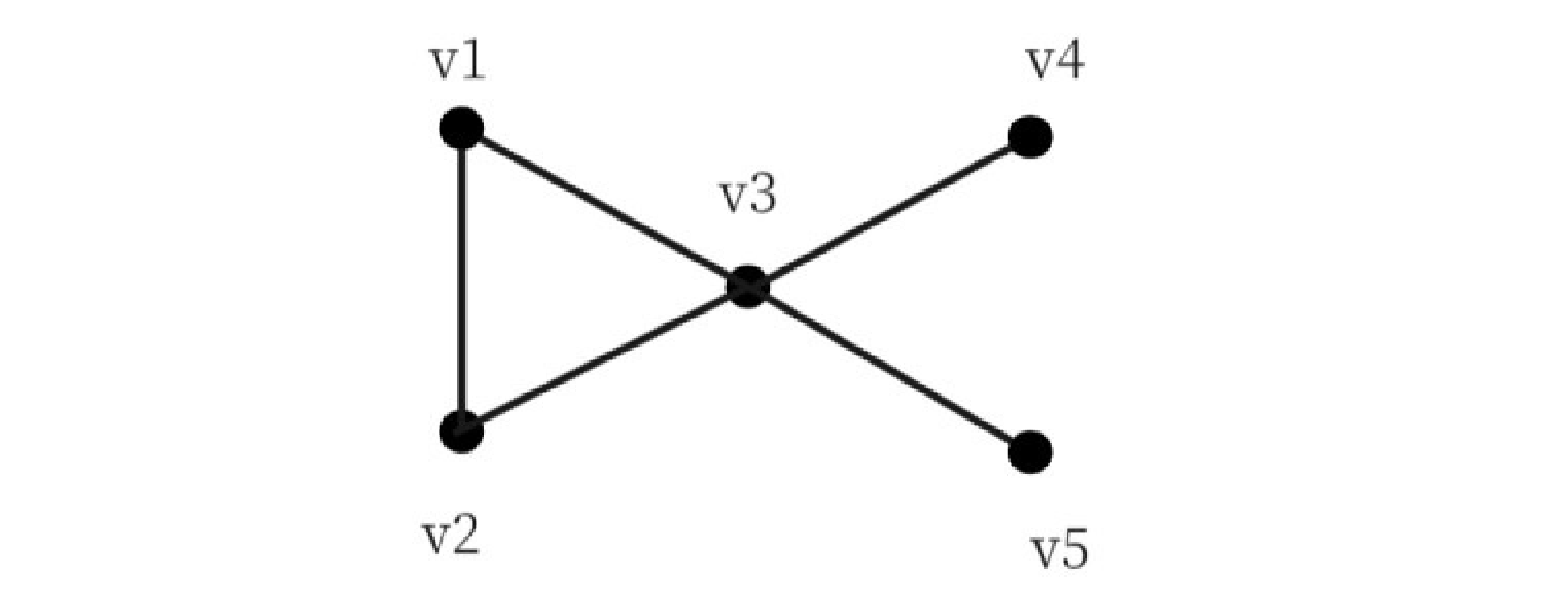}\\
	\caption{The 5-vertex graph of Example~\ref{ex:first}; vertices 1 and 2 form one orbit and vertices 4 and 5 another.}
	\textbf{Alt text:} A five-vertex undirected graph drawn with vertex 3 at the centre, joined to all four others; vertices 1 and 2 are joined to each other and each joined to vertex 3, and vertices 4 and 5 are leaves attached only to vertex 3. This is the graph of Example 2.1, with automorphism orbits {1,2}, {3} and {4,5}.
	\label{figure1}
\end{figure}

\section{DomiRank and Graph Automorphism}
\label{sec:automorphism}

The relation between graph automorphisms and centrality indices has a long history: Sabidussi \cite{p14} already used the property that structurally equivalent vertices should take equal values when discussing centrality indices in 1966; Bao and Zhang \cite{p10} further established ``same value on the same orbit, different values on different orbits'' as the benchmark framework for evaluating the discriminating power of centrality indices; for PageRank, Augeri \cite{p24} proved that vertices in the same block of the coarsest equitable partition necessarily share equal PageRank values. Recent work has also replaced the exact automorphism group by numerical surrogates, computing approximate structural symmetry from centrality-guided local iterations \cite{p25}; that line is algorithmically useful but yields no exact orbit partition, and the exact group is what the present analysis uses. Within this lineage, this section proves that automorphism-mapped vertices have equal DomiRank values (Theorem~\ref{thm:automorphism}) --- its positioning is not to discover a new symmetry phenomenon, but to verify that DomiRank satisfies the symmetry axiom of centrality, which is a necessary condition for being a sound centrality index; we then derive DomiRank properties of special structures such as regular graphs and discuss the discriminating power of DomiRank under the Bao--Zhang benchmark (Remark~\ref{rem:discrimination}). This section characterizes the equivalence-class structure of vertices under DomiRank and serves as the foundation for the subsequent entropy and graph-transformation studies.

Beyond the automorphism orbit structure, the question of which vertices receive equal scores has been studied extensively for spectral centralities defined through functions of the adjacency matrix: Ballini and Deniskin \cite{p19} proved that two vertices have equal subgraph centrality for every algebraic parameter if and only if they are cospectral, thereby settling the subgraph-centrality part of Estrada's conjecture on vertex distinction; the closeness and betweenness parts of the same conjecture are known to be false, counterexamples having been given independently by Rombach and Porter \cite{p26} and by Stevanovi\'c, and Deniskin and Benzi surveyed further results and open problems for such centralities \cite{p27}; Boldi, Furia and Prezioso recently characterized the expressivity of the whole family of linear geometric centralities, i.e., which node rankings they can induce \cite{p28}. DomiRank, being a rational function of the adjacency matrix applied to the degree vector (a resolvent-type centrality; see Remark~\ref{rem:interpolation}), satisfies the symmetry axiom by construction. The contribution of this section relative to this literature is twofold: Corollary~\ref{cor:orbit_count} relates the number of distinct DomiRank values to the orbit structure of the automorphism group, and the parameter-dependent ranking evolution established in Section~6 (see also Remark~\ref{rem:discrimination}) shows that the discriminating power of DomiRank is not fixed but evolves continuously with $\sigma$.

\begin{lemma}\label{lemma:regular}
For an $r$-regular graph with $N$ vertices, the normalized DomiRank value of every vertex $v_i \in V(G)$ is $\Gamma_i = \dfrac{1}{N}$.
\end{lemma}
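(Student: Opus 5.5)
The plan is to show that the all-ones vector is an eigenvector of every matrix in the analytic solution of Theorem~\ref{thm:analytical}, so that $\boldsymbol{\Gamma}$ is a constant vector. For an $r$-regular graph every row of $A$ sums to $r$, hence $A\mathbf{1}_{N\times 1} = r\mathbf{1}_{N\times 1}$. It follows that $(\sigma A + I)\mathbf{1} = (\sigma r + 1)\mathbf{1}$.

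Next I would verify that $\sigma r + 1 \neq 0$, which holds since $\sigma > 0$ and $r \ge 0$. With this, $\mathbf{1}$ is also an eigenvector of the inverse:
\[
(\sigma A + I)^{-1}\mathbf{1} = \frac{1}{\sigma r+1}\mathbf{1}.
\]
Here invertibility itself is guaranteed by the standing hypothesis $\det(\sigma A + I)\neq 0$, or by $\sigma \in (0,-1/\lambda_N)$. Substituting into \eqref{eq:solution} gives
\[
\boldsymbol{\Gamma} = \theta\sigma(\sigma A+I)^{-1} r\mathbf{1} = \frac{\theta\sigma r}{\sigma r+1}\,\mathbf{1},
\]
so all vertices receive the same raw value. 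As a cross-check, the node-level identity \eqref{eq:node_solution} with constant $\Gamma_i = c$ gives $c = \sigma r(\theta - c)$, which reproduces the same $c$.

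Finally I would normalize by the sum of the entries, as in Example~\ref{ex:first}. Each normalized entry is $c/(Nc) = 1/N$.

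The only delicate point is that this division needs $c \neq 0$, i.e.\ $r \ge 1$. For $r=0$ (the empty graph) $\boldsymbol{\Gamma}=\mathbf{0}$ and the normalization is undefined, so I would state the lemma for $r\ge 1$, or adopt the convention that the uniform distribution is assigned in that case. When $r \ge 1$ we have $c>0$, so the normalization is well defined and positive.
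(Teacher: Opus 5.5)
Your proposal is correct and follows essentially the same argument as the paper: $\mathbf{1}$ is an eigenvector of $A$ with eigenvalue $r$, hence of $(\sigma A+I)^{-1}$ with eigenvalue $1/(\sigma r+1)$, so $\boldsymbol{\Gamma}=\frac{\theta\sigma r}{\sigma r+1}\mathbf{1}$ and each normalized entry is $1/N$. Your observation that $r=0$ must be excluded is a legitimate refinement the paper leaves implicit: there $\boldsymbol{\Gamma}=\mathbf{0}$ and the normalization is undefined.
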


\begin{proof}
Take $\theta=1$ and consider an $r$-regular graph, in which every vertex has degree $r$, i.e., $A \mathbf{1}_{N \times 1} = r \mathbf{1}_{N \times 1}$. Then
	\[
	\begin{aligned}
		\boldsymbol{\Gamma} &= \sigma \left(\sigma A + I\right)^{-1} A \mathbf{1}_{N \times 1} \\
		&= \sigma \left(\sigma A + I\right)^{-1} r \mathbf{1}_{N \times 1} \\
		&= r \sigma \left(\sigma A + I\right)^{-1} \mathbf{1}_{N \times 1}.
	\end{aligned}
	\]

From $A \mathbf{1}_{N \times 1} = r \mathbf{1}_{N \times 1}$, the all-ones vector $\mathbf{1}_{N \times 1}$ is an eigenvector of $A$ with eigenvalue $r$. Therefore
	\[
	\left(\sigma A + I\right) \mathbf{1}_{N \times 1} = \sigma r \mathbf{1}_{N \times 1} + \mathbf{1}_{N \times 1} = \left(\sigma r + 1\right) \mathbf{1}_{N \times 1},
	\]
i.e., $\mathbf{1}_{N \times 1}$ is an eigenvector of $\sigma A + I$ with eigenvalue $\sigma r + 1$, so
\[
\left(\sigma A + I\right)^{-1} \mathbf{1}_{N \times 1} = \frac{1}{\sigma r + 1} \mathbf{1}_{N \times 1}.
\]

Hence
\begin{align}\label{eq:regular_gamma}
\boldsymbol{\Gamma} = \frac{r \sigma}{\sigma r + 1} \mathbf{1}_{N \times 1}.
\end{align}

Thus $\boldsymbol{\Gamma}$ is a constant vector with all entries $\dfrac{r \sigma}{\sigma r + 1}$; after normalization, $\Gamma_i = \dfrac{1}{N}$.
\end{proof}

\begin{theorem}\label{thm:automorphism}
Let $i, j$ be two distinct vertices of a graph $G$. If there exists an automorphism $\psi \in \mathrm{Aut}(G)$ with $\psi(i) = j$, then $\Gamma_i = \Gamma_j$.
\end{theorem}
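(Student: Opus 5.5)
The plan is to encode the automorphism $\psi$ as a permutation matrix and use the closed form \eqref{eq:solution} of Theorem~\ref{thm:analytical}. Let $P$ be the $N\times N$ permutation matrix with $P_{\psi(k),k}=1$, so that $(P\mathbf{x})_{\psi(k)} = x_k$ for every vector $\mathbf{x}$. Since $\psi$ preserves adjacency, $A_{\psi(k)\psi(l)} = A_{kl}$ for all $k,l$, which in matrix form is $PAP^{T} = A$, i.e.\ $PA = AP$. We also have $P\mathbf{1}_{N\times 1} = \mathbf{1}_{N\times 1}$ and $P^{T} = P^{-1}$.

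Next we show that $P$ commutes with the resolvent. From $PA=AP$ we get $P(\sigma A + I) = (\sigma A + I)P$. For $\sigma$ in the admissible range $(0,-1/\lambda_N)$ the matrix $\sigma A+I$ is invertible. Multiplying the commutation relation on both sides by $(\sigma A+I)^{-1}$ gives $(\sigma A + I)^{-1}P = P(\sigma A+I)^{-1}$. Then
\[
P\boldsymbol{\Gamma} = \theta\sigma P(\sigma A+I)^{-1}A\mathbf{1}_{N\times 1} = \theta\sigma(\sigma A+I)^{-1}AP\mathbf{1}_{N\times 1} = \boldsymbol{\Gamma},
\]
so $\boldsymbol{\Gamma}$ is fixed by $P$. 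Reading off coordinate $\psi(i)=j$ gives $\Gamma_j = (P\boldsymbol{\Gamma})_{\psi(i)} = \Gamma_i$, which is the claim. Normalization divides every entry by the same scalar, so the equality also holds for the normalized values.

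There is no real obstacle here; the only points needing care are bookkeeping ones. One is fixing the orientation convention of $P$ consistently, so that $(P\boldsymbol{\Gamma})_{\psi(i)}=\Gamma_i$ rather than $\Gamma_{\psi^{-1}(i)}$. Either convention works, because $\boldsymbol{\Gamma}$ turns out to be fixed under $P$ in any case. The other is noting that invertibility of $\sigma A+I$ is exactly the hypothesis of Theorem~\ref{thm:analytical}.

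As an alternative check that avoids inverses, one can argue from uniqueness. Define $\boldsymbol{\Gamma}'$ by $\Gamma'_k=\Gamma_{\psi(k)}$. Using \eqref{eq:node_solution} and the fact that $\psi$ maps $N(k)$ bijectively onto $N(\psi(k))$, $\boldsymbol{\Gamma}'$ satisfies the same linear system $(\sigma A+I)\boldsymbol{\Gamma}' = \theta\sigma A\mathbf{1}$. Uniqueness of its solution then forces $\boldsymbol{\Gamma}'=\boldsymbol{\Gamma}$.
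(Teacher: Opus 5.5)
Your proof is correct and follows the same route as the paper: write the automorphism as a permutation matrix $P$ with $PAP^{T}=A$, deduce that $P$ commutes with $(\sigma A+I)^{-1}$, and conclude $P\boldsymbol{\Gamma}=\boldsymbol{\Gamma}$ from $P\mathbf{1}=\mathbf{1}$. Your index convention and your alternative uniqueness argument are also sound.
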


\begin{proof}
Let $A$ be the adjacency matrix of $G$. Since $\psi$ is an automorphism, there exists a permutation matrix $P$ (satisfying $P^T = P^{-1}$) such that $A = P A P^T$.

From $A = P A P^T$ and $P^T = P^{-1}$ we get $AP = PA$, i.e., $A$ and $P$ commute.

Since $\sigma A + I = P(\sigma A + I)P^T$, taking inverses gives
\[
(\sigma A + I)^{-1} = P(\sigma A + I)^{-1} P^T,
\]
so $(\sigma A + I)^{-1}$ also commutes with $P$:
\[
P(\sigma A + I)^{-1} = (\sigma A + I)^{-1} P.
\]

By equation \eqref{eq:solution}, noting $P\mathbf{1}_{N \times 1} = \mathbf{1}_{N \times 1}$ (a permutation matrix does not change the all-ones vector), we have
\[
\begin{aligned}
	P\boldsymbol{\Gamma} &= \theta \sigma P(\sigma A + I)^{-1} A \mathbf{1}_{N \times 1} \\
	&= \theta \sigma (\sigma A + I)^{-1} P A \mathbf{1}_{N \times 1} \\
	&= \theta \sigma (\sigma A + I)^{-1} A P \mathbf{1}_{N \times 1} \\
	&= \theta \sigma (\sigma A + I)^{-1} A \mathbf{1}_{N \times 1} \\
	&= \boldsymbol{\Gamma}.
\end{aligned}
\]

From $P\boldsymbol{\Gamma} = \boldsymbol{\Gamma}$, $\boldsymbol{\Gamma}$ is invariant under the permutation $P$, i.e., $\Gamma_{\psi(i)} = \Gamma_i$, which means $\Gamma_i = \Gamma_j$.
\end{proof}

\begin{remark}
Theorem~\ref{thm:automorphism} shows that similar vertices (i.e., vertices in the same orbit) have equal DomiRank values. However, the converse is false --- the Frucht graph of Figure~\ref{figure2} is a counterexample. The Frucht graph is a 3-regular graph with 12 vertices and 18 edges; by Lemma~\ref{lemma:regular} its DomiRank vector is $\left[\dfrac{1}{12}, \dfrac{1}{12}, \cdots, \dfrac{1}{12}\right]$, yet the Frucht graph is asymmetric \cite{p29}, its automorphism group containing only the identity.
\end{remark}

\begin{figure}[!h]
	\centering
	\includegraphics{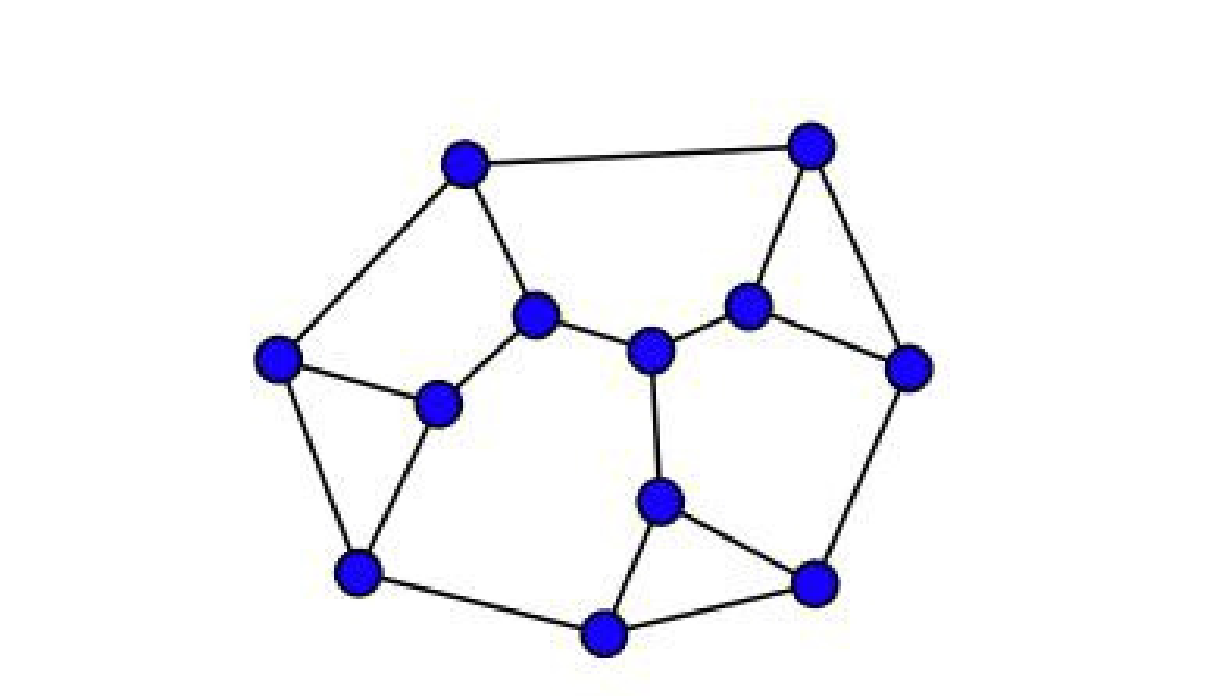}\\
	\caption{The Frucht graph: a 3-regular asymmetric graph with 12 vertices (see the remark after Theorem~\ref{thm:automorphism}).}
	\textbf{Alt text:}\ A bar chart of the DomiRank vector of the 11-vertex star graph at a single competition intensity, with one bar per vertex. The nine leaf vertices receive equal small values and the hub receives a value roughly an order of magnitude larger.
	\label{figure2}
\end{figure}

\begin{corollary}\label{cor:asymmetric}
If all vertices of a graph $G$ have pairwise distinct DomiRank values, then $G$ is asymmetric.
\end{corollary}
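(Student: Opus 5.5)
The plan is to argue by contraposition, so that Theorem~\ref{thm:automorphism} does all the work. Suppose $G$ is not asymmetric. Then $\mathrm{Aut}(G)$ contains a non-identity automorphism $\psi$.

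First, since $\psi$ is not the identity, there is a vertex $i$ with $\psi(i) = j \neq i$. This gives a pair of distinct vertices $i, j$ related by an automorphism.

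Second, Theorem~\ref{thm:automorphism} applies directly to this pair and yields $\Gamma_i = \Gamma_j$. Hence the DomiRank values are not pairwise distinct, which is the contrapositive of the claim.

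A few points of bookkeeping remain. The statement is understood for a fixed admissible $\sigma \in (0, -1/\lambda_N)$ (and $\theta = 1$), so that $\boldsymbol{\Gamma}$ is well defined by Theorem~\ref{thm:analytical}. The conclusion also holds for the normalized values, since normalization divides by a common scalar and preserves equalities. I expect no real obstacle here; the only care needed is to note that "asymmetric" means $\mathrm{Aut}(G)$ is trivial, so a nontrivial automorphism must move some vertex.
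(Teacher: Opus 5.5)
Your proof is correct and follows the paper's argument exactly: by contraposition, a non-identity automorphism moves some vertex $i$ to a vertex $j \neq i$, and Theorem~\ref{thm:automorphism} then forces $\Gamma_i = \Gamma_j$. Your remarks about fixing an admissible $\sigma$ and about normalization preserving equalities are harmless additions that the paper leaves implicit.
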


\begin{proof}
(Contrapositive.) Suppose a non-identity automorphism $\psi \in \mathrm{Aut}(G)$ exists; then there are vertices $u \neq v$ with $\psi(u) = v$. By Theorem~\ref{thm:automorphism}, $\Gamma_u = \Gamma_v$, contradicting the pairwise distinctness of all DomiRank values.
\end{proof}

\begin{corollary}\label{cor:tree}
If all pendant vertices of a tree $T$ have pairwise distinct DomiRank values, then the automorphism group of $T$ consists only of the identity.
\end{corollary}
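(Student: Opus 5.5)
We argue by contraposition, combining Theorem~\ref{thm:automorphism} with a classical structural fact about tree automorphisms. Suppose $\mathrm{Aut}(T)$ contains a non-identity automorphism $\psi$. The aim is to produce two \emph{distinct pendant vertices} $u \neq v$ with $\psi(u) = v$. Theorem~\ref{thm:automorphism} then gives $\Gamma_u = \Gamma_v$, contradicting the hypothesis that pendant vertices carry pairwise distinct values.

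The key step, and the only nontrivial one, is showing that a non-identity automorphism of a tree must move some leaf. An automorphism preserves degrees, so $\psi$ permutes the set $L$ of pendant vertices. Suppose, for contradiction, that $\psi$ fixes every leaf. Take any vertex $w$, and pick leaves $\ell_1, \ell_2$ such that $w$ lies on the unique $\ell_1$--$\ell_2$ path. Such leaves exist whenever $T$ has at least two vertices: extend a maximal path through $w$ in both directions until it ends at leaves. Since $\psi$ fixes $\ell_1$ and $\ell_2$, it maps the unique path between them onto itself. It preserves distances from $\ell_1$, so it fixes that path pointwise, and in particular fixes $w$. Hence $\psi$ is the identity, a contradiction. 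Therefore some leaf $u$ satisfies $\psi(u) = v \neq u$, and $v$ is again a leaf.

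The remaining points are routine. Trees with one or two vertices need a brief check. For $N=1$ there are no automorphisms to worry about, and Theorem~\ref{thm:analytical} is not even needed. For $N=2$ the two leaves are swapped by the automorphism and have equal value $\sigma/(1+\sigma)$, so the hypothesis fails vacuously and the implication holds. Note also that the corollary is strictly stronger than applying Corollary~\ref{cor:asymmetric} directly, since it only requires distinctness on leaves. The main obstacle is therefore precisely the leaf-fixing lemma in the previous paragraph; everything else follows from Theorem~\ref{thm:automorphism}.
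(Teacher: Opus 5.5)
Your proposal is correct and follows essentially the same route as the paper: contraposition through Theorem~\ref{thm:automorphism}, with the key lemma that a non-identity automorphism of a tree must move some pendant vertex, proved by distance preservation once all leaves are fixed. The only difference is in how that lemma is finished. The paper shows that a vertex is determined by its distances to all pendant vertices, while you place each vertex on a leaf-to-leaf path that must be fixed pointwise; both are valid, and yours is slightly more transparent.
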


\begin{proof}
We first establish a fact: every non-identity automorphism of a tree permutes at least two pendant vertices. Let $\varphi$ be an automorphism of $T$ fixing all pendant vertices. For any vertex $v$ and pendant vertex $u$, automorphisms preserve distances, so $d(\varphi(v), u) = d(\varphi(v), \varphi(u)) = d(v, u)$. In a tree, a vertex is uniquely determined by its distances to all pendant vertices: if $x \neq y$ and $x$ is not pendant, then $x$ has a branch not passing through $y$, along which some pendant vertex $u$ is reached with $d(x, u) < d(y, u)$; if $x$ is pendant, then $d(x, x) = 0 < d(y, x)$. Hence $\varphi(v) = v$ for all vertices, i.e., $\varphi$ is the identity.

Now let $\varphi$ be any non-identity automorphism of $\mathrm{Aut}(T)$. By the fact above, there exist pendant vertices $i$ and $j$ ($i \neq j$) with $\varphi(i) = j$. By Theorem~\ref{thm:automorphism}, $\Gamma_i = \Gamma_j$, contradicting the assumption that all pendant vertices have distinct DomiRank values.
\end{proof}

\begin{definition}\label{def:orbit}
Let $G$ be a graph with automorphism group $\mathrm{Aut}(G)$. The orbit of a vertex $u \in V(G)$, denoted by $u^{\mathrm{Aut}(G)}$ or $[u]$, is
\[
u^{\mathrm{Aut}(G)} = \{\alpha(u): \alpha \in \mathrm{Aut}(G)\}.
\]
\end{definition}

A graph $G$ with a single orbit is called vertex-transitive; that is, for any two vertices $u, v \in V(G)$ there exists an automorphism $\alpha \in \mathrm{Aut}(G)$ with $\alpha(u) = v$.

\begin{theorem}\label{thm:orbit}
Let $L_1, L_2, \dots, L_k$ be all orbits of $\mathrm{Aut}(G)$. For any two vertices $x, y \in L_i$ ($1 \le i \le k$), $\Gamma_x = \Gamma_y$. In particular, if $G$ is vertex-transitive, then $DR_C(G)=1$, where $DR_C(G)$ denotes the number of distinct entries of the DomiRank vector.
\end{theorem}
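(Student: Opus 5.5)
The first assertion follows directly from Theorem~\ref{thm:automorphism} and Definition~\ref{def:orbit}. Take $x, y \in L_i$. Since an orbit is by definition a set of the form $u^{\mathrm{Aut}(G)}$, there are automorphisms $\alpha_1, \alpha_2$ with $x = \alpha_1(u)$ and $y = \alpha_2(u)$. The composition $\psi = \alpha_2 \circ \alpha_1^{-1}$ is again an automorphism, because $\mathrm{Aut}(G)$ is a group, and it satisfies $\psi(x) = y$. If $x = y$ there is nothing to prove. Otherwise Theorem~\ref{thm:automorphism} applied to $\psi$ gives $\Gamma_x = \Gamma_y$. Hence the DomiRank vector is constant on each orbit $L_i$, and the number of distinct entries is at most $k$. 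I would record this upper bound explicitly, since it is the content later cited as Corollary~\ref{cor:orbit_count}.

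For the vertex-transitive case, there is only one orbit, $L_1 = V(G)$, so the first part makes $\boldsymbol{\Gamma}$ a constant vector, and $DR_C(G) = 1$. As a consistency check I would add a second route. A vertex-transitive graph is $r$-regular, because automorphisms preserve degree. Lemma~\ref{lemma:regular} and equation~\eqref{eq:regular_gamma} then give the explicit common value $r\sigma/(\sigma r + 1)$, and after normalization every entry equals $1/N$. This also covers the degenerate edgeless case $r = 0$: there $\boldsymbol{\Gamma} = \mathbf{0}$, which is still constant, although normalization is undefined. I would mention that case in a brief aside.

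There is no real obstacle. The only point needing care is that orbits are closed under composition and inversion, so that any two vertices of one orbit are linked by a single automorphism, which is the hypothesis Theorem~\ref{thm:automorphism} requires. The whole argument is valid for every admissible $\sigma \in (0, -1/\lambda_N)$, since Theorem~\ref{thm:automorphism} holds for each such $\sigma$.
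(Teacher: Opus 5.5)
Your proposal is correct and follows the paper's proof: both link two vertices of one orbit by an automorphism, apply Theorem~\ref{thm:automorphism}, and get $DR_C(G)=1$ from the single-orbit case. Your composition step $\psi=\alpha_2\circ\alpha_1^{-1}$ only spells out what the paper leaves implicit, and the cross-check via regularity and Lemma~\ref{lemma:regular} is a harmless addition.
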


\begin{proof}
Let $x, y$ belong to the same orbit $L_i$; then there exists an automorphism $\psi$ with $\psi(x) = y$. By Theorem~\ref{thm:automorphism}, $\Gamma_x = \Gamma_y$. Hence all vertices in the same orbit share the same DomiRank value. If $G$ is vertex-transitive, all vertices lie in one orbit, all DomiRank values are equal, and $DR_C(G)=1$.
\end{proof}

\begin{corollary}\label{cor:orbit_count}
Let $m$ be the number of distinct orbits of $G$. Then the DomiRank vector of $G$ has at most $m$ distinct entries.
\end{corollary}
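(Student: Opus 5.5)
The plan is to read Corollary~\ref{cor:orbit_count} off Theorem~\ref{thm:orbit} with a pigeonhole-style counting step.

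First, I will use the fact that the orbits $L_1, \dots, L_m$ of $\mathrm{Aut}(G)$ partition $V(G)$. Every vertex $u$ lies in its own orbit $[u]$, since the identity is an automorphism. Two orbits are either equal or disjoint, because $\mathrm{Aut}(G)$ is a group: if $w \in [u] \cap [v]$, then composing one automorphism with the inverse of the other maps $u$ into $[v]$, and symmetrically.

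Next, by Theorem~\ref{thm:orbit}, $\boldsymbol{\Gamma}$ is constant on each $L_i$; call this common value $\gamma_i$. Hence the map $V(G) \to \mathbb{R}$, $u \mapsto \Gamma_u$, factors through the quotient map $V(G) \to \{L_1, \dots, L_m\}$, $u \mapsto [u]$. Its image is therefore $\{\gamma_1, \dots, \gamma_m\}$, which has at most $m$ elements. This gives $DR_C(G) \le m$.

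I will also note that the bound is unaffected by normalization. Normalization multiplies $\boldsymbol{\Gamma}$ by a nonzero scalar, which is injective on values and so preserves the number of distinct entries. The statement thus holds for both the raw and the normalized vector, for every admissible $\sigma \in (0, -1/\lambda_N)$ and every $\theta$.

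There is no real obstacle here. The only point that needs an explicit sentence is the partition property of orbits, since Theorem~\ref{thm:orbit} presupposes that the $L_i$ are all the orbits and cover $V(G)$.
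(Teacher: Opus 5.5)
Your proposal is correct and follows the paper's own argument: constancy of $\boldsymbol{\Gamma}$ on each orbit by Theorem~\ref{thm:orbit}, followed by the count that $m$ orbits yield at most $m$ values. Your added remarks, that orbits partition $V(G)$ and that the count is unchanged by normalization (whenever normalization is defined), are sound but go beyond what the paper needs or includes.
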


\begin{proof}
By Theorem~\ref{thm:orbit}, vertices in the same orbit have equal DomiRank values; hence $m$ orbits produce at most $m$ distinct DomiRank values.
\end{proof}

\begin{remark}[Discriminating power under the Bao--Zhang benchmark]\label{rem:discrimination}
Within the benchmark framework of \cite{p10}, ``equal values on the same orbit'' is a symmetry axiom that any sound centrality index must satisfy; Theorem~\ref{thm:automorphism} shows that DomiRank passes this soundness check, while the extent to which ``different orbits take different values'' measures the discriminating power of the index. Corollary~\ref{cor:orbit_count} shows that the number of distinct DomiRank entries is bounded above by the number of orbits, and the example of the Frucht graph (the remark after Theorem~\ref{thm:automorphism}; 12 orbits but $DR_C(G) = 1$) shows that a large gap can exist between this upper bound and the actual value: on regular (more generally, harmonic; see Proposition~\ref{prop:invariance}) structures, the normalized DomiRank degenerates to the degree distribution and its discriminating ability is limited. However, on non-harmonic structures, the competition parameter $\sigma$ drives a continuous evolution of the node ranking and produces ranking reversals (see Figure~\ref{figure11} and Table~\ref{tab:top5} of Section~6), endowing DomiRank with a discriminating dimension beyond pure combinatorial symmetry (the orbit structure) --- this is precisely the incremental contribution of DomiRank to discriminating power relative to parameter-free indices such as degree or eigenvector centrality.
\end{remark}

\begin{example}\label{ex:orbit}
In the graph of Figure~\ref{figure1}, vertices $v_1, v_2$ are mapped to each other by an automorphism and belong to the same orbit; $v_4, v_5$ also form an orbit; $v_3$ alone forms an orbit. Hence the number of orbits is 3, and the DomiRank vector has 3 distinct entries, consistent with Corollary~\ref{cor:orbit_count}.

As shown in Figure~\ref{figure3}, the graph is $K_5$ with the two edges $v_1v_4$ and $v_2v_3$ deleted. Its automorphism group has order $8$ and has exactly two orbits, $\{v_1,v_2,v_3,v_4\}$ and $\{v_5\}$: the transpositions $(v_1\,v_2)(v_3\,v_4)$ and $(v_2\,v_3)$ are automorphisms, so $v_1$ and $v_2$ lie in the same orbit, as do $v_2$ and $v_3$, and $v_4$ is reachable from $v_1$ as well. At $\sigma = 0.3$ the DomiRank vector is $\left[\dfrac{27}{62},\; \dfrac{27}{62},\; \dfrac{27}{62},\; \dfrac{27}{62},\; \dfrac{21}{31}\right]$, which contains two distinct entries, one per orbit; the bound of Corollary~\ref{cor:orbit_count} is attained here, and each orbit receives its own value.

The converse situation, in which the bound is not attained because vertices in \emph{different} orbits share a value, requires slightly larger graphs. Figure~\ref{figure15} shows a seven-vertex graph with edge set
{\small
\[
\{v_1v_4,\, v_1v_5,\, v_2v_4,\, v_2v_5,\, v_3v_4,\, v_3v_5,\, v_3v_6,\, v_3v_7,\, v_6v_7\},
\]
}
whose automorphism group has order $8$ and four orbits $\{v_1,v_2\}$, $\{v_3\}$, $\{v_4,v_5\}$, $\{v_6,v_7\}$. At $\sigma = 0.3$ its DomiRank vector is
\[
 \bigl[\tfrac{21}{71},\tfrac{21}{71},\tfrac{51}{71},\tfrac{36}{71},\tfrac{36}{71},\tfrac{21}{71},\tfrac{21}{71}\bigr]:
\]
there are three distinct entries for four orbits, and the two orbits $\{v_1,v_2\}$ and $\{v_6,v_7\}$ both receive the value $21/71$. An exhaustive check over all connected graphs with $N \le 6$ shows that no such instance exists in that range, so seven vertices is the smallest order at which the orbit-count bound of Corollary~\ref{cor:orbit_count} fails to be attained.
\end{example}

\begin{figure}[!h]
	\centering
	\includegraphics[width=0.92\textwidth]{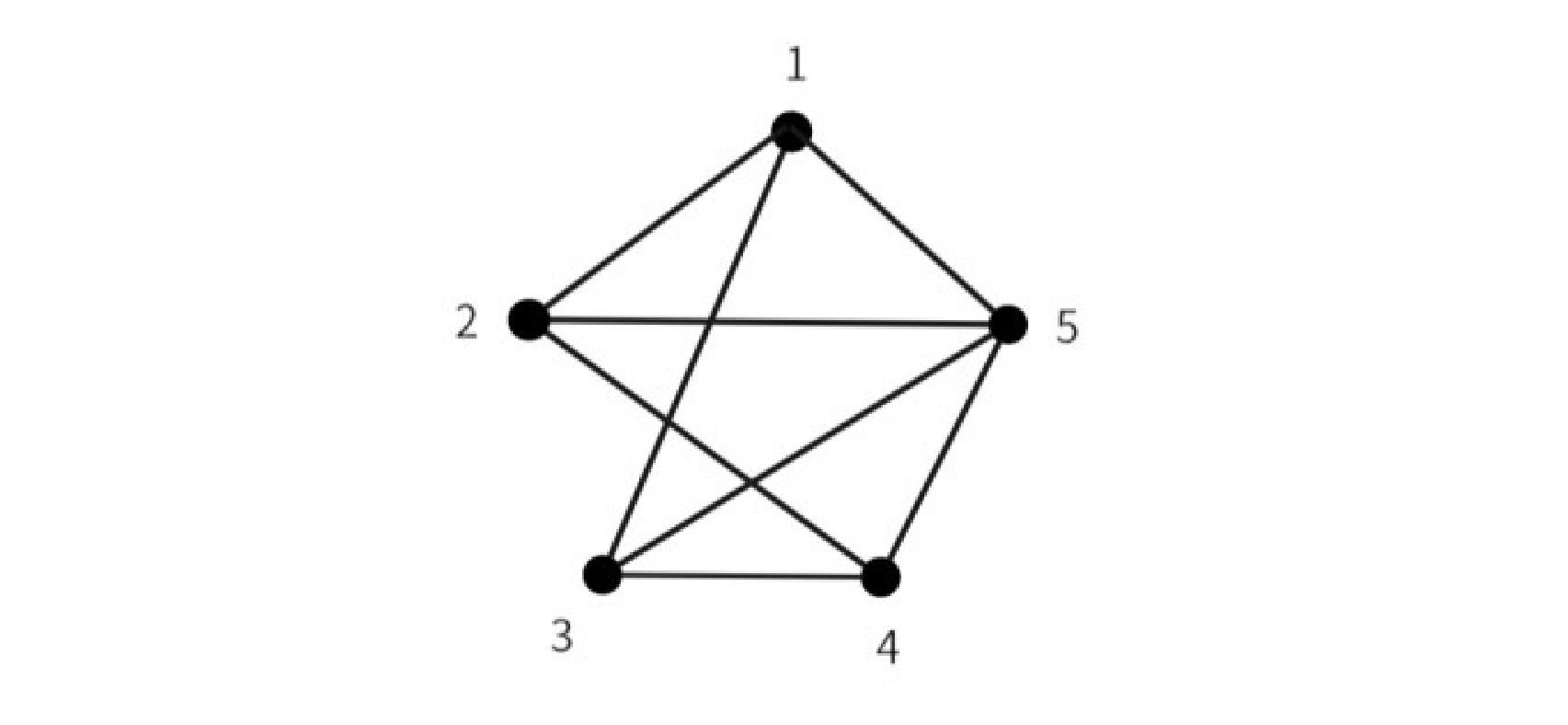}\\
	\caption{The graph $K_5-\{v_1v_4,\,v_2v_3\}$, which has two automorphism orbits $\{v_1,v_2,v_3,v_4\}$ and $\{v_5\}$ and attains the orbit-count bound of Corollary~\ref{cor:orbit_count} with two distinct DomiRank values (Example~\ref{ex:orbit}).}
	\textbf{Alt text:}\ A drawing of the complete graph K5 on five vertices with the two edges v1v4 and v2v3 removed. The graph has two automorphism orbits, {v1,v2,v3,v4} and {v5}, and attains the orbit-count bound with two distinct DomiRank values.
	\label{figure3}
\end{figure}

\begin{figure}[!h]
	\centering
	\includegraphics{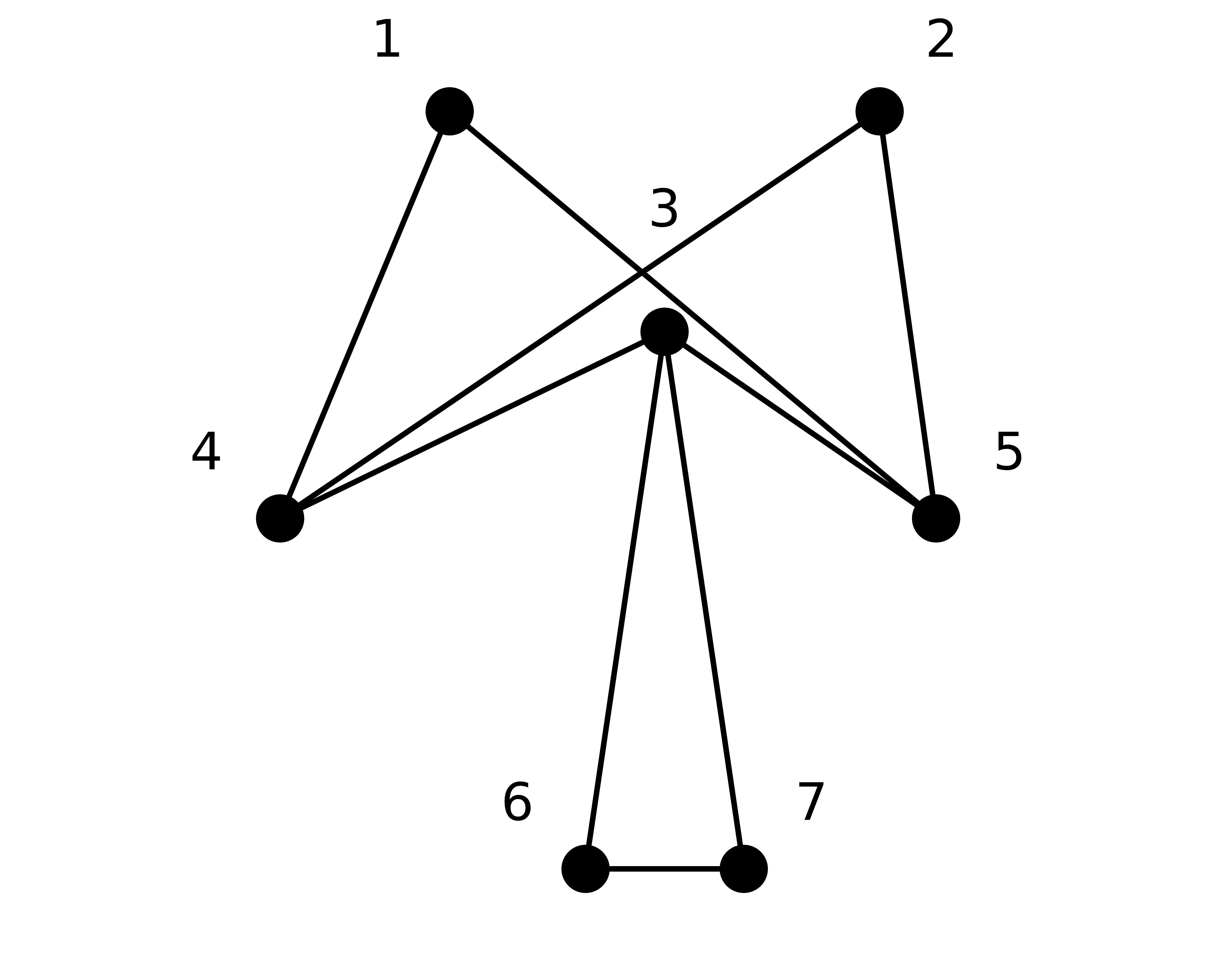}\\
	\caption{A seven-vertex graph with four automorphism orbits $\{v_1,v_2\}$, $\{v_3\}$, $\{v_4,v_5\}$, $\{v_6,v_7\}$ and only three distinct DomiRank values: the orbits $\{v_1,v_2\}$ and $\{v_6,v_7\}$ share the value $21/71$ at $\sigma=0.3$, so the bound of Corollary~\ref{cor:orbit_count} is not attained (Example~\ref{ex:orbit}).}
	\textbf{Alt text:}\ A drawing of a seven-vertex graph with edges v1v4, v1v5, v2v4, v2v5, v3v4, v3v5, v3v6, v3v7 and v6v7. It has four automorphism orbits but only three distinct DomiRank values, so the orbit-count bound is not attained; this is the smallest such example.
	\label{figure15}
\end{figure}

\begin{lemma}\label{lemma:regular_iff}
Let $G$ be a connected graph and take $\theta = 1$. Then $G$ is regular if and only if $\boldsymbol{\Gamma} = \lambda \vec{j}$ for some $\lambda \in \mathbb{R}$, where $\vec{j} = [1,1,\dots,1]^T$.
\end{lemma}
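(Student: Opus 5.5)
The forward direction is already contained in Lemma~\ref{lemma:regular}. If $G$ is $r$-regular, then $A\vec{j}=r\vec{j}$, and the computation leading to \eqref{eq:regular_gamma} gives $\boldsymbol{\Gamma}=\frac{r\sigma}{\sigma r+1}\vec{j}$. So we may take $\lambda=\frac{r\sigma}{\sigma r+1}$; nothing new is needed here.

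For the converse I will not invert $\sigma A+I$. Instead I work with the linear equation it comes from. By the proof of Theorem~\ref{thm:analytical}, with $\theta=1$, the vector $\boldsymbol{\Gamma}$ satisfies
\[
(\sigma A+I)\boldsymbol{\Gamma}=\sigma A\vec{j}=\sigma\boldsymbol{k},
\]
where $\boldsymbol{k}$ is the degree vector. Substituting $\boldsymbol{\Gamma}=\lambda\vec{j}$ gives $\sigma\lambda\boldsymbol{k}+\lambda\vec{j}=\sigma\boldsymbol{k}$, that is,
\[
\sigma(1-\lambda)\boldsymbol{k}=\lambda\vec{j}.
\]
Equivalently, in coordinates from \eqref{eq:node_solution}, $\lambda=\sigma k_i(1-\lambda)$ for every vertex $i$.

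Next I rule out the degenerate case $\lambda=1$. If $\lambda=1$, the left-hand side vanishes, so $\lambda\vec{j}=\vec{0}$, which forces $\lambda=0$, a contradiction. Hence $1-\lambda\neq 0$. Since $\sigma>0$, we get $k_i=\frac{\lambda}{\sigma(1-\lambda)}$ for all $i$, so all degrees are equal and $G$ is regular.

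Connectedness is used only to exclude trivialities. For $N\ge 2$ every vertex has degree at least $1$, so $\boldsymbol{k}\neq\vec{0}$, and this guarantees $\lambda\neq 0$; consistency of the constant degree with $\lambda$ is then automatic. I expect no real obstacle. The only point needing care is the $\lambda=1$ exclusion above, together with recalling that $\sigma>0$ lies in the admissible range of Theorem~\ref{thm:analytical}, so that $\boldsymbol{\Gamma}$ is well defined and unique.
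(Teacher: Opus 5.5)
Your proof is correct and follows the paper's argument. Sufficiency comes from Lemma~\ref{lemma:regular}, and necessity comes from substituting $\boldsymbol{\Gamma}=\lambda\vec{j}$ into the steady-state relation \eqref{eq:node_solution} to get $k_i=\lambda/(\sigma(1-\lambda))$ for every $i$; your explicit exclusion of $\lambda=1$ justifies the division by $1-\lambda$, which the paper performs without comment.
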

\begin{proof}
\noindent\textbf{Sufficiency:} If $G$ is $r$-regular, Lemma~\ref{lemma:regular} gives $\boldsymbol{\Gamma} = \dfrac{r \sigma}{\sigma r + 1} \vec{j}$, i.e., $\boldsymbol{\Gamma} = \lambda \vec{j}$ with $\lambda = \dfrac{r \sigma}{\sigma r + 1}$.

\noindent\textbf{Necessity:} If $\boldsymbol{\Gamma} = \lambda \vec{j}$, all vertices have equal DomiRank values, i.e., $\Gamma_i = \lambda$ for all $i$.

From equation \eqref{eq:node_solution} (with $\theta = 1$):
\[
\Gamma_i = \sigma \sum_{k \in N(i)} \left( 1 - \Gamma_k \right).
\]

Substituting $\Gamma_i = \lambda$ and $\Gamma_k = \lambda$,
\[
\lambda = \sigma (1 - \lambda) k_i,
\]
where $k_i$ is the degree of vertex $i$. Solving,
\begin{align}\label{eq:degree}
k_i = \dfrac{\lambda}{\sigma (1 - \lambda)}.
\end{align}

Since $\lambda$ and $\sigma$ are constants, all degrees $k_i$ are equal, so $G$ is regular.
\end{proof}

\begin{theorem}\label{thm:two_orbits}
Let $G$ be a connected graph with exactly two orbits. Then either $G$ is regular or $DR_C(G) = 2$.
\end{theorem}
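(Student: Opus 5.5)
By Theorem~\ref{thm:orbit}, every vertex of an orbit carries the same DomiRank value. With two orbits $L_1, L_2$, the vector $\boldsymbol{\Gamma}$ therefore takes at most two distinct values, so $DR_C(G) \in \{1,2\}$. The plan is to show that $DR_C(G)=1$ forces $G$ to be regular. Then every non-regular graph with two orbits has $DR_C(G)=2$, which gives the dichotomy.

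Suppose $DR_C(G)=1$, so $\boldsymbol{\Gamma} = \lambda \vec{j}$ for some $\lambda \in \mathbb{R}$. Since $G$ is connected, the necessity direction of Lemma~\ref{lemma:regular_iff} applies directly. Substituting into \eqref{eq:node_solution} gives $\lambda = \sigma(1-\lambda)k_i$ for every vertex $i$, and solving yields the common degree \eqref{eq:degree}, so $G$ is regular.

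The one point needing care is the division by $1-\lambda$ in \eqref{eq:degree}. If $\lambda = 1$, the relation reads $1 = 0$, which is impossible, so $\lambda \neq 1$. Likewise $\sigma > 0$, and $k_i \ge 1$ for every vertex because $G$ is connected with $N \ge 2$ (two orbits force $N \ge 2$). Hence the formula $k_i = \lambda/(\sigma(1-\lambda))$ is well defined and independent of $i$.

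Combining the two steps: either $\boldsymbol{\Gamma}$ is constant, in which case $G$ is regular, or $\boldsymbol{\Gamma}$ takes exactly two values, one per orbit, in which case $DR_C(G)=2$. The two alternatives are not exclusive. A regular graph with two orbits, such as an edge-transitive but non-vertex-transitive regular graph, falls under the first alternative, with $DR_C(G)=1$ by Lemma~\ref{lemma:regular}.

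The argument has no real obstacle; it consists of the orbit bound of Corollary~\ref{cor:orbit_count} together with Lemma~\ref{lemma:regular_iff}.
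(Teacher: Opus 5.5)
Your proof is correct and follows the paper's own argument: Corollary~\ref{cor:orbit_count} bounds $DR_C(G)\le 2$, and the case $DR_C(G)=1$ is settled by the necessity direction of Lemma~\ref{lemma:regular_iff}. Your additional check that $\lambda\neq 1$, which justifies the division in \eqref{eq:degree}, fills a small step the paper's lemma leaves implicit but does not change the route.
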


\begin{proof}
Since $G$ has two orbits, Corollary~\ref{cor:orbit_count} gives $DR_C(G) \le 2$. If $DR_C(G) = 1$, all vertices have equal DomiRank values and Lemma~\ref{lemma:regular_iff} implies that $G$ is regular; if $DR_C(G) = 2$, the conclusion holds.
\end{proof}

\begin{lemma}\label{lemma:same_neighbor}
Let $G$ be a connected graph and $i, j$ two distinct vertices with $N(i) = N(j)$. Then $\Gamma_i = \Gamma_j$.
\end{lemma}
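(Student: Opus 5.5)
The quickest route is to read the claim directly off the steady-state relation \eqref{eq:node_solution} of Theorem~\ref{thm:analytical}, rather than going through an automorphism argument. That equation gives
\[
\Gamma_i = \sigma \sum_{k \in N(i)} (\theta - \Gamma_k), \qquad \Gamma_j = \sigma \sum_{k \in N(j)} (\theta - \Gamma_k).
\]
The right-hand sides depend on the vertex only through its neighbourhood. Hence $N(i)=N(j)$ forces the two sums to run over the same index set with the same summands, and so $\Gamma_i=\Gamma_j$.

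The one point that needs care is well-definedness. Equation \eqref{eq:node_solution} is a consequence of the unique solution \eqref{eq:solution}, which exists because $\sigma A+I$ is invertible for $\sigma\in(0,-1/\lambda_N)$. So the $\Gamma_k$ on the right-hand sides are the entries of that single vector, not unknowns to be solved for. The comparison above is therefore between two expressions in the same fixed numbers, and no circularity arises.

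The automorphism route is a valid alternative. Since $i$ and $j$ are non-adjacent, because $i\notin N(i)=N(j)$, the transposition swapping $i$ and $j$ and fixing every other vertex preserves adjacency: an edge $ik$ maps to $jk$, which is present because $k\in N(j)$. That transposition is thus an automorphism, and Theorem~\ref{thm:automorphism} gives the result. It is not needed, however, since the direct computation is shorter.

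There is no real obstacle here. Connectivity of $G$ plays no role, apart from the standing assumptions used to guarantee that $\sigma A+I$ is invertible.
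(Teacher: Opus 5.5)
Your proposal is correct and follows the paper's proof: both read $\Gamma_i=\Gamma_j$ directly off the steady-state relation \eqref{eq:node_solution}, since the two sums run over the same neighbourhood. Your side remark is also valid: the transposition $(i\,j)$ is an automorphism, so Theorem~\ref{thm:automorphism} gives a second route, but as you say it is not needed.
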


\begin{proof}
Since $N(i) = N(j)$, equation \eqref{eq:node_solution} yields
\begin{align*}
\Gamma_i &= \sigma \sum_{k \in N(i)} \left( \theta - \Gamma_k \right)
 = \sigma \sum_{k \in N(j)} \left( \theta - \Gamma_k \right) \\
&= \Gamma_j.
\end{align*}
Hence $\Gamma_i = \Gamma_j$.
\end{proof}

\begin{lemma}\label{lemma:subset_neighbor}
Let $i, j$ be adjacent vertices of a graph $G$ with $\sigma < 1$. If $N(i)\setminus\{j\} \subsetneq N(j)\setminus\{i\}$ and $\theta - \Gamma_k > 0$ for all $k \in \big(N(j)\setminus\{i\}\big)\setminus\big(N(i)\setminus\{j\}\big)$, then $\Gamma_i < \Gamma_j$.
\end{lemma}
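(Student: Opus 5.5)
Write $S = N(i)\setminus\{j\}$, $T = N(j)\setminus\{i\}$ and $D = T\setminus S$, so $S \subsetneq T$ and $D \neq \emptyset$. The plan is to compare the two local steady-state equations \eqref{eq:node_solution} for $i$ and $j$ and subtract them.

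\textbf{Step 1 (split the neighbour sums).} Since $i$ and $j$ are adjacent, $N(i) = S \cup \{j\}$ and $N(j) = T \cup \{i\}$, and these unions are disjoint. Equation \eqref{eq:node_solution} therefore gives
\[
\Gamma_i = \sigma(\theta - \Gamma_j) + \sigma\sum_{k\in S}(\theta-\Gamma_k),\qquad
\Gamma_j = \sigma(\theta - \Gamma_i) + \sigma\sum_{k\in T}(\theta-\Gamma_k).
\]

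\textbf{Step 2 (subtract).} Because $T = S \cup D$, the contributions from $S$ cancel. The $\theta$-terms from the shared edge also cancel, which leaves
\[
\Gamma_j - \Gamma_i = \sigma(\Gamma_j - \Gamma_i) + \sigma\sum_{k\in D}(\theta-\Gamma_k),
\]
and hence
\[
(1-\sigma)(\Gamma_j-\Gamma_i) = \sigma\sum_{k\in D}(\theta-\Gamma_k).
\]

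\textbf{Step 3 (sign).} The right-hand side is strictly positive. Indeed $\sigma>0$ by Theorem~\ref{thm:analytical}, $D$ is nonempty, and every summand is positive by hypothesis. Since $\sigma<1$, the factor $1-\sigma$ is positive, and dividing by it gives $\Gamma_i<\Gamma_j$.

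The only delicate point is the bookkeeping in Step 2. The term $\theta - \Gamma_j$ appears in the equation for $\Gamma_i$, and the term $\theta - \Gamma_i$ appears in the equation for $\Gamma_j$. After subtraction these contribute $\sigma(\Gamma_j - \Gamma_i)$ to the right-hand side, which carries the same sign as the left-hand side. This is exactly why the hypothesis $\sigma<1$ is needed: it keeps the coefficient $1-\sigma$ positive.
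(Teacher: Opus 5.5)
Your proposal is correct and follows the paper's proof essentially step for step: you split off the shared edge in the two local equations \eqref{eq:node_solution}, subtract to obtain $(1-\sigma)(\Gamma_j-\Gamma_i)=\sigma\sum_{k\in D}(\theta-\Gamma_k)$, and conclude from $D\neq\emptyset$, the positivity hypothesis on $D$, and $0<\sigma<1$. You also state explicitly that $\sigma>0$ is needed for the right-hand side to be strictly positive, which the paper leaves implicit.
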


\begin{proof}
Since $i, j$ are adjacent, $j \in N(i)$ and $i \in N(j)$. By equation \eqref{eq:node_solution},
\begin{align*}
\Gamma_i &= \sigma\left(\theta - \Gamma_j\right)
 + \sigma\sum_{k \in N(i)\setminus\{j\}} \left(\theta - \Gamma_k\right), \notag\\
\Gamma_j &= \sigma\left(\theta - \Gamma_i\right)
 + \sigma\sum_{k \in N(j)\setminus\{i\}} \left(\theta - \Gamma_k\right).
\end{align*}

Subtracting the two equations and writing $D = \big(N(j)\setminus\{i\}\big)\setminus\big(N(i)\setminus\{j\}\big)$, from $N(i)\setminus\{j\} \subsetneq N(j)\setminus\{i\}$ we have $D \neq \emptyset$, and
\[
\begin{split}
\Gamma_j - \Gamma_i
&= \sigma\left(\Gamma_j - \Gamma_i\right)
 + \sigma\left[\sum_{k \in N(j)\setminus\{i\}}\left(\theta - \Gamma_k\right)
 - \sum_{k \in N(i)\setminus\{j\}}\left(\theta - \Gamma_k\right)\right] \\
&= \sigma\left(\Gamma_j - \Gamma_i\right) + \sigma\sum_{k \in D}\left(\theta - \Gamma_k\right).
\end{split}
\]

Rearranging,
\begin{align}\label{eq:subset}
(1-\sigma)\left(\Gamma_j - \Gamma_i\right) = \sigma\sum_{k \in D}\left(\theta - \Gamma_k\right).
\end{align}

By assumption, every $k \in D$ satisfies $\theta - \Gamma_k > 0$, so the right-hand side is strictly positive; since $\sigma < 1$, $\Gamma_j - \Gamma_i > 0$, i.e., $\Gamma_i < \Gamma_j$.
\end{proof}

\begin{remark}
In particular, if adjacent vertices $i, j$ satisfy $N(i)\setminus\{j\} = N(j)\setminus\{i\}$, then the difference set $D$ in \eqref{eq:subset} is empty, and hence $\Gamma_i = \Gamma_j$. This can be viewed as the counterpart of Lemma~\ref{lemma:same_neighbor} for adjacent vertices.
\end{remark}

\begin{lemma}\label{lemma:pendant}
Let $i$ be a pendant vertex of $G$ (degree 1), adjacent to vertex $j$ with $d_j \ge 2$. If $\sigma < 1$ and $\theta - \Gamma_k > 0$ for all $k \in N(j)\setminus\{i\}$, then $\Gamma_i < \Gamma_j$.
\end{lemma}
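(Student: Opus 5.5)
This is the special case of Lemma~\ref{lemma:subset_neighbor} in which $N(i)\setminus\{j\}=\emptyset$, so the plan is to reduce to that lemma and check its hypotheses.

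First, since $i$ is pendant and adjacent to $j$, we have $N(i)=\{j\}$, and hence $N(i)\setminus\{j\}=\emptyset$. Because $d_j\ge 2$, the set $N(j)\setminus\{i\}$ is nonempty. Therefore the strict inclusion $\emptyset \subsetneq N(j)\setminus\{i\}$ holds.

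Next, the difference set $D$ of Lemma~\ref{lemma:subset_neighbor} is
\[
D=\big(N(j)\setminus\{i\}\big)\setminus\emptyset = N(j)\setminus\{i\}.
\]
So the hypothesis $\theta-\Gamma_k>0$ for all $k\in D$ is exactly the assumption of the present statement. Together with $\sigma<1$, all hypotheses of Lemma~\ref{lemma:subset_neighbor} are satisfied, and it gives $\Gamma_i<\Gamma_j$.

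If one prefers a self-contained derivation, write the two instances of \eqref{eq:node_solution}:
\[
\Gamma_i=\sigma(\theta-\Gamma_j), \qquad \Gamma_j=\sigma(\theta-\Gamma_i)+\sigma\sum_{k\in N(j)\setminus\{i\}}(\theta-\Gamma_k).
\]
Subtracting the first from the second yields
\[
(1-\sigma)(\Gamma_j-\Gamma_i)=\sigma\sum_{k\in N(j)\setminus\{i\}}(\theta-\Gamma_k).
\]
The right-hand side is strictly positive, because the sum is over a nonempty set and every term is positive by assumption, while $\sigma>0$. Since $1-\sigma>0$, it follows that $\Gamma_j-\Gamma_i>0$.

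There is no real obstacle here. The only points needing care are the use of $d_j\ge 2$, which makes the sum nonempty and the inequality strict, and the standing assumption $\sigma>0$ from Theorem~\ref{thm:analytical}.
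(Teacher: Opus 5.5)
Your proposal is correct, and your self-contained derivation is exactly the paper's proof: subtract $\Gamma_i=\sigma(\theta-\Gamma_j)$ from $\Gamma_j=\sigma(\theta-\Gamma_i)+\sigma S$ to get $(1-\sigma)(\Gamma_j-\Gamma_i)=\sigma S>0$. Your primary route is also valid: the statement is the special case $N(i)\setminus\{j\}=\emptyset$ of Lemma~\ref{lemma:subset_neighbor}, with $d_j\ge 2$ giving the strict inclusion and $D=N(j)\setminus\{i\}$; the paper just repeats the computation directly instead of citing that lemma.
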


\begin{proof}
Since $i$ is pendant and adjacent to $j$, equation \eqref{eq:node_solution} gives
\[
\Gamma_i = \sigma \left( \theta - \Gamma_j \right).
\]

For vertex $j$ ($d_j \ge 2$), let $S = \sum_{k \in N(j)\setminus\{i\}} (\theta - \Gamma_k)$; then
\[
\Gamma_j = \sigma \left( \theta - \Gamma_i \right) + \sigma S.
\]

Computing $\Gamma_j - \Gamma_i$:
\[
\Gamma_j - \Gamma_i = \sigma(\theta - \Gamma_i) + \sigma S - \sigma(\theta - \Gamma_j) = \sigma(\Gamma_j - \Gamma_i) + \sigma S.
\]

Rearranging,
\[
(1 - \sigma)(\Gamma_j - \Gamma_i) = \sigma S.
\]

By assumption, $\theta - \Gamma_k > 0$ for $k \in N(j)\setminus\{i\}$, so $S > 0$; and since $\sigma < 1$,
\[
\Gamma_j - \Gamma_i = \frac{\sigma S}{1 - \sigma} > 0,
\]
i.e., $\Gamma_i < \Gamma_j$.
\end{proof}

\begin{remark}\label{rem:sigma_scale}
The condition $\sigma < 1$ in Lemmas~\ref{lemma:subset_neighbor} and \ref{lemma:pendant} is stated in absolute units, while the natural scale of $\sigma$ is spectral, $(0, -1/\lambda_N)$. On graphs with $|\lambda_N| > 1$ the condition $\sigma < 1$ covers the entire admissible range; on graphs with $|\lambda_N| < 1$ it restricts attention to an initial portion of the range, and the conclusion may continue to hold beyond $\sigma = 1$ whenever $1 - \sigma$ is replaced by the appropriate sign analysis of \eqref{eq:subset}.
\end{remark}

\begin{remark}\label{rem:pendant_sharp}
The threshold condition in Lemma~\ref{lemma:pendant} --- ``$\Gamma_k < \theta$ for all $k \in N(j)\setminus\{i\}$'' --- cannot be removed: DomiRank values may exceed the dominance threshold $\theta$ (when $\sigma$ approaches the convergence upper bound $-1/\lambda_N$, the values of high-degree vertices diverge), in which case the conclusion may fail. For example, take a 13-vertex tree: the central vertex $h$ is connected to 10 pendant vertices and to vertex $j$, which in turn is connected to the pendant vertex $i$. For $\sigma = 0.27$ (inside the convergence region), numerical computation gives $\Gamma_h \approx 10.942 > \theta$, $\Gamma_j \approx -2.683$, and $\Gamma_i = \sigma(\theta - \Gamma_j) \approx 0.994$; here $\Gamma_i > \Gamma_j$, i.e., the DomiRank value of the pendant vertex is larger than that of its unique neighbor.
\end{remark}

\section{DomiRank Entropy}

Building on the characterization of vertex equivalence by automorphisms, this section turns to the uniformity measure of the importance distribution and studies the entropy of the DomiRank distribution and its variation with the competition parameter. The Shannon entropy \cite{p30} is generally defined as
\begin{align}\label{eq:entropy}
H(\boldsymbol{\Gamma}) = -\sum_{i=1}^N \Gamma_i \log_2 \Gamma_i,
\end{align}
where $\Gamma_i$ are the normalized DomiRank values satisfying $\sum_{i=1}^N \Gamma_i = 1$; the definition applies on the parameter range where all DomiRank entries are positive, so that the normalized values form a probability distribution (parameters at which non-positive entries arise are treated through the effective, truncated distribution of Section~\ref{sec:sigma}).

\paragraph{Organization and strength of the results.}
The results of this section are deliberately presented in three layers of decreasing rigor, and we state the layering up front so that the scope of each statement is not overread.
\begin{itemize}
\item[(L1)] \emph{Local criterion, fully rigorous.} The initial entropy derivative is computed in closed form and its sign characterized exactly (Proposition~\ref{prop:initial_entropy}); this settles the behaviour as $\sigma \to 0$ on all connected graphs, not only asymptotically.
\item[(L2)] \emph{Sufficient-condition theorem, rigorous but with limited coverage.} Theorem~\ref{thm:entropy_decrease} gives conditions (i)--(iii) guaranteeing $dH/d\sigma < 0$ at a parameter, proved for $\sigma < 1/\lambda_1$. We state plainly that these conditions cover less than $10\%$ of the sampled parameter range of the four real networks of Section~\ref{sec:sigma}; the theorem is a guarantee on an explicitly delimited region, not a description of typical behaviour.
\item[(L3)] \emph{Conjecture.} That $H$ decreases with $\sigma$ without extra conditions is Conjecture~\ref{conj:entropy}, supported by exhaustive enumeration on all connected graphs with $N \le 7$ and by numerics on real networks, and refuted in its unconditional form by an assortative counterexample.
\end{itemize}
Readers who only want the monotone behaviour may take the empirical evidence in Section~\ref{sec:sigma} as evidence for Conjecture~\ref{conj:entropy}, and Theorem~\ref{thm:entropy_decrease} as one sufficient route to it; the general case remains open.

\begin{theorem}\label{thm:entropy_max}
For a connected graph $G$ with $N$ vertices, at any admissible parameter $\sigma$ at which the DomiRank vector is entrywise positive, the DomiRank entropy attains its maximum $\log_2 N$ if and only if $G$ is regular.
\end{theorem}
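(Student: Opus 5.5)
The plan is to combine the standard extremal property of Shannon entropy with Lemma~\ref{lemma:regular_iff}. Fix an admissible $\sigma$ at which $\boldsymbol{\Gamma}$ is entrywise positive, and let $p_i = \Gamma_i / \sum_k \Gamma_k$ denote the normalized values. These form a probability vector on $N$ points.

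First, recall that for any probability vector $p$ on $N$ points, $H(p) \le \log_2 N$, with equality if and only if $p_i = 1/N$ for all $i$. I would cite this rather than reprove it. If a short justification is wanted, apply Jensen's inequality to the strictly concave function $\log_2$:
\[
H(p) = \sum_i p_i \log_2 (1/p_i) \le \log_2 \sum_i p_i \cdot (1/p_i) = \log_2 N,
\]
and strict concavity forces all $1/p_i$ to be equal in the equality case. Positivity of the entries is used exactly here, since it guarantees that every term is defined and that $p$ is a genuine distribution.

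Next comes the "if" direction. If $G$ is $r$-regular, Lemma~\ref{lemma:regular} gives $p_i = 1/N$ for every $i$, so $H = \log_2 N$. The positivity hypothesis is automatic in this case, because the common value $r\sigma/(\sigma r+1)$ is positive for $\sigma > 0$.

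For the "only if" direction, suppose $H = \log_2 N$. By the equality case above, $p_i = 1/N$ for all $i$, which means $\boldsymbol{\Gamma} = \lambda \vec{j}$ with $\lambda = \sum_k \Gamma_k / N > 0$. Lemma~\ref{lemma:regular_iff} applies directly (connectedness is assumed, and $\theta = 1$) and yields that $G$ is regular.

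The only point needing care is the step inside Lemma~\ref{lemma:regular_iff} where one solves $\lambda = \sigma(1-\lambda)k_i$ for $k_i$. This requires $\lambda \neq 1$. That condition holds automatically: if $\lambda = 1$, the identity would give $1 = 0$, which is impossible. Also $\lambda \neq 0$ by positivity, so each $k_i = \lambda/(\sigma(1-\lambda))$ is well defined and the same for every vertex. I expect no real obstacle beyond stating this check explicitly. The main content is simply the equality case of the entropy bound, combined with the equal-values characterization already proved.
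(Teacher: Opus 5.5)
Your proposal is correct and follows essentially the same route as the paper: sufficiency via Lemma~\ref{lemma:regular}, and necessity via the equality case of the bound $H \le \log_2 N$ followed by Lemma~\ref{lemma:regular_iff}; the only difference is that the paper gets the equality case from a Lagrangian and strict concavity, where you use Jensen's inequality. Your explicit check that the common value satisfies $\lambda \neq 1$, so that $k_i = \lambda/(\sigma(1-\lambda))$ is well defined, fills a small step that the paper's proof of Lemma~\ref{lemma:regular_iff} leaves implicit.
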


\begin{proof}
\noindent \textbf{Sufficiency:}
If $G$ is regular, Lemma~\ref{lemma:regular} gives equal DomiRank values for all vertices, and after normalization $\Gamma_1 = \Gamma_2 = \dots = \Gamma_N = \dfrac{1}{N}$. Hence
\begin{align}\label{eq:max_entropy}
H(\boldsymbol{\Gamma}) = -\sum_{i=1}^N \frac{1}{N} \log_2 \frac{1}{N} = \log_2 N.
\end{align}

\noindent \textbf{Necessity:}
The entropy $H(\boldsymbol{\Gamma}) = -\sum_{i=1}^N \Gamma_i \log_2 \Gamma_i$ is strictly concave under the constraint $\sum \Gamma_i = 1$, so its maximum is attained at the unique critical point. Construct the Lagrangian
\[
L = -\sum_{i=1}^N \Gamma_i \log_2 \Gamma_i + \lambda \left( 1 - \sum_{i=1}^N \Gamma_i \right).
\]

Differentiating with respect to $\Gamma_j$ and setting to zero:
\[
\frac{\partial L}{\partial \Gamma_j} = -\left( \log_2 \Gamma_j + \frac{1}{\ln 2} \right) - \lambda = 0.
\]

All $\Gamma_j$ satisfy the same equation; by strict concavity the maximizer satisfies
\[
\Gamma_1 = \Gamma_2 = \dots = \Gamma_N.
\]

Together with the constraint $\sum \Gamma_i = 1$, we get $\Gamma_1 = \Gamma_2 = \dots = \Gamma_N = \dfrac{1}{N}$. By Lemma~\ref{lemma:regular_iff}, the connected graph $G$ is regular.
\end{proof}

\subsection{Influence of the parameter $\sigma$ on the entropy}

\begin{lemma}\label{lemma:neumann}
Let $A$ be a square matrix with $\rho(A) < 1$ ($\rho$ the spectral radius). Then $(I+A)^{-1}$ exists and is given by the convergent Neumann series
\begin{align}\label{eq:neumann}
(I+A)^{-1} = \sum_{m=0}^{\infty} (-1)^m A^m
= I - A + A^2 - A^3 + \cdots
\end{align}
\end{lemma}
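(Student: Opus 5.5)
The plan is the standard Neumann-series argument. It has two parts: first show that the partial sums converge, then show that the limit is an inverse of $I+A$ (existence comes for free once we exhibit the inverse).

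\textbf{Step 1: invertibility.} Since $\rho(A)<1$, every eigenvalue $\mu$ of $A$ satisfies $|\mu|<1$. Hence $1+\mu\neq 0$, so $-1$ is not an eigenvalue of $A$ and $I+A$ is invertible. This alone gives existence of $(I+A)^{-1}$.

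\textbf{Step 2: a norm with geometric decay.} I would invoke Gelfand's formula, $\rho(A)=\lim_{m\to\infty}\|A^m\|^{1/m}$ for any submultiplicative matrix norm. Fix $q$ with $\rho(A)<q<1$. Then there is $m_0$ such that $\|A^m\|\le q^m$ for all $m\ge m_0$.

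Alternatively, one can use the classical fact that for every $\varepsilon>0$ there is an induced norm with $\|A\|\le\rho(A)+\varepsilon$; this can be built from the Jordan form by rescaling the basis. In the case actually used in the paper, $A$ is replaced by the symmetric matrix $\sigma A$, and it is simplest to note that the spectral norm of a symmetric matrix equals its spectral radius, so $\|A^m\|_2=\rho(A)^m$ directly.

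\textbf{Step 3: convergence of the series.} Either way, $\sum_m \|(-1)^m A^m\|$ is dominated by a convergent geometric series. The partial sums $S_n=\sum_{m=0}^n(-A)^m$ are therefore Cauchy in the finite-dimensional (hence complete) space of matrices, and so converge to some matrix $S$.

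\textbf{Step 4: identifying the limit.} The telescoping identity
\[
(I+A)S_n \;=\; S_n(I+A) \;=\; I-(-A)^{n+1}
\]
holds because all the terms are powers of $A$. Since $\|(-A)^{n+1}\|\to 0$, letting $n\to\infty$ gives $(I+A)S=S(I+A)=I$. Thus $S=(I+A)^{-1}$, which is the claimed expansion \eqref{eq:neumann}. Note that this step also reproves existence independently of Step 1.

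\textbf{Main obstacle.} The only nontrivial point is the passage from $\rho(A)<1$ to $\|A^m\|\to 0$ geometrically. For a non-normal $A$, $\|A\|$ itself may exceed $1$, so a naive norm bound fails. This is why I rely on Gelfand's formula, or the Jordan-form construction of an adapted norm. For the symmetric matrices relevant to DomiRank (namely $\sigma A$ with $\sigma<1/\rho(A)$), the difficulty disappears via the spectral decomposition $A=\sum_k\mu_k u_ku_k^T$. In that case the series can even be summed eigenvalue by eigenvalue, $\sum_m(-\mu_k)^m=1/(1+\mu_k)$, which gives a fully explicit proof that I would mention as a remark.
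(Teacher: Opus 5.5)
Your proof is correct. It is the standard Neumann-series argument: the spectral condition gives invertibility, Gelfand's formula (or an adapted norm) gives geometric decay of $\|A^m\|$, and the telescoping identity $(I+A)S_n=S_n(I+A)=I-(-A)^{n+1}$ identifies the limit as a two-sided inverse.

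The paper states this lemma without proof, as a classical fact. It only applies it to the symmetric matrix $\sigma A$ with $\rho(\sigma A)=\sigma\lambda_1<1$, so your eigenvalue-by-eigenvalue remark would already suffice for every use made of it there.
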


\begin{theorem}\label{thm:sigma_zero}
As $\sigma \to 0$, DomiRank converges to degree centrality; hence $H(\boldsymbol{\Gamma}) \to H(\text{degree})$.
\end{theorem}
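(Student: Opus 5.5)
The plan is to expand the resolvent with Lemma~\ref{lemma:neumann}, applied to the matrix $\sigma A$. For $0<\sigma<1/\rho(A)=1/\lambda_1$ (recall $|\lambda_N|\le\lambda_1$ by Perron--Frobenius, so this is an admissible initial interval), we have $\rho(\sigma A)<1$. Hence
\[
\boldsymbol{\Gamma}=\theta\sigma(I+\sigma A)^{-1}A\mathbf{1}=\theta\sigma\sum_{m\ge0}(-\sigma)^m A^{m+1}\mathbf{1}=\theta\sigma\bigl(\boldsymbol{k}-\sigma A\boldsymbol{k}+O(\sigma^2)\bigr),
\]
where $\boldsymbol{k}=A\mathbf{1}$ is the degree vector. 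The $O(\sigma^2)$ term can be bounded uniformly in norm by $\sigma^2\|A\|^{2}\|\boldsymbol{k}\|/(1-\sigma\lambda_1)$.

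The unnormalized vector itself tends to $\mathbf{0}$, so the statement must be read for the normalized vector. Dividing by $\mathbf{1}^T\boldsymbol{\Gamma}=\theta\sigma(2|E|-\sigma\,\boldsymbol{k}^T\boldsymbol{k}+O(\sigma^2))$, the common factor $\theta\sigma$ cancels. We obtain
\[
\frac{\boldsymbol{\Gamma}}{\mathbf{1}^T\boldsymbol{\Gamma}}=\frac{\boldsymbol{k}+O(\sigma)}{2|E|+O(\sigma)}\longrightarrow\frac{\boldsymbol{k}}{2|E|}\quad(\sigma\to0^+),
\]
which is the normalized degree centrality. Connectedness with $N\ge2$ ensures $2|E|>0$, so the denominator is bounded away from zero for small $\sigma$. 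The same expansion shows that every entry is positive for small $\sigma$, because each $k_i\ge1$. The entropy of Section~4 is therefore well defined on an initial interval $(0,\sigma_0)$.

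For the entropy claim I would use continuity. The map $p\mapsto-\sum p_i\log_2p_i$ is continuous on the closed probability simplex, with the convention $0\log0=0$. Composing it with the continuous limit above gives $H(\boldsymbol{\Gamma})\to-\sum_i\frac{k_i}{2|E|}\log_2\frac{k_i}{2|E|}=H(\text{degree})$. Continuity of the entries in $\sigma$ also follows directly from the rational dependence of $(\sigma A+I)^{-1}$ on $\sigma$, so the Neumann expansion is needed only for the explicit identification of the limit.

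The only delicate point is the interpretive one. The limit must be taken for the normalized vector, and one must check that the normalizing sum does not vanish, which requires $\boldsymbol{k}\ne\mathbf{0}$, i.e. a graph with at least one edge. The rest is routine. I would also record the first-order term $-\sigma A\boldsymbol{k}$, since it feeds the initial entropy derivative of Proposition~\ref{prop:initial_entropy}.
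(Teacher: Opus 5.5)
Your proof is correct and follows the paper's route: expand $(\sigma A+I)^{-1}$ by the Neumann series of Lemma~\ref{lemma:neumann}, isolate the leading term $\theta\sigma\boldsymbol{k}$, and normalize. You also make rigorous the steps the paper leaves as ``$\approx$'', namely the explicit range $0<\sigma<1/\lambda_1$, the nonvanishing normalizing sum $2|E|$, and the continuity of the entropy on the simplex.
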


\begin{proof}
By Lemma~\ref{lemma:neumann}, for sufficiently small $\sigma$,
\begin{align*}
(\sigma A + I)^{-1}
= I - \sigma A + \sigma^2 A^2 - \sigma^3 A^3 + \cdots
\end{align*}

Taking $\theta = 1$,
\[
\boldsymbol{\Gamma} = \sigma \left(I - \sigma A + \sigma^2 A^2 - \cdots \right) A \mathbf{1}_N = \sigma A \mathbf{1}_N - \sigma^2 A^2 \mathbf{1}_N + O(\sigma^3).
\]

As $\sigma \to 0$, retaining the leading term and neglecting higher-order terms,
\[
\boldsymbol{\Gamma} \approx \sigma A \mathbf{1}_N = \sigma \boldsymbol{k},
\]
where $\boldsymbol{k} = A \mathbf{1}_N$ is the degree vector. After normalization, the DomiRank distribution approaches the degree distribution, so $H(\boldsymbol{\Gamma}) \approx H(\boldsymbol{k})$.
\end{proof}

\begin{proposition}[Initial entropy derivative]\label{prop:initial_entropy}
Let $G$ be a connected graph with degrees $k_1, \dots, k_N$, $2m = \sum_i k_i$, and normalized DomiRank entropy $H(\sigma)$. Then
\begin{align}\label{eq:initial_derivative}
\lim_{\sigma \to 0^+} \frac{dH}{d\sigma}
&= \frac{1}{2m} \sum_{\{i,j\} \in E} \left( k_i \log_2 \frac{k_j}{2m} + k_j \log_2 \frac{k_i}{2m} \right) \notag\\
&\quad - \frac{\sum_i k_i^2}{2m \sum_i k_i} \sum_i k_i \log_2 \frac{k_i}{2m},
\end{align}
which equals the edge covariance $\mathrm{Cov}_{\{i,j\} \in E}\!\big( k_i, \log_2 k_j \big)$ over a uniformly random (oriented) edge. In particular:
\begin{enumerate}
\item[(i)] the limit is $0$ for regular graphs --- more generally for harmonic graphs, where $(A\boldsymbol{k})_i = \rho\, k_i$ with $\rho = \sum_i k_i^2 / \sum_i k_i$, consistent with the $\sigma$-invariance of Proposition~\ref{prop:invariance};
\item[(ii)] the entropy initially decreases if and only if this edge covariance is negative, i.e., if the graph is disassortative in the degree--log-degree sense.
\end{enumerate}
\end{proposition}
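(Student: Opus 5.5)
The plan is a first-order perturbation of the normalized DomiRank vector around $\sigma=0$, then the chain rule for $H$, then a rewriting of the result as an edge sum. By Theorem~\ref{thm:analytical} and Lemma~\ref{lemma:neumann}, for $0<\sigma<1/\rho(A)$ we have $\boldsymbol{\Gamma}(\sigma)=\sigma\boldsymbol{k}-\sigma^2A\boldsymbol{k}+O(\sigma^3)$, exactly as in the proof of Theorem~\ref{thm:sigma_zero}.

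\textbf{Step 1 (normalized vector and regularity).} Since $\mathbf{1}^TA\boldsymbol{k}=\sum_i k_i^2$, the normalized vector is
\[
p_i(\sigma)=\frac{k_i-\sigma(A\boldsymbol{k})_i+O(\sigma^2)}{2m-\sigma\sum_j k_j^2+O(\sigma^2)}.
\]
Write $q_i=k_i/2m$ and $\rho=\sum_j k_j^2/2m$. Expanding to first order gives
\[
p_i(\sigma)=q_i+\sigma\Big(\rho\,q_i-\frac{(A\boldsymbol{k})_i}{2m}\Big)+O(\sigma^2).
\]
The map $\sigma\mapsto p(\sigma)$ is real-analytic on a neighbourhood of $0$, because the factor $\sigma$ cancels between numerator and denominator. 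Connectedness gives $k_i\ge1$, so every $p_i(0)=q_i>0$, and $H$ is therefore smooth near $0$. Hence $\lim_{\sigma\to0^+}dH/d\sigma$ equals the derivative at $0$ of this analytic extension. This regularity step is the only point needing care; the rest is bookkeeping.

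\textbf{Step 2 (chain rule).} We have $dH/d\sigma=-\sum_i p_i'(\log_2 p_i+1/\ln2)$. Since $\sum_i p_i\equiv1$, we get $\sum_i p_i'=0$, and the constant term drops out. At $\sigma=0$ this leaves
\[
\frac{dH}{d\sigma}\Big|_0=\sum_i\Big(\frac{(A\boldsymbol{k})_i}{2m}-\rho\,q_i\Big)\log_2 q_i .
\]
Now expand $\sum_i (A\boldsymbol{k})_i\log_2 q_i=\sum_i\sum_{j\in N(i)}k_j\log_2 q_i$. Grouping by edges, this is $\sum_{\{i,j\}\in E}(k_j\log_2 q_i+k_i\log_2 q_j)$. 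The second term is $-\frac{\sum_i k_i^2}{2m\sum_i k_i}\sum_i k_i\log_2 q_i$. Together these give \eqref{eq:initial_derivative}.

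\textbf{Step 3 (covariance form).} The coefficients $\frac{(A\boldsymbol{k})_i}{2m}-\rho q_i$ sum to $\rho-\rho=0$. So $\log_2 q_i$ may be replaced by $\log_2 k_i$, since the $\log_2(2m)$ terms cancel. Now let $(i,j)$ be a uniformly random oriented edge, so that $\Pr(i)=k_i/2m$. Then:
\begin{itemize}
\item $\mathbb{E}[k_j\log_2 k_i]=\frac1{2m}\sum_i(A\boldsymbol{k})_i\log_2k_i$;
\item $\mathbb{E}[k_j]=\rho$;
\item $\mathbb{E}[\log_2 k_i]=\sum_i q_i\log_2k_i$.
\end{itemize}
Hence the limit equals $\mathrm{Cov}(k_j,\log_2k_i)$. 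Because $(i,j)$ and $(j,i)$ have the same distribution, this equals $\mathrm{Cov}(k_i,\log_2k_j)$.

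\textbf{Step 4 (consequences).} For (i), if $A\boldsymbol{k}=\rho\boldsymbol{k}$ then every coefficient in Step 2 vanishes identically, so the limit is $0$; regular graphs are the special case $\boldsymbol{k}=r\mathbf{1}$. For (ii), the derivative at $0$ is continuous, so its sign is the sign of the covariance. If the covariance is negative, $dH/d\sigma<0$ on some interval $(0,\varepsilon)$, and the entropy initially decreases. Conversely, if the covariance is positive, the entropy initially increases. If it is zero, there is no first-order decrease.
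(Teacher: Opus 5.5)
Your proposal is correct and follows the paper's proof essentially step for step: the first-order expansion of the normalized vector, the chain rule with $\sum_i p_i'=0$, regrouping $\sum_i (A\boldsymbol{k})_i\log_2 q_i$ over edges, and reading off the covariance under the oriented-edge law $\Pr(i)=k_i/2m$. Your observation that $\sigma\mapsto p(\sigma)$ is analytic at $0$ once the factor $\sigma$ cancels is a small but welcome addition, since it justifies identifying $\lim_{\sigma\to0^+}dH/d\sigma$ with $H'(0)$, which the paper leaves implicit; as in the paper, (ii) is settled only at first order, so when the covariance is zero you show only that the initial derivative vanishes, not that the entropy fails to decrease.
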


\begin{proof}
Write $a_i = (A\boldsymbol{k})_i = \sum_{j \in N(i)} k_j$ and $S(\sigma) = \sum_i \Gamma_i(\sigma)$. By the expansion of Theorem~\ref{thm:sigma_zero}, $\Gamma_i(\sigma) = \sigma k_i - \sigma^2 a_i + O(\sigma^3)$ and $S(\sigma) = \sigma (2m) - \sigma^2 \sum_i k_i^2 + O(\sigma^3)$ (using $\sum_i a_i = \sum_i k_i^2$). Hence, with $P_i(\sigma) = \Gamma_i(\sigma)/S(\sigma)$,
\[
P_i(\sigma) = \frac{k_i}{2m} + \sigma \left( \frac{k_i \sum_j k_j^2}{(2m)^2} - \frac{a_i}{2m} \right) + O(\sigma^2).
\]
Differentiating $H(\sigma) = -\sum_i P_i(\sigma) \log_2 P_i(\sigma)$ and using $\sum_i P_i'(0) = 0$ gives
\begin{gather*}
\lim_{\sigma \to 0^+} \frac{dH}{d\sigma} \\
= -\sum_i P_i'(0) \log_2 \frac{k_i}{2m} \\
= \frac{1}{2m} \left( \sum_i a_i \log_2 \frac{k_i}{2m}
- \frac{\sum_j k_j^2}{2m} \sum_i k_i \log_2 \frac{k_i}{2m} \right),
\end{gather*}
and, since
\begin{align*}
\sum_i a_i \log_2 \frac{k_i}{2m}
&= \sum_i \sum_{j \in N(i)} k_j \log_2 \frac{k_i}{2m} \\
&= \sum_{\{i,j\} \in E} \left( k_j \log_2 \frac{k_i}{2m}
   + k_i \log_2 \frac{k_j}{2m} \right),
\end{align*}
this proves \eqref{eq:initial_derivative}. For the covariance identity, a uniformly random oriented edge has endpoint law $\Pr(i) = k_i/(2m)$, under which
\[
 \begin{aligned}
 \mathbb{E}[k_i] &= \sum_i k_i^2/(2m), \\
 \mathbb{E}[\log_2 k_j] &= \sum_j k_j \log_2 k_j/(2m), \\
 \mathbb{E}[k_i \log_2 k_j] &= \sum_{\{i,j\} \in E}
   \frac{k_i \log_2 k_j + k_j \log_2 k_i}{2m};
 \end{aligned}
\]
substituting shows that \eqref{eq:initial_derivative} equals $\mathrm{Cov}(k_i, \log_2 k_j)$ (the additive constant $-\log_2(2m)$ inside the logarithm does not affect the covariance). (i) On a harmonic graph $a_i = \rho k_i$ with
\[
 \rho = \sum_j k_j^2/(2m) \cdot (2m) \Big/ \sum_j k_j = \sum_j k_j^2 \Big/ \sum_j k_j,
\]
and substitution into \eqref{eq:initial_derivative} gives $0$; the regular case is the special case $\rho = k$. (ii) is immediate from the covariance identity.
\end{proof}

\begin{remark}[An assortative counterexample to unconditional monotonicity]\label{rem:assortative_counter}
The sign in Proposition~\ref{prop:initial_entropy} is genuinely graph-dependent. Let $G$ be the 7-vertex graph obtained from a $K_5$ and a $K_2$ joined by a single bridge (degree sequence $5, 4, 4, 4, 4, 2, 1$). Its edge covariance equals $+0.2149 > 0$, and indeed the entropy \emph{increases} on the initial interval: $H(0.001) = 2.6848 < H(0.01) = 2.6867$, rising to $2.7073$ at $\sigma = 0.3$ (all DomiRank entries remain positive at least up to $\sigma = 0.35$, and $\sigma^* \approx 0.5805$, so the theoretical entropy is well defined throughout). Hence the unconditional monotonicity statement is false, and Conjecture~\ref{conj:entropy} below is formulated for structurally disassortative graphs. All four real-world networks of Section~6 have negative edge covariance (Karate $-2.8403$, Dolphins $-0.0961$, LesMis $-1.5553$, Polbooks $-0.6807$); on these networks the initial decrease observed in Figure~\ref{figure12} is therefore rigorous, not only numerical.
\end{remark}

\begin{lemma}\label{lemma:polarization}
Let $G$ be a connected graph with normalized DomiRank distribution $P(\sigma)$.
\begin{enumerate}
\item[(i)] If $G$ is harmonic, then $P(\sigma)$ is independent of $\sigma$ on $\left(0, -1/\lambda_N\right)$ and the DomiRank entropy $H$ is constant on this interval (Proposition~\ref{prop:invariance}).
\item[(ii)] If $G$ is non-harmonic, there is a finite set $Z \subset \left(0, -1/\lambda_N\right)$ such that for every $\sigma \in \left(0, -1/\lambda_N\right) \setminus Z$ there exist vertices $k$ and $l$ with $\dfrac{dP_k}{d\sigma} > 0$ and $\dfrac{dP_l}{d\sigma} < 0$; and on each subinterval of $\left(0, -1/\lambda_N\right) \setminus Z$ the function $P$ is strictly monotone in at least one coordinate in each direction --- that is, at least one coordinate is strictly increasing and at least one is strictly decreasing --- so that for any $\sigma_1 < \sigma_2$ lying in the same such subinterval there is a vertex $k$ with $P_k(\sigma_2) > P_k(\sigma_1)$ and a vertex $l$ with $P_l(\sigma_2) < P_l(\sigma_1)$.
\end{enumerate}
\end{lemma}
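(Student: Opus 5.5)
For part (i) I will rely on Proposition~\ref{prop:invariance}. Spelled out: if $G$ is harmonic, then $A\boldsymbol{k} = \rho\boldsymbol{k}$, so $\boldsymbol{k}$ is an eigenvector of $\sigma A + I$ with eigenvalue $1+\sigma\rho$. On the admissible interval this eigenvalue is positive, because $\rho \ge \lambda_N$ and $1+\sigma\lambda_N > 0$. Hence $\boldsymbol{\Gamma} = \frac{\sigma}{1+\sigma\rho}\boldsymbol{k}$. Normalizing gives $P = \boldsymbol{k}/(2m)$, which does not depend on $\sigma$, and therefore $H$ is constant.

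For part (ii) I will first establish analyticity. On $(0,-1/\lambda_N)$ each $\Gamma_i(\sigma)$ is a rational function of $\sigma$, since $\boldsymbol{\Gamma} = \sigma(\sigma A+I)^{-1}\boldsymbol{k}$. The same holds for $S(\sigma) = \sum_i\Gamma_i$, and therefore for each $P_i = \Gamma_i/S$ wherever $S \neq 0$. I will diagonalize $A = \sum_r \lambda_r u_r u_r^T$ and write $c_r = u_r^T\boldsymbol{k}$. Then $\Gamma_i = \sum_r \frac{\sigma c_r (u_r)_i}{1+\sigma\lambda_r}$, and every denominator is positive on the interval. I also need $S > 0$ away from finitely many points. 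Here $S(\sigma) = \sum_r \frac{\sigma c_r (\mathbf 1^T u_r)}{1+\sigma\lambda_r}$ is a nonzero rational function, because $S \sim 2m\sigma$ near $0$, so it has finitely many zeros. I will place these zeros in $Z$ and treat $P$ on each complementary subinterval. If the paper's convention instead restricts to the range where $\boldsymbol{\Gamma} > 0$, the same argument applies there.

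The key fact comes from $\sum_i P_i \equiv 1$, which gives $\sum_i P_i'(\sigma) = 0$. Consequently, at any $\sigma$ where $P'(\sigma) \neq 0$, some coordinate has positive derivative and some has negative derivative. It remains to show that $P'(\sigma) = 0$ only on a finite set. Each $P_i'$ is rational in $\sigma$, so each is either identically zero or has finitely many zeros. If every $P_i'$ vanished identically on some subinterval, then $P$ would be locally constant there. By the identity theorem $P$ would then be constant on the whole connected piece, and by continuity up to $0$ it would equal its limit $\boldsymbol{k}/(2m)$ from Theorem~\ref{thm:sigma_zero}. That gives $\boldsymbol{\Gamma} = (S/2m)\boldsymbol{k}$, so $\boldsymbol{k}$ is an eigenvector of $(\sigma A+I)^{-1}$, hence of $A$, and $G$ is harmonic, which is a contradiction. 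The cleanest way to write this is to note that the first-order coefficient $P'(0) = \frac{k\sum k_j^2}{(2m)^2} - \frac{A\boldsymbol{k}}{2m}$, computed in the proof of Proposition~\ref{prop:initial_entropy}, is nonzero exactly when $G$ is non-harmonic. Since the rational functions extend analytically across the points of $Z$ where $S \ne 0$, and also to $\sigma=0$, some $P_i'$ is not identically zero. I then let $Z$ be the union of the zeros of $S$ and the common zeros of the nonzero $P_i'$; both sets are finite. For $\sigma \notin Z$ this already yields coordinates $k, l$ with $P_k' > 0$ and $P_l' < 0$.

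The monotonicity clause is the step I expect to be the main obstacle. A coordinate that is increasing at one point need not remain increasing across the whole subinterval, because the sign pattern of $P'$ can change where an individual $P_i'$ vanishes while others do not. To handle this I will enlarge $Z$ to contain all zeros of every $P_i'$ that is not identically zero, which is still a finite set. On each resulting subinterval every coordinate then has constant sign of derivative: zero throughout, positive throughout, or negative throughout. Because $\sum_i P_i' = 0$ and $P' \not\equiv 0$, at least one coordinate is strictly increasing and at least one is strictly decreasing on that whole subinterval. Integrating over $[\sigma_1,\sigma_2]$ then gives the stated inequalities $P_k(\sigma_2) > P_k(\sigma_1)$ and $P_l(\sigma_2) < P_l(\sigma_1)$.
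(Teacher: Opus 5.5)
Your proposal is correct and follows the paper's route: $P(\sigma)$ is rational, $P'$ is not identically zero by non-harmonicity and so has finitely many zeros, $\sum_i P_i'=0$ then forces coordinates of both signs, and integration on the complementary subintervals gives the monotonicity clause. Your two refinements are worth keeping because the paper's proof leaves them implicit: first, putting the zeros of every non-identically-zero coordinate $P_i'$ into $Z$, not just the common zeros of $P'$, is exactly what makes ``each coordinate has constant sign on a subinterval'' true; second, removing the zeros of the normalizer $S$ is genuinely needed, since $S(\sigma)=\sigma\sum_r \lambda_r(\mathbf{1}^T u_r)^2/(1+\sigma\lambda_r)\to-\infty$ as $\sigma\to-1/\lambda_N$ whenever $\mathbf{1}$ has a nonzero component in the $\lambda_N$-eigenspace (e.g.\ the star), so $P$ can have a pole inside the admissible interval.
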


\begin{proof}
(i) follows directly from Proposition~\ref{prop:invariance}: on harmonic graphs the normalized DomiRank distribution is $\sigma$-invariant, hence the entropy is constant.

(ii) Since $G$ is non-harmonic, Proposition~\ref{prop:invariance} shows that $P(\sigma)$ is not constant in $\sigma$. Each entry of $\boldsymbol{\Gamma}(\sigma) = \theta\sigma(\sigma A + I)^{-1}A\mathbf{1}$ is a rational function of $\sigma$ (the resolvent $(\sigma A + I)^{-1}$ has rational entries), so $P(\sigma)$ is a rational vector function; hence $P'(\sigma)$ is a rational vector function that is not identically zero, and it has finitely many zeros in the admissible interval. At any $\sigma$ which is not such a zero, $P'(\sigma) \neq 0$; since $\sum_{i=1}^N P_i'(\sigma) = 0$, the entries of $P'(\sigma)$ cannot all have the same sign, so there exist $k, l$ with $P_k'(\sigma) > 0$ and $P_l'(\sigma) < 0$. Between two consecutive zeros the sign of each entry of $P'(\sigma)$ is constant; since $P' \neq 0$ there, integrating over such a subinterval gives $P_k(\sigma_2) > P_k(\sigma_1)$ and $P_l(\sigma_2) < P_l(\sigma_1)$ for any $\sigma_1 < \sigma_2$ inside it, which is exactly the assertion of (ii). We note that the monotonicity established here is \emph{piecewise} in $\sigma$; whether $P$ is injective on the whole admissible interval is not claimed, since a piecewise monotone function may in principle take the same value in two different subintervals.
\end{proof}

\begin{theorem}\label{thm:entropy_decrease}
Let $G$ be a connected non-harmonic graph (the harmonic case is trivial: by Lemma~\ref{lemma:polarization}(i) the entropy is constant). At a fixed admissible parameter $\sigma$, consider the vertex sets $S^+$ and $S^-$ of vertices whose DomiRank values increase and decrease, respectively, with $\sigma$, and let $t \in S^+$ satisfy $P_t = \min\{P_k : k \in S^+\}$. If for all $i \in S^-$ one of the following conditions holds, then $\dfrac{dH}{d\sigma} < 0$ at this parameter; consequently $H(\sigma_1) \ge H(\sigma_2)$ for parameters $\sigma_1 < \sigma_2$ within the range where the conditions apply (conditions (i)--(iii) are rigorous for $\sigma < 1/\lambda_1$, the range where the second-order expansion underlying them is valid, and are a sufficient-condition heuristic for larger $\sigma$; see the discussion following the proof).

	\begin{enumerate}
		\item[(i)] If $k_t - k_i < 0$ and $(A\boldsymbol{k})_t - (A\boldsymbol{k})_i < 0$, then for
		\[
		 \sigma \in \left( \frac{k_t - k_i}{(A\boldsymbol{k})_t - (A\boldsymbol{k})_i},
		 \frac{1}{-\lambda_N} \right), \quad \Gamma_t > \Gamma_i.
		\]

		\item[(ii)] If $k_t - k_i > 0$ and $(A\boldsymbol{k})_t - (A\boldsymbol{k})_i > 0$, then for
		\[
		 \sigma \in \left( 0, \frac{k_t - k_i}{(A\boldsymbol{k})_t - (A\boldsymbol{k})_i} \right),
		 \quad \Gamma_t > \Gamma_i.
		\]

		\item[(iii)] If $k_t - k_i > 0$ and $(A\boldsymbol{k})_t - (A\boldsymbol{k})_i < 0$, then
		$\Gamma_t > \Gamma_i$ holds for all $\sigma \in \left( 0, \frac{1}{-\lambda_N} \right)$.
	\end{enumerate}
\end{theorem}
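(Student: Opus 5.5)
The plan has two parts. First, a rearrangement argument reduces $dH/d\sigma<0$ to the ordering $P_t>P_i$ for every $i\in S^-$. Second, conditions (i)--(iii) are shown to deliver that ordering through the second-order expansion of Theorem~\ref{thm:sigma_zero}.

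\textbf{Step 1 (entropy derivative).} Differentiate $H=-\sum_i P_i\log_2 P_i$ and use $\sum_i P_i'=0$. This gives
\[
\frac{dH}{d\sigma}=-\sum_i P_i'\log_2 P_i=-\sum_{k\in S^+}P_k'\log_2 P_k-\sum_{i\in S^-}P_i'\log_2 P_i .
\]
Set $M=\sum_{k\in S^+}P_k'=-\sum_{i\in S^-}P_i'>0$; it is positive because $G$ is non-harmonic, and Lemma~\ref{lemma:polarization}(ii) makes both sets nonempty away from the finite set $Z$. Bound the first sum using $\log_2 P_k\ge \log_2 P_t$ for $k\in S^+$, by the choice of $t$. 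If moreover $P_i<P_t$ for every $i\in S^-$, then $\log_2 P_i<\log_2 P_t$, and
\[
\frac{dH}{d\sigma}\le -M\log_2 P_t+\sum_{i\in S^-}|P_i'|\log_2 P_i< -M\log_2P_t+M\log_2P_t=0 .
\]
So everything reduces to showing $\Gamma_t>\Gamma_i$, which is equivalent to $P_t>P_i$ since both share the normalizer $S(\sigma)>0$ on the positive range.

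\textbf{Step 2 (the pairwise comparison).} For $\sigma<1/\lambda_1$, the Neumann series of Lemma~\ref{lemma:neumann} applies, since $\rho(\sigma A)<1$. Truncating at second order as in Theorem~\ref{thm:sigma_zero} gives
\[
\Gamma_t-\Gamma_i\approx\sigma\big[(k_t-k_i)-\sigma\big((A\boldsymbol k)_t-(A\boldsymbol k)_i\big)\big].
\]
Write $\Delta k=k_t-k_i$ and $\Delta a=(A\boldsymbol k)_t-(A\boldsymbol k)_i$; the bracket is the linear function $\Delta k-\sigma\Delta a$. Each case is then a sign analysis of this linear function.
\begin{itemize}
\item[(i)] With $\Delta k<0$ and $\Delta a<0$, the bracket is positive exactly when $\sigma>\Delta k/\Delta a$.
\item[(ii)] With both quantities positive, it is positive exactly when $\sigma<\Delta k/\Delta a$.
\item[(iii)] With $\Delta k>0>\Delta a$, it is positive for all $\sigma>0$.
\end{itemize}
In each case Step 1 then yields $dH/d\sigma<0$. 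Integrating over an interval on which the conditions hold at every point gives the stated inequality $H(\sigma_1)\ge H(\sigma_2)$.

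\textbf{Main obstacle.} The delicate point is the truncation. The exact difference is $\sigma\sum_{n\ge0}(-\sigma)^n\big((A^{n+1}\boldsymbol 1)_t-(A^{n+1}\boldsymbol 1)_i\big)$, and its tail beyond second order must not overturn the sign of $\Delta k-\sigma\Delta a$. I would first try to bound the tail by $\sigma^3\|A\|^3\,\|\boldsymbol k\|_\infty/(1-\sigma\lambda_1)$, which is available for $\sigma<1/\lambda_1$. I would then either absorb it in the strict inequalities, shrinking the intervals slightly, or state the cases as holding to second order. The latter matches the theorem's own caveat that (i)--(iii) are rigorous only in the expansion regime and heuristic beyond it.

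A secondary point is the upper endpoint $-1/\lambda_N$ in (i) and (iii), which exceeds $1/\lambda_1$ only for non-bipartite graphs. For those, I would accept the statement as the paper's declared heuristic outside $\sigma<1/\lambda_1$.
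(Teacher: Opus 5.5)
Your proposal follows the paper's proof essentially step for step: the identity $dH/d\sigma=-\sum_i P_i'\log_2 P_i$, the bound through the minimal $P_t$ over $S^+$ that reduces everything to $P_t>P_i$ for $i\in S^-$, and the sign analysis of the second-order expansion $\sigma\bigl[(k_t-k_i)-\sigma\bigl((A\boldsymbol{k})_t-(A\boldsymbol{k})_i\bigr)\bigr]$ in cases (i)--(iii). The truncation issue you flag is genuine and the paper does not close it either: its proof also passes from the ``$\approx$'' directly to $\Gamma_t>\Gamma_i$, and convergence of the Neumann series for $\sigma<1/\lambda_1$ does not by itself fix the sign of the discarded tail, which is only bounded by a multiple of $\sigma^3\lambda_1^2\|\boldsymbol{k}\|_2/(1-\sigma\lambda_1)$. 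So in both versions the intervals in (i)--(iii) are second-order statements that a fully rigorous argument would have to shrink.
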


\begin{proof}
Differentiating $H = -\sum_{i=1}^N P_i \log_2 P_i$ with respect to $\sigma$:
\[
\frac{dH}{d\sigma} = -\sum_{i=1}^N \left( \frac{dP_i}{d\sigma} \log_2 P_i + \frac{dP_i}{d\sigma} \cdot \frac{1}{\ln 2} \right).
\]

Since $\sum_{i=1}^N P_i = 1$, we have $\sum_{i=1}^N \frac{dP_i}{d\sigma} = 0$, and substituting into the above gives
\begin{align}\label{eq:dHdsigma}
\frac{dH}{d\sigma} = -\sum_{i=1}^N \frac{dP_i}{d\sigma} \log_2 P_i.
\end{align}

By Lemma~\ref{lemma:polarization}(ii), at the parameter $\sigma$ under consideration (chosen away from the finitely many exceptional points) there exists a set $S^+ = \{k_1, \dots, k_m\}$ ($m \ge 1$) with $\dfrac{dP_{k_i}}{d\sigma} > 0$, and the remaining set $S^-$ satisfies $\dfrac{dP_i}{d\sigma} \le 0$ (vertices with zero derivative contribute nothing to \eqref{eq:dHdsigma}). Splitting \eqref{eq:dHdsigma},
\begin{align}\label{eq:split}
\frac{dH}{d\sigma} = -\sum_{l \in S^+} \frac{dP_l}{d\sigma} \log_2 P_l - \sum_{i \in S^-} \frac{dP_i}{d\sigma} \log_2 P_i.
\end{align}

Let $P_t = \min\{P_{k_1}, P_{k_2}, \dots, P_{k_m}\}$; then $\log_2 P_t \le \log_2 P_l$ ($l \in S^+$), and since $\dfrac{dP_l}{d\sigma} > 0$,
\begin{align}\label{eq:bound}
\frac{dH}{d\sigma} \le -\sum_{l \in S^+} \frac{dP_l}{d\sigma} \log_2 P_t - \sum_{i \in S^-} \frac{dP_i}{d\sigma} \log_2 P_i.
\end{align}

Since $\displaystyle\sum_{l \in S^+} \frac{dP_l}{d\sigma} = -\sum_{i \in S^-} \frac{dP_i}{d\sigma}$, formula \eqref{eq:bound} simplifies to
\[
\frac{dH}{d\sigma} \le \sum_{i \in S^-} \frac{dP_i}{d\sigma} \log_2 P_t - \sum_{i \in S^-} \frac{dP_i}{d\sigma} \log_2 P_i = \sum_{i \in S^-} \frac{dP_i}{d\sigma} \log_2 \frac{P_t}{P_i}.
\]

Comparing $P_t$ with $P_i$ is equivalent to comparing $\Gamma_t$ with $\Gamma_i$ (normalization preserves the ordering whenever the normalizing sum is positive). By the second-order expansion of Theorem~\ref{thm:sigma_zero}, which is rigorous for $\sigma$ below the Neumann-series convergence radius $1/\lambda_1$ of $(\sigma A + I)^{-1}$, $\boldsymbol{\Gamma} \approx \sigma A \mathbf{1}_N - \sigma^2 A^2 \mathbf{1}_N$, hence
\begin{align}\label{eq:gamma_diff}
\Gamma_t - \Gamma_i \approx \sigma (k_t - k_i) - \sigma^2 \left[ (A\boldsymbol{k})_t - (A\boldsymbol{k})_i \right].
\end{align}

Under conditions (i), (ii), (iii), $\Gamma_t > \Gamma_i$, i.e., $P_t > P_i$, so $\log_2 \dfrac{P_t}{P_i} > 0$. Moreover, for $i \in S^-$, $\dfrac{dP_i}{d\sigma} \le 0$, and Lemma~\ref{lemma:polarization}(ii) ensures that at least one of these derivatives is strictly negative; since $P_t > P_i$ for all $i \in S^-$, every term $\dfrac{dP_i}{d\sigma} \log_2 \dfrac{P_t}{P_i}$ is non-positive and at least one is strictly negative, and therefore
\[
\frac{dH}{d\sigma} < 0,
\]
i.e., the entropy decreases with the competition parameter on the range where the conditions apply.
\end{proof}

We stress that the intervals appearing in conditions (i)--(iii) are derived from the second-order expansion \eqref{eq:gamma_diff}; the expansion is rigorous only for $\sigma < 1/\lambda_1$ (the Neumann-series convergence radius), and for larger $\sigma$ the conditions should be read as a sufficient-condition heuristic. Moreover, even within the provable range the conditions are rarely satisfied: on the four real-world networks of Section~6 fewer than $10\%$ of the relevant vertex pairs meet them in the low-competition region (Remark~\ref{rem:entropy_condition}), so Theorem~\ref{thm:entropy_decrease} by itself covers only a fraction of the regime in which monotonicity is observed; the initial segment of that regime is settled rigorously and independently by Proposition~\ref{prop:initial_entropy}. The numerical evidence of Remark~\ref{rem:entropy_condition} and Section~6 indicates that the monotonicity holds well beyond the provable range; we pose the refined unconditional version as Conjecture~\ref{conj:entropy}.

\begin{remark}[Empirical examination of the conditions]\label{rem:entropy_condition}
Conditions (i)--(iii) of Theorem~\ref{thm:entropy_decrease} are casewise sufficient conditions. We examined them pointwise on the four real-world networks of Section~6 (for each network, 59 adjacent $\sigma$ sampling intervals; for each interval we determined $S^+$, $S^-$ and $t$, and checked all $(t, i)$ vertex pairs): the fraction of vertex pairs satisfying conditions (i)--(iii) is below $10\%$ in the low-competition region ($\sigma/\sigma^* < 0.2$) and ranges between $42\%$ and $100\%$ in the high-competition region ($\sigma/\sigma^* > 0.8$); that is, the sufficient conditions are not universally satisfied on real networks. Nevertheless, the normalized entropies of all four networks decrease strictly monotonically at every sampling point (Figure~\ref{figure12}). This contrast indicates that the conditions of Theorem~\ref{thm:entropy_decrease} are conservative; note, however, that the unconditional monotonicity statement is false in general (Remark~\ref{rem:assortative_counter}), so the conjecture must be restricted accordingly --- we formalize the refined version as Conjecture~\ref{conj:entropy}.
\end{remark}

\begin{conjecture}\label{conj:entropy}
Let $G$ be a connected graph whose degree--log-degree edge covariance \eqref{eq:initial_derivative} is non-positive (structurally disassortative graphs; the restriction is necessary by Remark~\ref{rem:assortative_counter}). For all $0 < \sigma_1 < \sigma_2 < -1/\lambda_N$ such that the theoretical entropy is defined throughout $[\sigma_1, \sigma_2]$ (i.e., all entries of $\Gamma(\sigma)$ are positive there), the normalized DomiRank entropy satisfies $H(\sigma_1) \ge H(\sigma_2)$; moreover, on the domain where $H$ is defined, $H$ is constant if and only if $G$ is harmonic. For parameters at which $\Gamma(\sigma)$ has non-positive entries the theoretical entropy is undefined; there the monotonicity statement is understood for the effective (truncated) distribution of Section~6, for which the numerical evidence is reported in Figure~\ref{figure12}.
\end{conjecture}

\begin{algorithm}[h]
	\caption{Core computation: DomiRank vector and entropy on a relative $\sigma$ grid}
	\label{alg:grid}
	\begin{algorithmic}[1]
		\REQUIRE Network edge lists Source, Target
		\ENSURE DomiRank vector $\boldsymbol{\Gamma}(\sigma)$ and normalized entropy $H(\sigma)$ at each grid point

		\STATE Construct graph $G$ with edges from Source to Target
		\STATE $A \leftarrow$ adjacency matrix of $G$
		\STATE $N \leftarrow$ dimension of $A$

		\STATE \textit{Compute the admissible range of $\sigma$:}
		\STATE $\text{lambda} \leftarrow$ eigenvalues of $A$ (real parts)
		\STATE $\text{valid\_lambda} \leftarrow$ negative eigenvalues in lambda
		\IF{$\text{valid\_lambda}$ is empty}
		\STATE $\text{sigma\_max} \leftarrow 1$
		\ELSE
		\STATE $\text{sigma\_max} \leftarrow 1 / |\min(\text{valid\_lambda})|$
		\ENDIF
		\STATE $\text{sigma\_values} \leftarrow \text{sigma\_max} \times \text{linspace}(0.01, 0.99, 60)$ \COMMENT{relative grid: $\sigma = x\,\sigma^*$ with $x$ equidistant in $[0.01, 0.99]$}

		\STATE \textit{For each $\sigma$, compute DomiRank and entropy:}
		\FOR{$i = 1$ \TO $\text{length}(\text{sigma\_values})$}
		\STATE $\sigma \leftarrow \text{sigma\_values}[i]$
		\STATE $\boldsymbol{\Gamma} \leftarrow (\sigma A + I)^{-1} (\sigma A \mathbf{1})$
		\STATE $\boldsymbol{\Gamma} \leftarrow \mathrm{Re}(\boldsymbol{\Gamma})$
		\STATE $\boldsymbol{\Gamma}_+ \leftarrow \max(\boldsymbol{\Gamma}, 0)$ \COMMENT{effective distribution: negative values are truncated, then renormalized (see Section 6.3)}
		\STATE $\boldsymbol{P} \leftarrow \boldsymbol{\Gamma}_+ / \sum(\boldsymbol{\Gamma}_+)$
		\STATE $\text{entropy}[i] \leftarrow -\sum(\boldsymbol{P} \cdot \log_2 \boldsymbol{P})$ \COMMENT{entropy of the effective distribution}
		\ENDFOR

		\STATE\textit{Post-processing (not part of the core computation):} plot entropy and DomiRank distributions against $\sigma$; compare with the principal eigenvector; compute the rank correlations and positive fractions of Figures~\ref{figure13} and~\ref{figure14} from the vectors already obtained above
	\end{algorithmic}
\end{algorithm}

Figure~\ref{figure4} shows the DomiRank distributions of the 11-vertex star graph, constructed by the above algorithm, under different competition parameters $\sigma$; the three panels represent low, medium, and high competition intensities. As $\sigma$ increases, DomiRank shifts from identifying ``important nodes'' to identifying ``dominant key nodes.''

\begin{figure}[!h]
	\centering
	\includegraphics[width=1\textwidth]{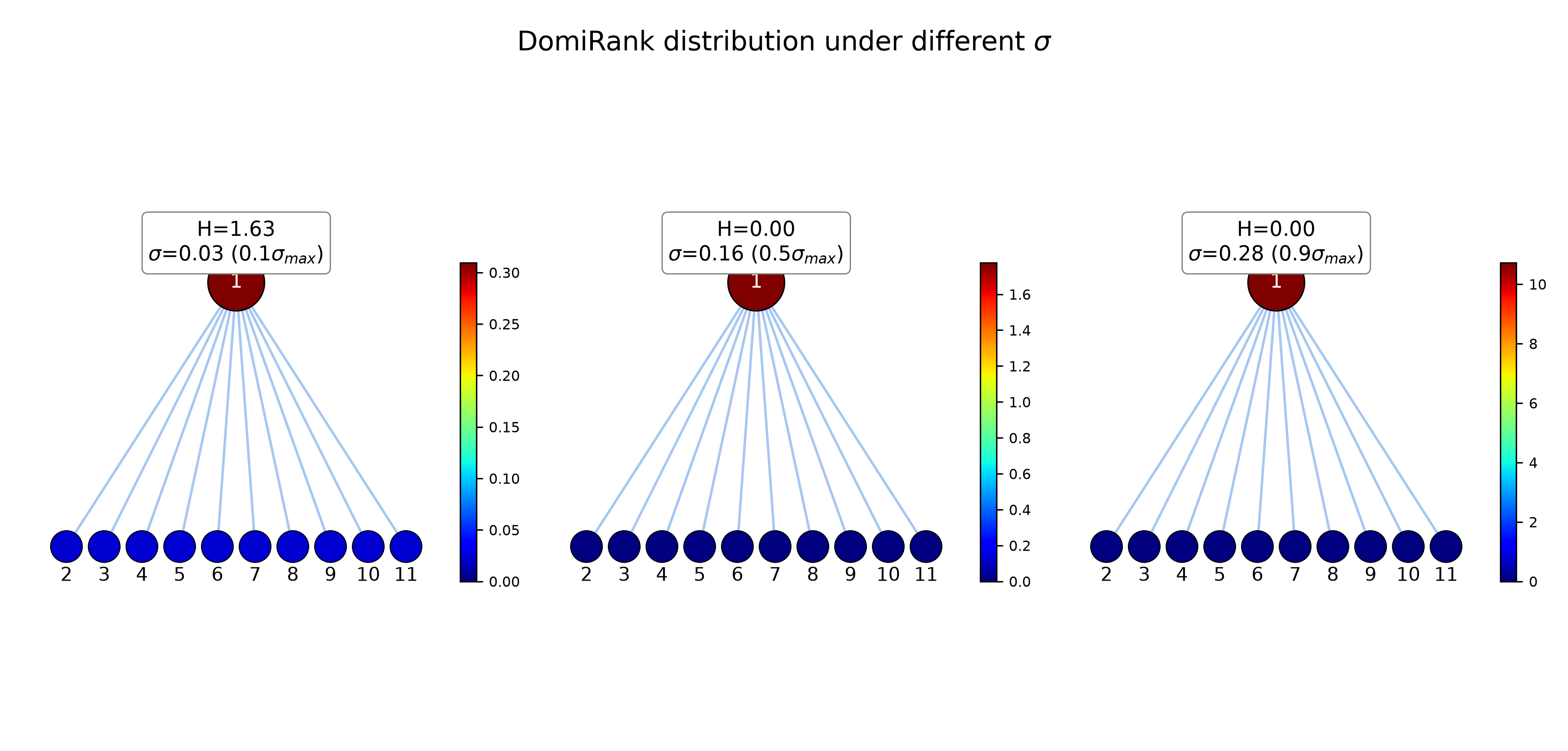}
	\caption{DomiRank distributions of the 11-vertex star graph under low, medium, and high competition intensities (three panels, all within the admissible range $(0, \sigma^*)$; for this graph $\sigma^* = 1/\sqrt{10} \approx 0.3162$).}
	\textbf{Alt text:}\ Three bar charts showing the DomiRank distribution of the 11-vertex star graph at low, medium and high competition intensities. As the intensity rises, the values of the nine leaves shrink towards zero and the value of the hub grows sharply.
	\label{figure4}
\end{figure}

Figure~\ref{figure5} shows how the competition intensity affects the uniformity of the node-importance distribution. Here the entropy $H$ measures the uncertainty of the DomiRank distribution. As shown, the entropy decreases monotonically with the competition parameter $\sigma$, consistent with Theorem~\ref{thm:entropy_decrease} and Conjecture~\ref{conj:entropy}.

\begin{figure}[!h]
	\centering
	\includegraphics{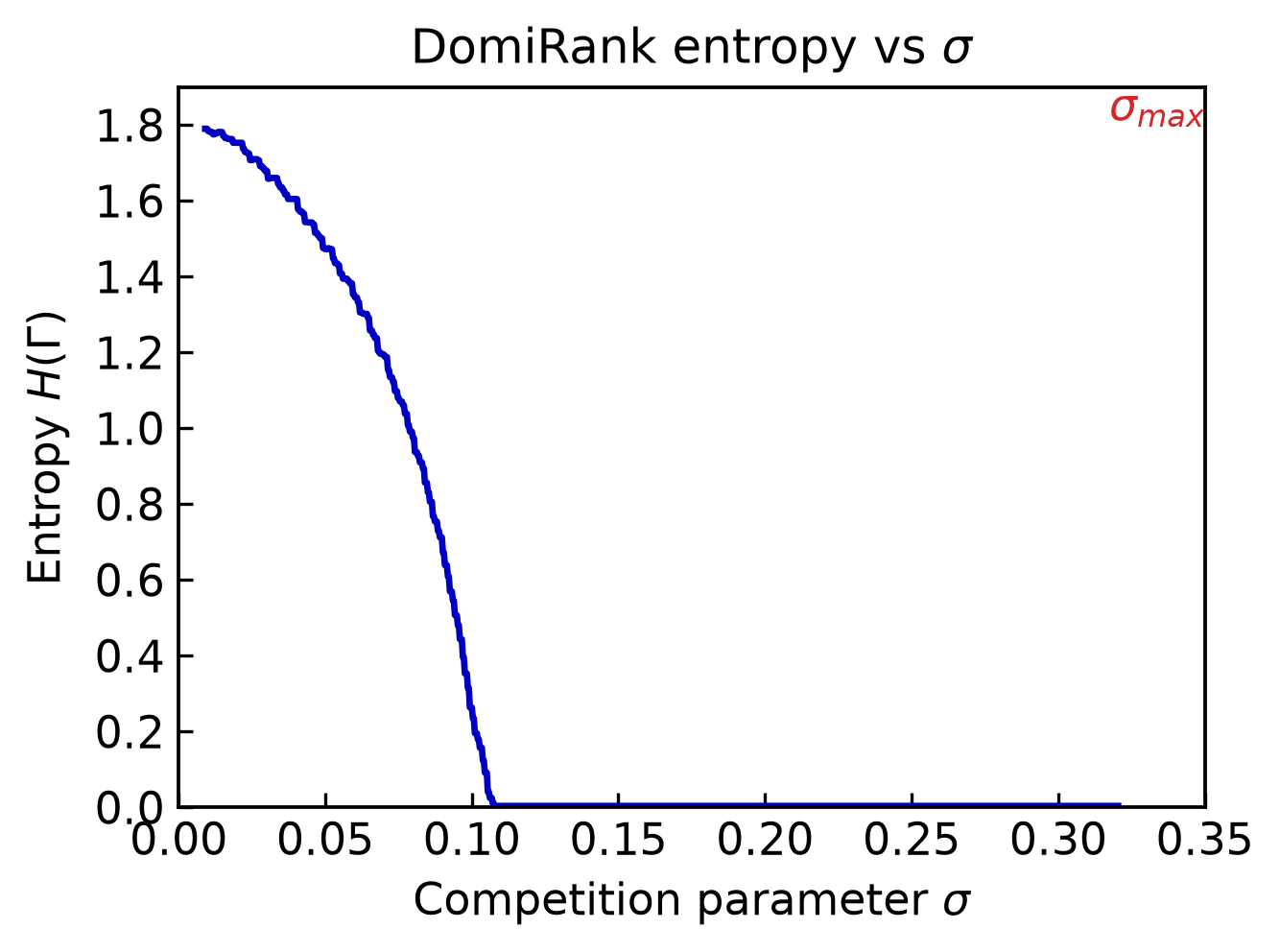}\\
	\caption{DomiRank entropy of the star graph versus $\sigma$.}
	\textbf{Alt text:}\ A line plot of the DomiRank entropy of the star graph against the relative competition intensity x from 0.01 to 0.99. The curve decreases monotonically from about 3.4 bits to about 0.5 bits.
	\label{figure5}
\end{figure}

Figure~\ref{figure6} highlights the uniqueness of the DomiRank mechanism by comparing it with the traditional principal-eigenvector centrality. DomiRank does not find the ``most popular nodes'' (eigenvector centrality) but the ``most dominant nodes.''

\begin{figure}[!h]
	\centering
	\includegraphics{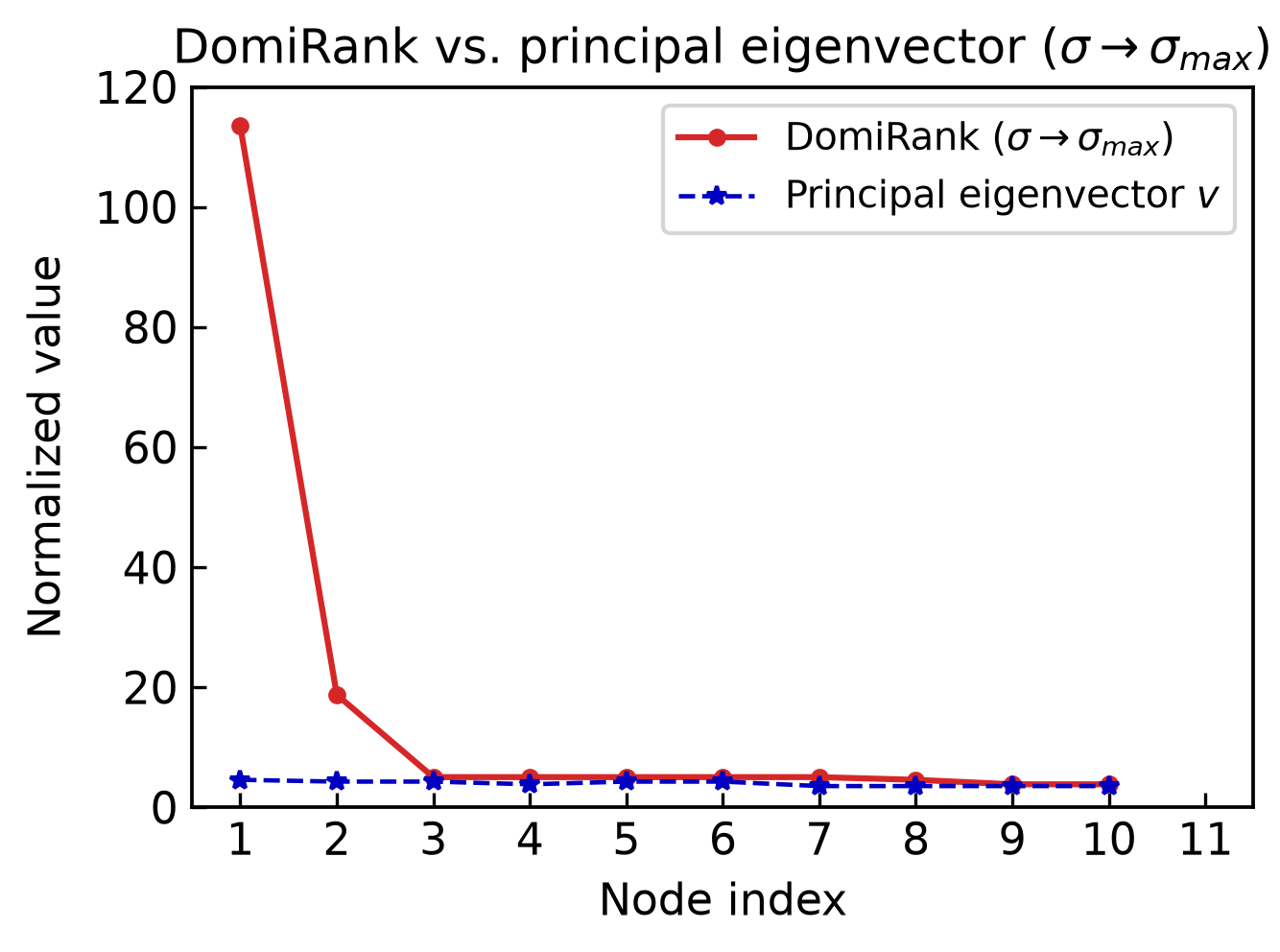}\\
	\caption{Comparison between DomiRank and principal-eigenvector centrality on the star graph (admissible range $(0, \sigma^*)$ with $\sigma^* = 1/\sqrt{10} \approx 0.3162$): DomiRank does not identify the most popular nodes but the most dominant ones.}
	\textbf{Alt text:}\ A line plot of the DomiRank entropy against the relative competition intensity for a second benchmark network, again decreasing monotonically across the grid.
	\label{figure6}
\end{figure}

\begin{conjecture}\label{conj:star}
Among connected graphs \emph{on $N$ vertices} (with $N$ fixed), the star graph $S_N$ is the graph of minimum DomiRank entropy. Since the attainable range of $H$ is $[0, \log_2 N]$, the comparison is meaningful only at fixed $N$; equivalently, $H(G) \ge H(S_N)$ whenever $|V(G)| = |V(S_N)|$.
\end{conjecture}

\begin{remark}[Numerical evidence for the conjecture]\label{rem:star_evidence}
We provide systematic numerical support for Conjecture~\ref{conj:star} (under the following operationalization: for any connected graph $G$ on $N$ vertices and any relative competition intensity $x \in (0, 1)$, compare the normalized entropy of $G$ at $\sigma = x\sigma_G^*$ with that of the star $S_N$ at $\sigma = x\sigma_{S_N}^*$). For all connected graphs with $N \le 7$ (996 graphs in total --- connected, simple, undirected and unlabeled, of which 853 have $N = 7$), we computed at the 19 sampling points $x \in \{0.05, 0.10, \dots, 0.95\}$: in all 18,924 comparisons, $H_G \ge H_{S_N}$ held, with no counterexample; we further scanned 900 random networks with $N \in \{10, 20, 50\}$ (300 each of Erd\H{o}s--R\'enyi, Barab\'asi--Albert, and Watts--Strogatz models), again finding no counterexample.
\end{remark}

\section{DomiRank Properties under Graph\\allowbreak Transformations}

The previous two sections characterized DomiRank from the perspectives of symmetry and distributional measures. This section further examines the influence of structural interventions on the DomiRank vector, focusing on typical graph mappings and transformations (vertex similarity, vertex partitions, edge swaps, and $m$-products). Four results are proved here: Theorem~\ref{thm:similar_neighbor} on vertex similarity, Theorem~\ref{thm:partition} on equitable partitions, Theorem~\ref{thm:edge_swap} on edge swaps, and Theorem~\ref{thm:m_product} on $m$-products. The remaining transformation, edge deletion, is the most delicate of the four: Theorem~\ref{thm:edge_deletion} gives an identity that holds under mild conditions together with a necessary-and-sufficient condition for the stronger exact scaling, and Remark~\ref{rem:edge_deletion_sharp} records a four-cycle showing that the stronger statement is false in general.

\begin{definition}\label{def:similar}
Let $G$ be a graph and $\mathrm{Aut}(G)$ its automorphism group. If there exists an automorphism $\psi \in \mathrm{Aut}(G)$ with $\psi(u) = v$, then vertices $u$ and $v$ are called \textbf{similar}. By Theorem~\ref{thm:automorphism}, similar vertices have equal DomiRank values.
\end{definition}

\begin{theorem}\label{thm:similar_neighbor}
Let $u$ and $v$ be two vertices of a graph $G$ with equal degrees. If every vertex in $N(u)\setminus N(v)$ is similar to $u$, and every vertex in $N(v)\setminus N(u)$ is similar to $v$, then $\Gamma_u = \Gamma_v$.
\end{theorem}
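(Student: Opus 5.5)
The proof will use the steady-state vertex equation \eqref{eq:node_solution} of Theorem~\ref{thm:analytical} together with Theorem~\ref{thm:automorphism}, which gives equal values to similar vertices. First I split the two neighbourhoods into a shared part and private parts: set $C = N(u)\cap N(v)$, $D_u = N(u)\setminus N(v)$ and $D_v = N(v)\setminus N(u)$. Then $N(u) = C \sqcup D_u$ and $N(v) = C \sqcup D_v$. Since $k_u = k_v$, we get $|D_u| = |D_v| =: a$. If $a = 0$ then $N(u) = N(v)$, and Lemma~\ref{lemma:same_neighbor} finishes the proof, so I will assume $a \ge 1$.

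Next I write
\[
\Gamma_u = \sigma \sum_{w\in C}(\theta-\Gamma_w) + \sigma\sum_{w\in D_u}(\theta-\Gamma_w),
\]
and similarly for $\Gamma_v$. By hypothesis every $w \in D_u$ is similar to $u$, so Theorem~\ref{thm:automorphism} gives $\Gamma_w = \Gamma_u$. Likewise $\Gamma_w = \Gamma_v$ for every $w \in D_v$. Hence
\[
\Gamma_u = \sigma S_C + \sigma a(\theta - \Gamma_u), \qquad \Gamma_v = \sigma S_C + \sigma a(\theta-\Gamma_v),
\]
where $S_C = \sum_{w\in C}(\theta-\Gamma_w)$ is the same quantity in both equations. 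Subtracting gives $(1+\sigma a)(\Gamma_u - \Gamma_v) = 0$. Because $\sigma > 0$ on the admissible range, $1 + \sigma a > 0$, and therefore $\Gamma_u = \Gamma_v$. Unlike Lemma~\ref{lemma:subset_neighbor}, this argument needs no restriction $\sigma<1$ and no sign condition on the $\Gamma_k$.

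The one step I expect to need care is the case $u \sim v$. In that case $v \in N(u)$ but $v \notin N(v)$ (the graph is simple), so $v \in D_u$ and the hypothesis says $v$ is similar to $u$. That already gives $\Gamma_u=\Gamma_v$ directly by Theorem~\ref{thm:automorphism}, and it is also consistent with the computation above. So the main thing to check is that the bookkeeping $|D_u|=|D_v|$ and the substitution $\Gamma_w=\Gamma_u$ remain valid when $u$ or $v$ lies in the other's private neighbourhood. I will state this case explicitly rather than leave it implicit in the general calculation.
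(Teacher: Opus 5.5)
Your proof is correct and is essentially the paper's own argument. Both split $N(u)$ and $N(v)$ into the common part $N(u)\cap N(v)$ and the private parts, use Theorem~\ref{thm:automorphism} to replace the private-part values by $\Gamma_u$ and $\Gamma_v$, and compare the two resulting equations, whose coefficient $1+(d(u)-|N(u)\cap N(v)|)\sigma$ is your $1+\sigma a$. Your remark that this coefficient is positive and your separate treatment of the adjacent case $u\sim v$ are small clarifications that the paper leaves implicit.
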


\begin{proof}
If $N(v) = N(u)$, Lemma~\ref{lemma:same_neighbor} gives $\Gamma_u = \Gamma_v$.

Assume $N(v) \neq N(u)$. For $w \in N(u)\setminus N(v)$, since every vertex in $N(u)\setminus N(v)$ is similar to $u$, we have $\Gamma_u = \Gamma_w$. By equation \eqref{eq:node_solution},
\[
\begin{aligned}
	\Gamma_u &= \sigma \sum_{k \in N(u)} \left[ \theta - \Gamma_k \right] \\
	&= \sigma \sum_{p \in N(u)\cap N(v)} \left[ \theta - \Gamma_p \right] + \sigma \sum_{w \in N(u)\setminus N(v)} \left[ \theta - \Gamma_w \right] \\
	&= \sigma d(u)\theta - \sigma \sum_{p \in N(u)\cap N(v)} \Gamma_p - \sigma \sum_{w \in N(u)\setminus N(v)} \Gamma_w.
\end{aligned}
\]

Since every $w \in N(u)\setminus N(v)$ is similar to $u$, $\Gamma_w = \Gamma_u$, so
\begin{align}\label{eq:similar_u}
\left[1 + \left(d(u) - |N(u)\cap N(v)|\right)\sigma\right]\Gamma_u = \sigma d(u)\theta - \sigma \sum_{p \in N(u)\cap N(v)} \Gamma_p.
\end{align}

Similarly, for $v$,
\begin{align}\label{eq:similar_v}
\left[1 + \left(d(v) - |N(u)\cap N(v)|\right)\sigma\right]\Gamma_v = \sigma d(v)\theta - \sigma \sum_{p \in N(u)\cap N(v)} \Gamma_p.
\end{align}

Since $d(u) = d(v)$, the right-hand sides of \eqref{eq:similar_u} and \eqref{eq:similar_v} are identical and the left-hand coefficients are identical; hence $\Gamma_u = \Gamma_v$.
\end{proof}

\begin{remark}
The hypotheses of Theorem~\ref{thm:similar_neighbor} are weaker than similarity: they do not require $u$ and $v$ to lie in the same orbit. In the graph of Figure~\ref{figure7}, $u$ and $v$ share the common neighbors $a, b$; the other neighbor $c$ of $u$ is similar to $u$, and the other neighbor $y$ of $v$ is similar to $v$. For $\sigma = 0.3, \theta = 1$, the DomiRank vector of this graph is $\left[ \dfrac{24}{29},\; \dfrac{9}{29},\; \dfrac{9}{29},\; \dfrac{24}{29},\; \dfrac{9}{29},\; \dfrac{9}{29} \right]$, so $u$ and $v$ attain equal DomiRank values; in this particular instance $u$ and $v$ are in fact similar (the automorphism $\psi: a \mapsto a,\; b \mapsto b,\; u \mapsto v,\; c \mapsto y$ maps the graph to itself), while the second instance of Example~\ref{ex:orbit} (Figure~\ref{figure15}) provides a case in which vertices from different orbits still share equal DomiRank values.
\end{remark}

\begin{figure}[!h]
	\centering
	\includegraphics[width=0.95\textwidth]{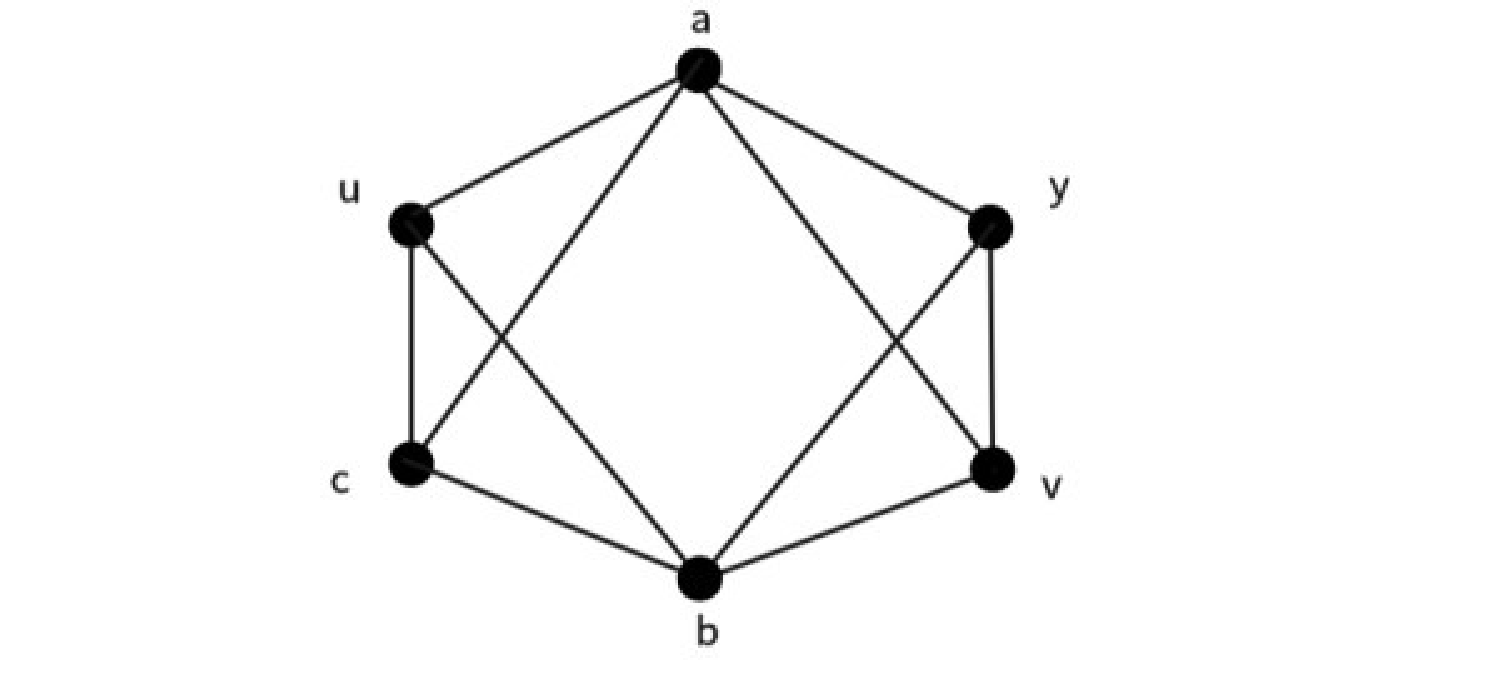}\\
	\caption{The graph of the remark after Theorem~\ref{thm:similar_neighbor}: vertices $u$ and $v$ attain equal DomiRank values.}
	\textbf{Alt text:}\ A drawing of two vertices u and v with equal DomiRank values that are not related by any automorphism, used as the sharpness example after Theorem 5.2.
	\label{figure7}
\end{figure}

\begin{definition}\label{def:partition}
Let the vertex set of a graph $G$ have a partition $V(G) = V_1 \cup V_2 \cup \dots \cup V_m$. If for any $i, j$, every vertex in $V_i$ is adjacent to the same number of vertices in $V_j$, the partition is called an equitable partition. Let $d_{i,j}$ denote the number of vertices in $V_j$ adjacent to a vertex in $V_i$.
\end{definition}

\begin{theorem}\label{thm:partition}
Let $G$ be a graph of order $N$ whose vertex set has an equitable partition $V(G)=V_1\cup V_2\cup\dots\cup V_m$. Then all vertices in the same $V_i$ have equal DomiRank values.
\end{theorem}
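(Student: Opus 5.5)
The plan is to show that the DomiRank vector lies in the subspace of vectors constant on each cell $V_i$. We use that this subspace is invariant under $A$ and hence under the resolvent $(\sigma A + I)^{-1}$.

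Let $C \in \{0,1\}^{N \times m}$ be the characteristic matrix of the partition, so $C_{v i} = 1$ iff $v \in V_i$. Let $B = (d_{i,j})$ be the $m \times m$ quotient matrix. Equitability is exactly the identity $AC = CB$: the entry $(AC)_{v j}$ counts the neighbours of $v$ in $V_j$, and this equals $d_{i,j}$ for every $v \in V_i$. Let $W = \operatorname{range}(C)$, the space of vectors constant on cells. The identity gives $AW \subseteq W$. Since $\mathbf{1} = C\mathbf{1}_m \in W$, the degree vector $A\mathbf{1} = CB\mathbf{1}_m$ also lies in $W$.

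Next, $W$ is invariant under $\sigma A + I$. The step I expect to need a little care is passing from $\sigma A + I$ to its inverse. The restriction of $\sigma A + I$ to the finite-dimensional subspace $W$ is injective, because $\sigma A + I$ is invertible for admissible $\sigma$ by Theorem~\ref{thm:analytical}. An injective linear map from $W$ to $W$ is surjective onto $W$, so $(\sigma A + I)^{-1}W = W$.

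Equivalently, and more concretely, one shows $(\sigma A + I)C = C(\sigma B + I)$. Because $C$ has full column rank, $\sigma B + I$ must then be invertible, since any nonzero kernel vector $x$ of $\sigma B + I$ would give a nonzero kernel vector $Cx$ of $\sigma A + I$. This yields
\[
(\sigma A + I)^{-1}C = C(\sigma B + I)^{-1}.
\]
Substituting into \eqref{eq:solution} gives
\[
\boldsymbol{\Gamma} = \theta\sigma(\sigma A + I)^{-1} C B \mathbf{1}_m = C\,\bigl[\theta\sigma(\sigma B + I)^{-1} B \mathbf{1}_m\bigr] \in W.
\]
So $\Gamma_x = \Gamma_y$ whenever $x, y$ lie in the same $V_i$. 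As a bonus, the common value on $V_i$ is the $i$-th entry of the quotient DomiRank vector $\theta\sigma(\sigma B + I)^{-1}B\mathbf{1}_m$.

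As a sanity check, the orbit partition of $\mathrm{Aut}(G)$ is equitable, so this argument recovers Theorem~\ref{thm:orbit}. The trivial one-cell partition is equitable precisely for regular graphs, which recovers Lemma~\ref{lemma:regular}.
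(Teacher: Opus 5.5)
Your proof is correct and is essentially the paper's argument in matrix form: the paper posits a block-constant solution $\boldsymbol{\Gamma} = C\boldsymbol{r}$, reduces to the quotient system $(I+\sigma D)\boldsymbol{r} = \sigma\theta\boldsymbol{s}$ (with $D = B$ and $\boldsymbol{s} = B\mathbf{1}_m$), and then invokes uniqueness of the solution of $(\sigma A + I)\boldsymbol{\Gamma} = \sigma\theta A\mathbf{1}$, which amounts to your identity $(\sigma A+I)^{-1}C = C(\sigma B+I)^{-1}$. Your kernel argument showing that $\sigma B + I$ is invertible, so that the quotient system actually has a solution, supplies a step the paper's proof leaves implicit.
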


\begin{proof}
The DomiRank equation $(\sigma A + I)\boldsymbol{\Gamma} = \sigma \theta A \mathbf{1}_{N \times 1}$ has a unique solution for $\sigma \in (0, -1/\lambda_N)$ (since $\sigma A + I$ is invertible).

Assume the DomiRank vector $\boldsymbol{\Gamma}$ takes a constant value $r_i$ inside each block, i.e., $\Gamma_v = r_i$ for every $v \in V_i$. For $v \in V_i$, the left-hand side is
\[
\left[(\sigma A + I)\boldsymbol{\Gamma}\right]_v = \Gamma_v + \sigma \sum_{u} A_{v,u} \Gamma_u = r_i + \sigma \sum_j d_{i,j} r_j.
\]

The right-hand side is
\[
\left[\sigma \theta A \mathbf{1}\right]_v = \sigma \theta \sum_u A_{v,u} = \sigma \theta S_i,
\]
where $S_i = \sum_j d_{i,j}$ is the degree of a vertex in $V_i$.

Therefore, $r_i$ satisfies the system
\begin{align}\label{eq:partition_system}
r_i + \sigma \sum_j d_{i,j} r_j = \sigma \theta S_i, \quad i = 1, 2, \dots, m,
\end{align}
a linear system $(I + \sigma D)\boldsymbol{r} = \sigma \theta \boldsymbol{s}$ in $r_1, r_2, \dots, r_m$, where $D_{i,j} = d_{i,j}$ and $\boldsymbol{s} = (S_1, \dots, S_m)^T$.

The key observation is that this ansatz is consistent: for every vertex $v \in V_i$, the equation reduces to the same equation (depending only on $i$, not on the specific choice of $v$). Since the original system $(\sigma A + I)\boldsymbol{\Gamma} = \sigma \theta A \mathbf{1}$ has a unique solution, and the constant-value ansatz satisfies the equations of all vertices, uniqueness implies that the solution takes constant values inside each block.
\end{proof}

\begin{example}\label{ex:bipartite}
Consider the complete bipartite graph $K_{2,3}$ of Figure~\ref{figure8}, with $V_1 = \{a, b\}$ and $V_2 = \{m, n, v\}$. At $\sigma = 0.3$, by formula \eqref{eq:solution}, every vertex in $V_1$ has DomiRank value $\dfrac{18}{23}$ and every vertex in $V_2$ has DomiRank value $\dfrac{3}{23}$, verifying Theorem~\ref{thm:partition}.
\end{example}

\begin{figure}[!h]
	\centering
	\includegraphics[width=0.72\textwidth]{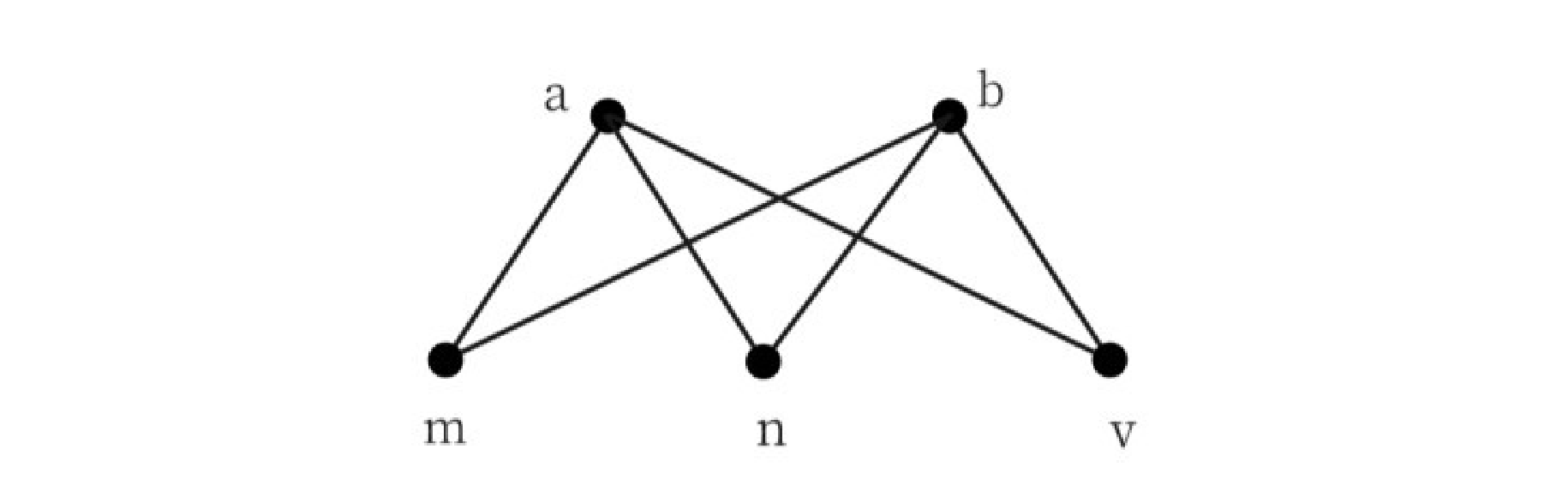}\\
	\caption{The complete bipartite graph $K_{2,3}$ of Example~\ref{ex:bipartite}, used there to verify the equitable-partition statement of Theorem~\ref{thm:partition}: the two vertices of $V_1=\{a,b\}$ receive $18/23$ and the three vertices of $V_2=\{m,n,v\}$ receive $3/23$ at $\sigma=0.3$.}
	\textbf{Alt text:}\ A drawing of the complete bipartite graph K2,3 with part V1 of two vertices a and b and part V2 of three vertices m, n and v. It is the graph of Example 5.6, used to verify that vertices in the same block of an equitable partition receive equal DomiRank values.
	\label{figure8}
\end{figure}

\begin{example}\label{ex:general_partition}
As shown in Figure~\ref{figure9}, partition 7 vertices into $V = V_1 \cup V_2 \cup V_3$ with $V_1 = \{1\}$, $V_2 = \{2, 3\}$, and $V_3 = \{4, 5, 6, 7\}$. This partition satisfies the equitability condition, and at $\sigma = 0.3$ the DomiRank vector of the graph is
\[
\left[ \frac{252}{247},\; \frac{72}{247},\; \frac{72}{247},\; \frac{249}{494},\; \frac{249}{494},\; \frac{249}{494},\; \frac{249}{494} \right].
\]
Vertices in the same $V_i$ have equal DomiRank values, verifying Theorem~\ref{thm:partition}.
\end{example}

\begin{figure}[!h]
	\centering
	\includegraphics[width=0.82\textwidth]{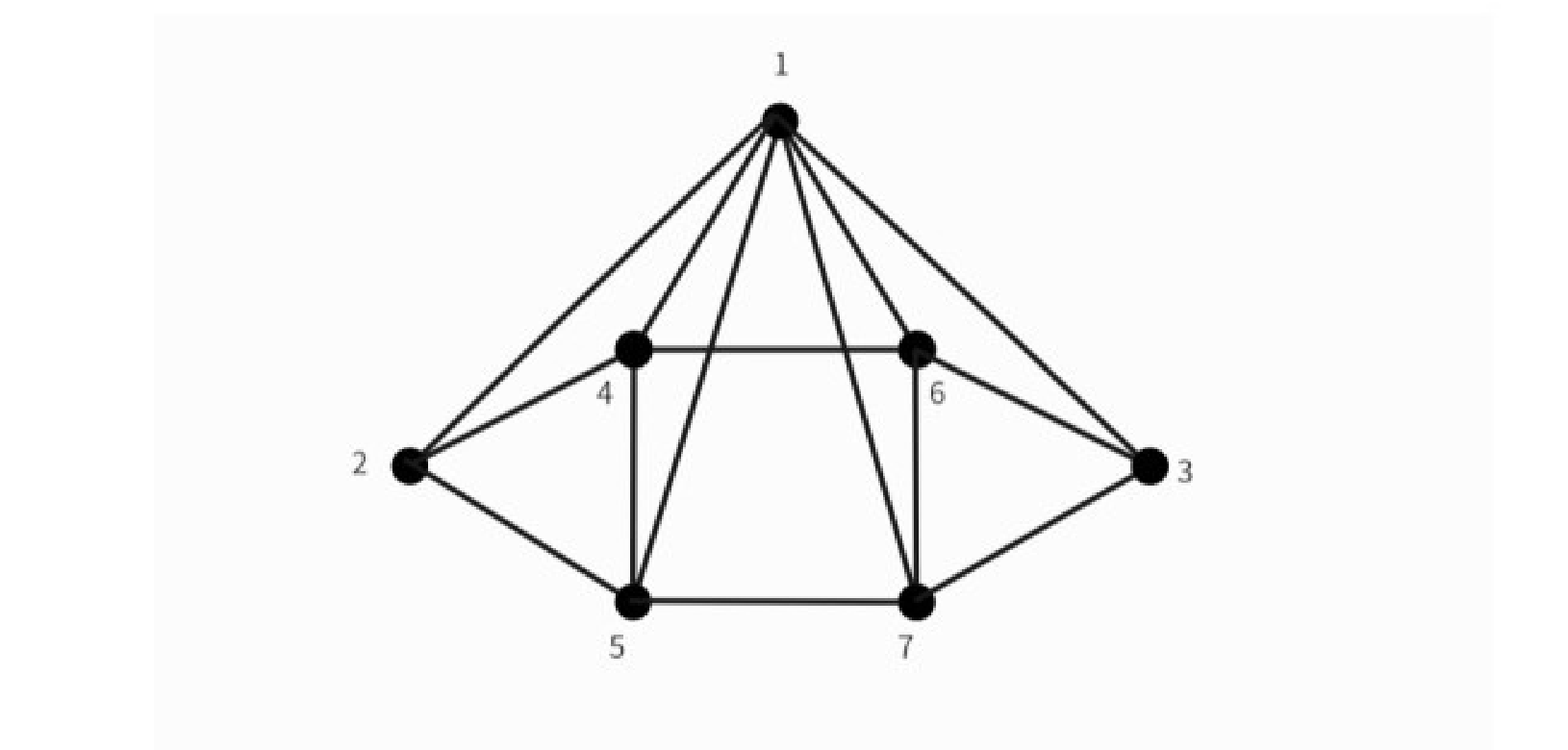}\\
	\caption{An equitable partition $V_1=\{1\}$, $V_2=\{2,3\}$, $V_3=\{4,5,6,7\}$ of the graph of Example~\ref{ex:general_partition}.}
	\textbf{Alt text:}\ A drawing of a seven-vertex graph with the equitable partition V1={1}, V2={2,3} and V3={4,5,6,7}. Vertices 2 and 3, and vertices 4 to 7, are drawn symmetrically within their blocks.
	\label{figure9}
\end{figure}

\begin{theorem}\label{thm:edge_deletion}
Let $F$ be a subset of $E(G)$ such that $|E(G)| = k|F|$ for some integer $k \ge 2$, and let $H$ be the spanning subgraph of $G$ obtained by deleting all edges not in $F$. Suppose that for every $x \in V(G)$ the following two conditions hold:
\begin{enumerate}
	\item[(i)] $d_G(x) = k\, d_H(x)$;
	\item[(ii)] $N_G(x)$ can be partitioned as $N_G(x) = Y_1 \cup Y_2 \cup \cdots \cup Y_k$ with $Y_1 = N_H(x)$, and for each $i = 1, 2, \dots, k$ there exists a bijection $\varphi_i$ from $Y_1$ to $Y_i$ such that every $v \in Y_1$ and $\varphi_i(v)$ are similar vertices.
\end{enumerate}
Then the DomiRank vectors $\boldsymbol{\Gamma}^G$ of $G$ and $\boldsymbol{\Gamma}^H$ of $H$ satisfy, for every $x \in V(G)$,
\begin{align}\label{eq:edge_deletion_identity}
\Gamma^G_x = k \left[ \sigma\theta\, d_H(x) - \sigma \sum_{y \in N_H(x)} \Gamma^G_y \right],
\end{align}
and the exact scaling $\boldsymbol{\Gamma}^G = k\, \boldsymbol{\Gamma}^H$ holds if and only if
\begin{align}\label{eq:edge_deletion_condition}
\sum_{y \in N_H(x)} \Gamma^G_y = \sum_{y \in N_H(x)} \Gamma^H_y \qquad \text{for every } x \in V(G).
\end{align}
\end{theorem}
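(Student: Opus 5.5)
The plan is to work entirely with the vertex-level fixed-point equation \eqref{eq:node_solution} of Theorem~\ref{thm:analytical}, applied to $G$ and to $H$ at the same admissible $\sigma$. I assume $\sigma$ is admissible for both graphs, so that each DomiRank vector is the unique solution of its linear system. This is automatic if $\lambda_N(G) \le \lambda_N(H)$, and I would simply add it as a standing assumption if that inequality is not available.

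\textbf{Step 1: the identity \eqref{eq:edge_deletion_identity}.} Fix $x$ and split the sum in $\Gamma^G_x = \sigma\sum_{y\in N_G(x)}(\theta-\Gamma^G_y)$ according to the partition $N_G(x)=Y_1\cup\cdots\cup Y_k$ of hypothesis (ii). For each block $Y_i$, reindex through the bijection $\varphi_i\colon Y_1\to Y_i$. Since $v$ and $\varphi_i(v)$ are similar, Theorem~\ref{thm:automorphism} gives $\Gamma^G_{\varphi_i(v)}=\Gamma^G_v$, where the relevant automorphism is one of $G$. Hence every block contributes the same amount $\sum_{y\in Y_1}(\theta-\Gamma^G_y)$, and with $Y_1=N_H(x)$ and $|Y_1|=d_H(x)$ this gives
\[
\Gamma^G_x = k\sigma\Big[\theta\, d_H(x) - \sum_{y\in N_H(x)}\Gamma^G_y\Big],
\]
which is \eqref{eq:edge_deletion_identity}. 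Hypothesis (i) is only the cardinality consistency $|N_G(x)| = k|Y_1|$ and is not otherwise needed.

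\textbf{Step 2: necessity of \eqref{eq:edge_deletion_condition}.} Suppose $\boldsymbol{\Gamma}^G = k\boldsymbol{\Gamma}^H$. The equation for $H$ at $x$ reads $\Gamma^H_x=\sigma[\theta d_H(x)-\sum_{N_H(x)}\Gamma^H_y]$. Multiplying it by $k$ and comparing with Step~1 gives $k\sigma\sum_{N_H(x)}\Gamma^G_y = k\sigma\sum_{N_H(x)}\Gamma^H_y$, and since $\sigma>0$ this is \eqref{eq:edge_deletion_condition}.

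\textbf{Step 3: sufficiency.} Assume \eqref{eq:edge_deletion_condition} holds. Substituting it into Step~1 gives $\Gamma^G_x = k\sigma[\theta d_H(x)-\sum_{N_H(x)}\Gamma^H_y] = k\Gamma^H_x$ directly from the $H$-equation, with no further argument needed.

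The main obstacle is not either direction of the equivalence, both of which are two-line substitutions once Step~1 is in place. It is making sure Step~1 is legitimate. First, similarity must be taken in $\mathrm{Aut}(G)$, because Theorem~\ref{thm:automorphism} is applied to $\boldsymbol{\Gamma}^G$. Second, $\Gamma^H$ must be well defined at the chosen $\sigma$, which is why I would state the common admissible range explicitly. I would also note that the condition is genuinely needed and not automatic: by Step~1, $\boldsymbol{\Gamma}^G$ solves $(I+k\sigma A_H)\boldsymbol{\Gamma}^G = k\sigma\theta\boldsymbol{d}_H$, whereas $k\boldsymbol{\Gamma}^H$ solves $(I+\sigma A_H)(k\boldsymbol{\Gamma}^H) = k\sigma\theta\boldsymbol{d}_H$. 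The coefficient matrices differ, so the two vectors generally disagree, and this is consistent with the four-cycle counterexample announced in Remark~\ref{rem:edge_deletion_sharp}.
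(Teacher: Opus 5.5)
Your proof is correct and follows essentially the same route as the paper. You split the fixed-point equation of $G$ along the partition $N_G(x)=Y_1\cup\cdots\cup Y_k$, collapse the $k$ blocks onto $Y_1=N_H(x)$ via Theorem~\ref{thm:automorphism} in $\mathrm{Aut}(G)$, and get both directions of the equivalence by direct substitution into the defining equation of $H$.

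Your side remarks are also correct refinements that the paper leaves implicit: (i) follows from (ii); admissibility of $\sigma$ for $H$ must be assumed separately, since it is not automatic (e.g.\ $G=K_5$, $k=2$, $H=C_5$ gives $\lambda_N(H)<\lambda_N(G)$); and \eqref{eq:edge_deletion_identity} is exactly the DomiRank system of $H$ at parameter $k\sigma$.
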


\begin{proof}
Let $\boldsymbol{\Gamma}^G$ be the DomiRank vector of $G$. For every $x \in V(G)$,
\begin{align*}
\Gamma^G_x &= \sigma \sum_{y \in N_G(x)} \left[ \theta - \Gamma^G_y \right] \\
&= \sigma \sum_{i=1}^{k} \sum_{z \in Y_i} \left[ \theta - \Gamma^G_z \right] \\
&= \sigma\theta\, d_G(x) - \sigma \sum_{i=1}^{k} \sum_{z \in Y_i} \Gamma^G_z.
\end{align*}
By condition (ii), for each $z \in Y_i$ there exists a unique $y \in Y_1 = N_H(x)$ such that $z$ and $y$ are similar; since similarity is membership of a common orbit of $\mathrm{Aut}(G)$ (Definition~\ref{def:similar}), this gives $\Gamma^G_z = \Gamma^G_y$ by Theorem~\ref{thm:automorphism}. Together with condition (i), $d_G(x) = k\, d_H(x)$, this gives
\[
\Gamma^G_x = \sigma\theta\, k\, d_H(x) - k\sigma \sum_{y \in N_H(x)} \Gamma^G_y
= k \left[ \sigma\theta\, d_H(x) - \sigma \sum_{y \in N_H(x)} \Gamma^G_y \right],
\]
which is \eqref{eq:edge_deletion_identity}. For the second assertion, recall the defining equation of $H$, $\Gamma^H_x = \sigma\theta\, d_H(x) - \sigma \sum_{y \in N_H(x)} \Gamma^H_y$. If $\boldsymbol{\Gamma}^G = k \boldsymbol{\Gamma}^H$, substituting into \eqref{eq:edge_deletion_identity} and comparing with the defining equation of $H$ yields \eqref{eq:edge_deletion_condition}. Conversely, if \eqref{eq:edge_deletion_condition} holds, then \eqref{eq:edge_deletion_identity} gives
\[
\Gamma^G_x = k \left[ \sigma\theta\, d_H(x) - \sigma \sum_{y \in N_H(x)} \Gamma^H_y \right] = k\, \Gamma^H_x
\]
for every $x$.
\end{proof}

\begin{remark}[The exact scaling fails in general]\label{rem:edge_deletion_sharp}
Conditions (i)--(ii) alone do not imply $\boldsymbol{\Gamma}^G = k \boldsymbol{\Gamma}^H$: DomiRank values are globally coupled, so $\Gamma^G_y \ne \Gamma^H_y$ in general and condition \eqref{eq:edge_deletion_condition} is genuinely restrictive. The smallest counterexample is $G = C_4$ with $F$ a perfect matching ($k = 2$): conditions (i)--(ii) hold (for every vertex its two neighbors in $G$ have the same neighborhood and are therefore similar), yet $C_4$ is $2$-regular and $H = 2K_2$ is $1$-regular, so $\Gamma^G_x = \dfrac{2\sigma\theta}{1 + 2\sigma}$ while $k\, \Gamma^H_x = \dfrac{2\sigma\theta}{1 + \sigma}$, and the two agree only at $\sigma = 0$ (for instance, at $\sigma = 0.25$, $\Gamma^G_x = 0.3333 \ne 0.4000 = k\, \Gamma^H_x$). Accordingly \eqref{eq:edge_deletion_condition} fails, since $\Gamma^G_y \ne \Gamma^H_y$; the identity \eqref{eq:edge_deletion_identity}, which holds under (i)--(ii) alone, quantifies the deviation from exact scaling.
\end{remark}

\begin{theorem}\label{thm:edge_swap}
Let $G$ be a graph with two edges $ua$ and $vb$, and let $G'$ be obtained from $G$ by replacing $ua$ with $ub$ and $vb$ with $va$, as in Figure~\ref{figure10}. If $u$ and $v$ are similar, $a$ and $b$ are similar, and $ub$, $va$ are not edges of $G$, then $G'$ has the same DomiRank vector as $G$.
\end{theorem}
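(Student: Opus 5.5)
The plan is to show that the DomiRank vector $\boldsymbol{\Gamma}=\boldsymbol{\Gamma}^G$ of $G$ also satisfies the steady-state equations of $G'$, and then conclude by uniqueness of the solution of the linear system, in the same spirit as the proof of Theorem~\ref{thm:partition}. Write $A$ and $A'$ for the adjacency matrices of $G$ and $G'$. By Theorem~\ref{thm:analytical}, $\boldsymbol{\Gamma}^G$ is characterized by $(\sigma A+I)\boldsymbol{\Gamma}=\sigma\theta A\mathbf{1}$. The target is therefore the pair of identities
\[
A'\mathbf{1}=A\mathbf{1}, \qquad A'\boldsymbol{\Gamma}^G=A\boldsymbol{\Gamma}^G .
\]
These give $(\sigma A'+I)\boldsymbol{\Gamma}^G=\sigma\theta A'\mathbf{1}$.

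The first identity is immediate. The swap removes $ua,vb$ and inserts $ub,va$; the hypothesis that $ub,va\notin E(G)$ guarantees that $G'$ is simple with exactly the same edge count. Each of $u,v,a,b$ loses one incident edge and gains one, so all degrees are preserved. For the second identity I will compare the vertex equations \eqref{eq:node_solution} vertex by vertex.

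\begin{itemize}
\item For $x\notin\{u,v,a,b\}$, the neighbourhood is unchanged, so there is nothing to check.
\item At $u$, $N_{G'}(u)=(N_G(u)\setminus\{a\})\cup\{b\}$. The sum $\sum_{y\in N(u)}\Gamma_y$ changes by $\Gamma_b-\Gamma_a$. This is $0$ by Theorem~\ref{thm:automorphism}, since $a$ and $b$ are similar in $G$.
\item At $v$, the change is $\Gamma_a-\Gamma_b=0$ for the same reason.
\item At $a$, the neighbour $u$ is replaced by $v$, and the change is $\Gamma_v-\Gamma_u=0$ because $u,v$ are similar.
\item At $b$, the change is $\Gamma_u-\Gamma_v=0$.
\end{itemize}

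Before this I will dispose of degenerate coincidences among $u,v,a,b$. The case $u=v$ makes $G'=G$. The case $a=b$ is excluded, since then $ub=ua$ would be an edge. An identification such as $u=b$ or $v=a$ would turn the inserted edge $ub$ or $va$ into a loop, which is not allowed in a simple graph. So we may take the four vertices pairwise distinct, or at least treat each coincidence by the same bookkeeping.

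The step I expect to be the real obstacle is uniqueness. The spectra of $A$ and $A'$ need not coincide, so $\lambda_N(A')$ may differ from $\lambda_N(A)$, and a $\sigma$ admissible for $G$ need not a priori make $\sigma A'+I$ invertible. For every $\sigma$ with $\det(\sigma A'+I)\neq0$, the identity above yields $\boldsymbol{\Gamma}^{G'}=(\sigma A'+I)^{-1}\sigma\theta A'\mathbf{1}=\boldsymbol{\Gamma}^G$. To cover the full statement, I would first note that both sides are rational vector functions of $\sigma$. The determinant $\det(\sigma A'+I)$ is a nonzero polynomial, equal to $1$ at $\sigma=0$, so the equality holds for all but finitely many $\sigma$. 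At any $\sigma$ where both DomiRank vectors are defined, they therefore agree.

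I would also try to prove the cleaner statement that $A$ and $A'$ are cospectral, which would make the two admissible ranges coincide. One route is to check whether the automorphisms $\psi$ (with $\psi(u)=v$) and $\phi$ (with $\phi(a)=b$) can be combined into a permutation conjugating $A$ to $A'$. If that fails, I will state the theorem for $\sigma$ admissible for both graphs and use the rational-function argument above.
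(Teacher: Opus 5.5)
Your proposal is correct and is essentially the paper's argument. The paper likewise shows that $\Gamma_u=\Gamma_v$ and $\Gamma_a=\Gamma_b$ give $A'\mathbf{1}=A\mathbf{1}$ and $A'\boldsymbol{\Gamma}^G=A\boldsymbol{\Gamma}^G$, so $\boldsymbol{\Gamma}^G$ solves $(\sigma A'+I)\boldsymbol{\Gamma}=\sigma\theta A'\mathbf{1}$, and then invokes uniqueness; it does the check with a $4\times4$ block that hard-codes the configuration of Figure~\ref{figure10}, whereas your vertex-by-vertex check is general.

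Your caution about admissibility is justified, and the cospectrality route cannot succeed. On $C_7$ with $u=0$, $a=1$, $v=3$, $b=4$, the swap produces $K_3\cup C_4$ and moves $\lambda_N$ from $2\cos(6\pi/7)\approx-1.80$ to $-2$. So $\sigma=1/2$ is admissible for $G$ but makes $\sigma A'+I$ singular. The right statement is therefore for $\sigma$ at which both $\sigma A+I$ and $\sigma A'+I$ are invertible; there your direct uniqueness argument already suffices, and the rational-function detour is unnecessary.
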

\begin{proof}
Since $u$ and $v$ are similar and $a$ and $b$ are similar, Theorem~\ref{thm:automorphism} gives $\Gamma_u = \Gamma_v$, $\Gamma_a = \Gamma_b$, and $d(u) = d(v)$, $d(a) = d(b)$.

Partition $V(G) = V(G') = V_1 \cup V_2$ with $V_1 = \{u, v, a, b\}$ and $V_2$ its complement. Let the adjacency matrices of $G$ and $G'$ be $P$ and $P'$, in block form
\[
P = \begin{bmatrix} P_1 & P_2 \\ P_3 & P_4 \end{bmatrix},
\qquad
P' = \begin{bmatrix} P_1' & P_2 \\ P_3 & P_4 \end{bmatrix},
\]
where the sub-blocks corresponding to $V_1$ are
\[
P_1 = \begin{bmatrix}
0 & 1 & 1 & 0 \\
1 & 0 & 0 & 1 \\
1 & 0 & 0 & 1 \\
0 & 1 & 1 & 0
\end{bmatrix},
\qquad
P_1' = \begin{bmatrix}
0 & 1 & 0 & 1 \\
1 & 0 & 1 & 0 \\
0 & 1 & 0 & 1 \\
1 & 0 & 1 & 0
\end{bmatrix}.
\]

Split $\boldsymbol{\Gamma}$ according to the partition of $G$: $\boldsymbol{\Gamma} = \begin{bmatrix} \boldsymbol{\Gamma}_1 \\ \boldsymbol{\Gamma}_2 \end{bmatrix}$, where $\boldsymbol{\Gamma}_1 = \begin{bmatrix} \Gamma_u & \Gamma_v & \Gamma_a & \Gamma_b \end{bmatrix}^T$, and split $\mathbf{1} = \begin{bmatrix} \mathbf{1}_1 \\ \mathbf{1}_2 \end{bmatrix}$.

The block form of $(\sigma P + I)\boldsymbol{\Gamma} = \sigma \theta P \mathbf{1}$ is
\begin{align}\label{eq:block}
\sigma \begin{bmatrix} P_1 & P_2 \\ P_3 & P_4 \end{bmatrix} \begin{bmatrix} \boldsymbol{\Gamma}_1 \\ \boldsymbol{\Gamma}_2 \end{bmatrix} + \begin{bmatrix} \boldsymbol{\Gamma}_1 \\ \boldsymbol{\Gamma}_2 \end{bmatrix} = \sigma \theta \begin{bmatrix} P_1 & P_2 \\ P_3 & P_4 \end{bmatrix} \begin{bmatrix} \mathbf{1}_1 \\ \mathbf{1}_2 \end{bmatrix}.
\end{align}

Since $\Gamma_u = \Gamma_v$ and $\Gamma_a = \Gamma_b$, one verifies directly that $P_1 \boldsymbol{\Gamma}_1 = P_1' \boldsymbol{\Gamma}_1$ and $P_1 \mathbf{1}_1 = P_1' \mathbf{1}_1$. Substituting into \eqref{eq:block},
\[
\sigma \begin{bmatrix} P_1' & P_2 \\ P_3 & P_4 \end{bmatrix} \begin{bmatrix} \boldsymbol{\Gamma}_1 \\ \boldsymbol{\Gamma}_2 \end{bmatrix} + \begin{bmatrix} \boldsymbol{\Gamma}_1 \\ \boldsymbol{\Gamma}_2 \end{bmatrix} = \sigma \theta \begin{bmatrix} P_1' & P_2 \\ P_3 & P_4 \end{bmatrix} \begin{bmatrix} \mathbf{1}_1 \\ \mathbf{1}_2 \end{bmatrix},
\]
i.e., $(\sigma P' + I)\boldsymbol{\Gamma} = \sigma \theta P' \mathbf{1}$. By the uniqueness of the DomiRank vector, $\boldsymbol{\Gamma}$ is also the DomiRank vector of $G'$.
\end{proof}

\begin{figure}[!h]
	\centering
	\includegraphics[width=1\textwidth]{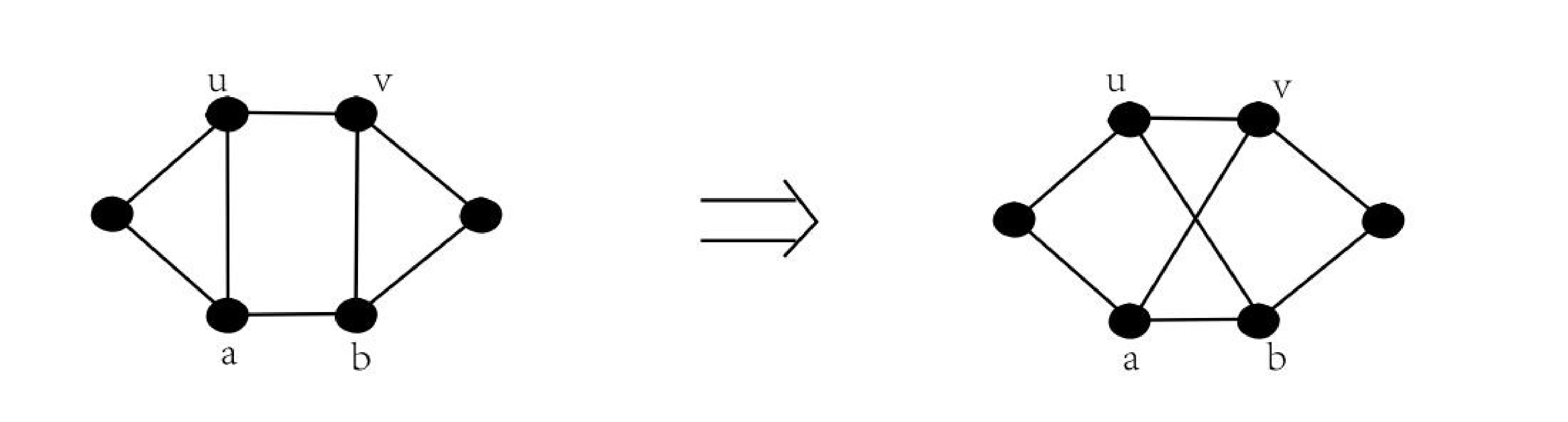}
	\caption{Edge swaps preserve the DomiRank vector (Theorem~\ref{thm:edge_swap}); $G'$ is obtained from $G$ by replacing the edges $ua$ and $vb$ with $ub$ and $va$. The statement holds for every admissible $\sigma \in (0, -1/\lambda_N)$.}
	\textbf{Alt text:}\ Two drawings of the same six-vertex graph before and after an edge swap that replaces edges ua and vb with ub and va. The two graphs have identical DomiRank vectors, which is the content of Theorem 5.10.
	\label{figure10}
\end{figure}

\begin{definition}\label{def:m_product}
The $m$-product of a graph $G$, denoted by $G^{(m)}$, is obtained by replacing each vertex $v_i$ of $G$ with an independent set $\{v_{i,1}, v_{i,2}, \dots, v_{i,m}\}$ of $m$ vertices, where two vertices $v_{i,j}$ and $v_{k,l}$ are adjacent in $G^{(m)}$ if and only if $v_i$ and $v_k$ are adjacent in $G$.
\end{definition}

\begin{theorem}\label{thm:m_product}
Let $G$ be a graph with DomiRank vector $\boldsymbol{\Gamma}$. Then the DomiRank vector $\boldsymbol{\Gamma}^{(m)}$ of $G^{(m)}$ satisfies: for any $1 \le i \le n$ and $1 \le j \le m$,
\[
\Gamma_{v_{i,j}}^{(m)} = \Gamma_i^{(\sigma m)},
\]
where $\Gamma_i^{(\sigma m)}$ is the DomiRank value of $G$ at parameter $\sigma m$, and the statement holds for every $\sigma \in (0, -1/(m\lambda_N))$, the admissible range of $G^{(m)}$: since $\lambda_N(A^{(m)}) = m\,\lambda_N(A)$, this is exactly the range on which both sides are defined, and $\sigma m$ then lies in the admissible range $(0, -1/\lambda_N)$ of $G$.
\end{theorem}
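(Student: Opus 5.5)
The natural route is an ansatz-and-uniqueness argument, parallel to the proof of Theorem~\ref{thm:partition}. First I would write the adjacency matrix of $G^{(m)}$ in Kronecker form. Order the vertices as $v_{1,1},\dots,v_{1,m},v_{2,1},\dots$. Then Definition~\ref{def:m_product} gives
\[
A^{(m)} = A \otimes J_m ,
\]
where $J_m$ is the $m\times m$ all-ones matrix. The diagonal blocks are zero because $A_{ii}=0$, so each fibre $\{v_{i,1},\dots,v_{i,m}\}$ is indeed independent.

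From this form the spectral claim follows immediately. The eigenvalues of $J_m$ are $m$ (once) and $0$ (with multiplicity $m-1$). Hence the spectrum of $A\otimes J_m$ consists of $m\lambda_1,\dots,m\lambda_N$ together with zeros. Since $\lambda_N<0$ for a connected $G$, we get $\lambda_N(A^{(m)}) = m\lambda_N(A)$. So the admissible range of $G^{(m)}$ is $(0,-1/(m\lambda_N))$, and $\sigma$ lies in it exactly when $\sigma m\in(0,-1/\lambda_N)$.

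Next comes the ansatz. Put $\boldsymbol{\Gamma}^{(m)} = \boldsymbol{\Gamma}^{(\sigma m)}\otimes \mathbf{1}_m$, i.e.\ assign to every copy $v_{i,j}$ the value $\Gamma_i^{(\sigma m)}$. I would check the defining equation $(\sigma A^{(m)}+I)\boldsymbol{\Gamma}^{(m)}=\sigma\theta A^{(m)}\mathbf{1}$ using the mixed-product rule and $J_m\mathbf{1}_m = m\mathbf{1}_m$. This gives
\[
A^{(m)}(\boldsymbol{x}\otimes\mathbf{1}_m) = (mA\boldsymbol{x})\otimes\mathbf{1}_m .
\]
The system therefore reduces to $\bigl((\sigma m)A+I\bigr)\boldsymbol{\Gamma}^{(\sigma m)} = (\sigma m)\theta A\mathbf{1}$, tensored with $\mathbf{1}_m$. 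This is precisely the DomiRank equation of $G$ at parameter $\sigma m$, so it holds by Theorem~\ref{thm:analytical}.

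Equivalently, at vertex level, equation \eqref{eq:node_solution} for $v_{i,j}$ reads $\Gamma = \sigma\sum_{k\in N(i)} m(\theta-\Gamma_k)$. This is visibly \eqref{eq:node_solution} for $G$ with $\sigma$ replaced by $\sigma m$.

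Finally, I would invoke uniqueness. The matrix $\sigma A^{(m)}+I$ is invertible on the stated range, since by the spectral computation all its eigenvalues are positive there. The ansatz therefore is the DomiRank vector of $G^{(m)}$.

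The only step needing care is the spectral bookkeeping: confirming that the extra zero eigenvalues do not affect $\lambda_N$, and that the two admissible ranges match exactly. This uses $\lambda_N<0$, which I would quote from Theorem~\ref{thm:analytical} for connected $G$ with $N\ge 2$. An alternative is to note that invertibility alone suffices for uniqueness whenever $\sigma m$ is not equal to $-1/\lambda_i$.
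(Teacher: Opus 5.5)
Your proof is correct, but it reaches the result by a different mechanism from the paper's.

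\textbf{The paper's route.} It also starts from $A^{(m)} = A\otimes J_m$. It then diagonalizes $J_m = U\Lambda U^T$ and conjugates $\sigma A^{(m)}+I$ by $I_n\otimes U$ into block-diagonal form, with one block $I_n+\sigma m A$ and all others $I_n$. From this it reads off the resolvent applied to $A^{(m)}\mathbf{1}_{mn}$ directly, with no appeal to uniqueness.

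\textbf{Your route.} You use only $J_m\mathbf{1}_m = m\mathbf{1}_m$. You verify that the fibre-constant vector $\boldsymbol{\Gamma}^{(\sigma m)}\otimes\mathbf{1}_m$ solves the system and conclude by invertibility. In effect this is Theorem~\ref{thm:partition} applied to the fibre partition of $G^{(m)}$. That partition is equitable with quotient matrix $D = mA$ and block degrees $\boldsymbol{s} = m\boldsymbol{k}$, so the quotient system \eqref{eq:partition_system} is literally the DomiRank system of $G$ at parameter $\sigma m$. This exhibits the $m$-product theorem as a corollary of the equitable-partition theorem.

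\textbf{What each approach buys.} Your route buys economy: no inverse is computed and no change of basis has to be tracked. This matters, because the paper's displayed identity for $[\sigma(A\otimes\Lambda)+I_n\otimes I_m]^{-1}A^{(m)}\mathbf{1}_{mn}$ is only correct once the conjugating factors $I_n\otimes U$ and $I_n\otimes U^T$ are restored. You also prove $\lambda_N(A^{(m)}) = m\lambda_N(A)$ explicitly, which the paper states but leaves implicit in its block-diagonalization. The paper's route, in return, exposes the full spectrum of $\sigma A^{(m)}+I$ in one computation: eigenvalues $1+\sigma m\lambda_i$ and $1$. Admissibility, invertibility and positive definiteness all follow from that, and it gives the action of the resolvent on arbitrary vectors, not only fibre-constant ones.

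\textbf{A small remark.} To get $\lambda_N<0$ you need only that $G$ has an edge (trace zero and $A\neq 0$), not that it is connected. That weaker condition is the one that matches the theorem's hypothesis ``let $G$ be a graph.''
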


\begin{proof}
By Definition~\ref{def:m_product}, the adjacency matrix of $G^{(m)}$ is
\begin{align}\label{eq:kron}
A^{(m)} = A \otimes J_m,
\end{align}
where $\otimes$ denotes the Kronecker product and $J_m \in \mathbb{R}^{m \times m}$ is the all-ones matrix. That is, $A^{(m)}$ is an $mn \times mn$ matrix whose $(i, k)$-block is $A_{ik} \cdot J_m$.

By equation \eqref{eq:solution}, the DomiRank vector of $G^{(m)}$ is
\[
\boldsymbol{\Gamma}^{(m)} = \theta \sigma \left(\sigma A^{(m)} + I_{mn}\right)^{-1} A^{(m)} \mathbf{1}_{mn}.
\]

Since $J_m \mathbf{1}_m = m \mathbf{1}_m$,
\[
A^{(m)} \mathbf{1}_{mn} = \left(A \otimes J_m\right)\left(\mathbf{1}_n \otimes \mathbf{1}_m\right) = \left(A \mathbf{1}_n\right) \otimes \left(m \mathbf{1}_m\right).
\]

Let $U$ be an orthogonal matrix such that $J_m = U \Lambda U^T$, where $\Lambda$ is diagonal with eigenvalues $m$ (once) and $0$ ($m-1$ times). Then
\[
\begin{aligned}
	\sigma A^{(m)} + I_{mn} &= \sigma \left(A \otimes J_m\right) + I_n \otimes I_m \\
	&= \left(I_n \otimes U\right) \left[\sigma \left(A \otimes \Lambda\right) + I_n \otimes I_m\right] \left(I_n \otimes U^T\right).
\end{aligned}
\]

The matrix $\sigma(A \otimes \Lambda) + I_n \otimes I_m$ is block diagonal. Since $\Lambda$ has only one nonzero eigenvalue $m$, its inverse contributes only on the block corresponding to eigenvalue $m$, which is $I_n + \sigma m A$; the remaining blocks are $I_n$. Hence
\[
\left[\sigma(A \otimes \Lambda) + I_n \otimes I_m\right]^{-1} A^{(m)} \mathbf{1}_{mn} = m(I_n + \sigma m A)^{-1} A \mathbf{1}_n \otimes \mathbf{1}_m.
\]

Therefore
\begin{align}\label{eq:m_product_result}
\boldsymbol{\Gamma}^{(m)} = \theta \sigma m (I_n + \sigma m A)^{-1} A \mathbf{1}_n \otimes \mathbf{1}_m.
\end{align}

Let $\boldsymbol{u} = \theta \sigma m (I_n + \sigma m A)^{-1} A \mathbf{1}_n$; then $\boldsymbol{u} \otimes \mathbf{1}_m$ has the structure
\[
\boldsymbol{u} \otimes \mathbf{1}_m = \begin{bmatrix}
u_1 \mathbf{1}_m \\
u_2 \mathbf{1}_m \\
\vdots \\
u_n \mathbf{1}_m
\end{bmatrix}.
\]

That is, for each group $i$, all vertices $v_{i,j}$ ($j = 1, \dots, m$) share the same DomiRank value
\[
\Gamma_{v_{i,j}}^{(m)} = u_i = \left[ \theta \sigma m (I_n + \sigma m A)^{-1} A \mathbf{1}_n \right]_i.
\]

But $\theta \sigma m (I_n + \sigma m A)^{-1} A \mathbf{1}_n$ is precisely the DomiRank vector $\boldsymbol{\Gamma}^{(\sigma m)}$ of $G$ at competition parameter $\sigma m$, so
\[
\Gamma_{v_{i,j}}^{(m)} = \Gamma_i^{(\sigma m)}.
\]
\end{proof}

\begin{theorem}\label{thm:local_regular}
Let $G$ be a graph of order $n$, $x$ a vertex, and $\boldsymbol{\Gamma}$ the DomiRank vector of $G$. Take $\theta = 1$ and assume $d(x) = d(y) = k$ for all $y \in N(x)$.

 \begin{enumerate}
	\item[(i)] If $\Gamma_y = \dfrac{\sigma k}{\sigma k + 1}$ for all $y \in N(x)$, then $\Gamma_x = \dfrac{\sigma k}{\sigma k + 1}$.
	\item[(ii)] Given $y \in N(x)$, if $\Gamma_z = \Gamma_x = \dfrac{\sigma k}{\sigma k + 1}$ for all $z \in N(x)\setminus\{y\}$, then $\Gamma_y = \dfrac{\sigma k}{\sigma k + 1}$.
	\item[(iii)] If $\Gamma_x = \dfrac{\sigma k}{\sigma k + 1}$ and all vertices in $N(x)$ have equal DomiRank values, then $\Gamma_y = \dfrac{\sigma k}{\sigma k + 1}$ for every $y \in N(x)$.
 \end{enumerate}
\end{theorem}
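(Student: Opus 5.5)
All three parts follow from the vertex equation \eqref{eq:node_solution} with $\theta=1$, applied at $x$ or at $y$. Throughout write $c=\dfrac{\sigma k}{\sigma k+1}$ and use the identity
\[
\sigma k(1-c)=\frac{\sigma k}{\sigma k+1}=c,
\]
which is the fixed-point equation behind Lemma~\ref{lemma:regular}.

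For (i) we apply \eqref{eq:node_solution} at $x$. Since $|N(x)|=d(x)=k$ and every neighbour has value $c$,
\[
\Gamma_x=\sigma\sum_{y\in N(x)}(1-\Gamma_y)=\sigma k(1-c)=c .
\]

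For (ii) we again use the equation at $x$, now treating $\Gamma_y$ as the unknown. The hypotheses give $\Gamma_x=c$ and $\Gamma_z=c$ for the $k-1$ vertices $z\in N(x)\setminus\{y\}$, so
\[
c=\sigma(k-1)(1-c)+\sigma(1-\Gamma_y).
\]
Subtracting this from $c=\sigma k(1-c)$ gives $\sigma(1-c)=\sigma(1-\Gamma_y)$. Since $\sigma>0$, this yields $\Gamma_y=c$. This route only needs $d(x)=k$. Applying the equation at $y$ instead would also use $d(y)=k$, but it would require knowing the neighbours of $y$ other than $x$, which the hypotheses do not control. So the equation at $x$ is the one to use.

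For (iii), let $\gamma$ be the common value of the neighbours. The equation at $x$ reads $c=\sigma k(1-\gamma)$. Comparing with $c=\sigma k(1-c)$ and dividing by $\sigma k>0$ gives $\gamma=c$.

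The only real obstacle is bookkeeping. In (ii) the hypothesis should be read as a condition on $N(x)\setminus\{y\}$, together with $\Gamma_x=c$. In each part one must also check that only $d(x)=k$ is actually used, so the assumption on the degrees of the neighbours is never invoked.
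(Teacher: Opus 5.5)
Your proposal is correct and follows the paper's route: all three parts come from the vertex equation \eqref{eq:node_solution} applied at $x$, and subtracting against the identity $\sigma k(1-c)=c$ is just a tidier arrangement of the paper's direct solving for $\Gamma_x$ or $\Gamma_y$. Your remark that the hypothesis $d(y)=k$ on the neighbours is never used applies equally to the paper's proof.
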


\begin{proof}
\begin{enumerate}
 \item[(i)] By equation \eqref{eq:node_solution} (with $\theta = 1$),
\begin{align*}
\Gamma_x &= \sigma k - \sigma \sum_{j \in N(x)} \Gamma_j \\
&= \sigma k - \sigma k \cdot \frac{\sigma k}{\sigma k + 1}
 = \sigma k \left(1 - \frac{\sigma k}{\sigma k + 1}\right)
 = \frac{\sigma k}{\sigma k + 1}.
\end{align*}

\item[(ii)]
\[
\frac{\sigma k}{\sigma k + 1} = \Gamma_x = \sigma k - \sigma (k-1) \frac{\sigma k}{\sigma k + 1} - \sigma \Gamma_y.
\]
Solving,
\[
\sigma \Gamma_y = \sigma k - \frac{\sigma k}{\sigma k + 1} - \sigma(k-1)\frac{\sigma k}{\sigma k + 1} = \sigma k - \frac{\sigma k(1 + \sigma k - \sigma)}{\sigma k + 1},
\]
\[
\Gamma_y = k - \frac{k(1 + \sigma k - \sigma)}{\sigma k + 1} = \frac{k(\sigma k + 1) - k(1 + \sigma k - \sigma)}{\sigma k + 1} = \frac{\sigma k}{\sigma k + 1}.
\]

\item[(iii)]
\[
\frac{\sigma k}{\sigma k + 1} = \Gamma_x = \sigma k - \sigma k \Gamma_y,
\]
\[
\sigma k \Gamma_y = \sigma k - \frac{\sigma k}{\sigma k + 1} = \frac{\sigma^2 k^2}{\sigma k + 1},
\]
\[
\Gamma_y = \frac{\sigma k}{\sigma k + 1}.
\]
\end{enumerate}
\end{proof}

\section{Specificity of the Parameter $\sigma$ and Experiments on Real Networks}
\label{sec:sigma}

The previous sections established the theoretical framework of DomiRank from the perspectives of symmetry, distributional measures, and structural transformations. As the only tunable parameter of the model, the competition parameter $\sigma$ determines to what extent the evaluation results change and according to what laws --- questions that decide the practical value of DomiRank. This section first gives an analytic characterization of the sensitivity and the limiting behavior of $\sigma$, and then conducts systematic experiments on four real-world networks to verify and exhibit the specificity of $\sigma$.

\subsection{Sensitivity and limiting behavior of $\sigma$}

\begin{proposition}[Sensitivity formula of $\sigma$]\label{prop:sensitivity}
Let $\boldsymbol{\Gamma}(\sigma)$ be the DomiRank vector given by equation \eqref{eq:solution}. Then
\begin{align}\label{eq:sensitivity}
\frac{d\boldsymbol{\Gamma}}{d\sigma} = \theta\left(\sigma A + I\right)^{-2} A \mathbf{1}_{N \times 1}.
\end{align}
\end{proposition}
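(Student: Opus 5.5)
Throughout, write $M(\sigma) = \sigma A + I$, which is invertible on the admissible range $\sigma \in (0, -1/\lambda_N)$ by Theorem~\ref{thm:analytical}. I would differentiate the closed form \eqref{eq:solution}, $\boldsymbol{\Gamma}(\sigma) = \theta\sigma M(\sigma)^{-1} A\mathbf{1}$, directly. The whole argument rests on one algebraic identity: the two $\sigma$-dependent factors combine because $A$ commutes with $M(\sigma)^{-1}$.

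\textbf{Step 1 (product rule).} Since $M(\sigma)$ is invertible and depends smoothly on $\sigma$, its inverse is differentiable with
\[
\frac{d}{d\sigma}M(\sigma)^{-1} = -M(\sigma)^{-1} M'(\sigma) M(\sigma)^{-1} = -M(\sigma)^{-1} A M(\sigma)^{-1}.
\]
This follows from differentiating $M M^{-1} = I$. Applying the product rule to $\sigma \cdot M^{-1}A\mathbf{1}$ then gives
\[
\frac{d\boldsymbol{\Gamma}}{d\sigma} = \theta M^{-1}A\mathbf{1} - \theta\sigma M^{-1} A M^{-1} A\mathbf{1}.
\]

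\textbf{Step 2 (simplification).} $M$ is a polynomial in $A$, so $A$ commutes with $M^{-1}$. Hence $\sigma M^{-1}AM^{-1} = M^{-2}(\sigma A)$. Factoring out $M^{-2}$, and writing the first term as $\theta M^{-2}M A\mathbf{1}$, gives
\[
\frac{d\boldsymbol{\Gamma}}{d\sigma} = \theta M^{-2}\bigl(M - \sigma A\bigr)A\mathbf{1} = \theta M^{-2} I A\mathbf{1},
\]
which is \eqref{eq:sensitivity}.

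\textbf{Cross-check.} As an alternative, I would note the spectral form. With $A = \sum_r \lambda_r \boldsymbol{x}_r\boldsymbol{x}_r^T$, the solution is $\boldsymbol{\Gamma} = \theta\sum_r \frac{\sigma\lambda_r}{1+\sigma\lambda_r}\boldsymbol{x}_r\boldsymbol{x}_r^T\mathbf{1}$. Each scalar factor differentiates to $\lambda_r/(1+\sigma\lambda_r)^2$, which matches the formula componentwise.

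\textbf{Main obstacle.} There is no real obstacle here. The only points needing care are justifying differentiability of $M^{-1}$ on the open admissible interval and making the commutation in Step 2 explicit.
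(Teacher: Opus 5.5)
Your proposal is correct and follows the same route as the paper: the product rule with $\frac{d}{d\sigma}M^{-1} = -M^{-1}AM^{-1}$, followed by the collapse $M - \sigma A = I$. The only cosmetic difference is that the paper factors $\theta M^{-1}(I - \sigma A M^{-1})A\mathbf{1}$ and writes $I - \sigma A M^{-1} = (M-\sigma A)M^{-1}$. That way it never needs the commutation of $A$ with $M^{-1}$ that you invoke, although that commutation is true and your version is equally valid.
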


\begin{proof}
Write $M(\sigma) = \sigma A + I$. From $\boldsymbol{\Gamma} = \theta\sigma M^{-1} A\mathbf{1}$ and the inverse-matrix derivative formula $\dfrac{d M^{-1}}{d\sigma} = -M^{-1} A M^{-1}$,
\[
\begin{split}
\frac{d\boldsymbol{\Gamma}}{d\sigma}
&= \theta M^{-1} A\mathbf{1} - \theta\sigma M^{-1} A M^{-1} A\mathbf{1} \\
&= \theta M^{-1}\left(I - \sigma A M^{-1}\right) A\mathbf{1}
 = \theta M^{-1}\left(M - \sigma A\right) M^{-1} A\mathbf{1}.
\end{split}
\]
Since $M - \sigma A = I$, we obtain
\[
\dfrac{d\boldsymbol{\Gamma}}{d\sigma} = \theta M^{-2} A\mathbf{1} = \theta\left(\sigma A + I\right)^{-2} A\mathbf{1}.
\]
\end{proof}

Formula \eqref{eq:sensitivity} shows that the sensitivity of DomiRank with respect to $\sigma$ is given by the ``second-order suppression operator'' $(\sigma A + I)^{-2}$ acting on the degree vector: $\sigma$ not only changes the DomiRank values, but also reshapes the relative ranking of nodes in a graph-structure-dependent way. A natural question is whether there exist graphs for which $\sigma$ is ``ineffective'' (i.e., the ranking does not change with $\sigma$). The following proposition gives a complete characterization.

\begin{proposition}[Characterization of $\sigma$-invariance]\label{prop:invariance}
Let $G$ be a connected graph on $N \ge 2$ vertices and take $\theta = 1$. (The case $N = 1$ is excluded because $\boldsymbol{\Gamma} \equiv \boldsymbol{0}$ and the normalized distribution is then undefined.) Then the normalized DomiRank distribution is independent of $\sigma$ on $\left(0, -\dfrac{1}{\lambda_N}\right)$ if and only if the degree vector $\boldsymbol{k} = A\mathbf{1}_{N \times 1}$ is an eigenvector of the adjacency matrix $A$ (in which case the corresponding eigenvalue must be the largest eigenvalue $\lambda_1$, i.e., $G$ is a harmonic graph). In particular, the normalized DomiRank distribution, the degree-centrality distribution, and the principal-eigenvector-centrality distribution then coincide.
\end{proposition}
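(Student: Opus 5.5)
The plan is to diagonalize $A = \sum_{r} \lambda_r \boldsymbol{u}_r \boldsymbol{u}_r^T$ with an orthonormal eigenbasis and expand the degree vector as $\boldsymbol{k} = \sum_r c_r \boldsymbol{u}_r$. With $\theta=1$, Theorem~\ref{thm:analytical} gives
\[
\boldsymbol{\Gamma}(\sigma) = \sum_{r} \frac{\sigma c_r}{1+\sigma\lambda_r}\, \boldsymbol{u}_r .
\]
Since $A$ is a nonnegative irreducible matrix and $\boldsymbol{k} \neq \boldsymbol{0}$ for $N \ge 2$, we have $\mathbf{1}^T\boldsymbol{\Gamma}$ as a rational function that is not identically zero. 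So the normalized distribution $P(\sigma) = \boldsymbol{\Gamma}(\sigma)/(\mathbf{1}^T\boldsymbol{\Gamma}(\sigma))$ is well defined away from finitely many points.

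\emph{Sufficiency.} If $A\boldsymbol{k} = \mu \boldsymbol{k}$, then $(\sigma A+I)^{-1}\boldsymbol{k} = \boldsymbol{k}/(1+\sigma\mu)$, so $\boldsymbol{\Gamma}(\sigma) = \frac{\sigma}{1+\sigma\mu}\boldsymbol{k}$. This is a scalar multiple of $\boldsymbol{k}$, and the normalization gives $P(\sigma)=\boldsymbol{k}/2m$ for every admissible $\sigma$. Since $\boldsymbol{k}$ is entrywise positive on a connected graph, Perron--Frobenius forces $\mu=\lambda_1$: the only eigenvector with a sign-constant direction is the Perron vector. Hence $\boldsymbol{k}$ is also the principal eigenvector, and the three normalized distributions coincide.

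\emph{Necessity.} Suppose $P(\sigma)$ is constant, say $P(\sigma)=\boldsymbol{p}$. Then $\boldsymbol{\Gamma}(\sigma)=s(\sigma)\boldsymbol{p}$ for a scalar function $s$. By Theorem~\ref{thm:sigma_zero}, $\boldsymbol{\Gamma}(\sigma)/\sigma \to \boldsymbol{k}$ as $\sigma\to0^+$, which forces $\boldsymbol{p}=\boldsymbol{k}/2m$. Thus $\boldsymbol{\Gamma}(\sigma)=\tilde s(\sigma)\boldsymbol{k}$. Substituting into the defining equation $(\sigma A+I)\boldsymbol{\Gamma}=\sigma\boldsymbol{k}$ gives
\[
\tilde s(\sigma)\,\sigma A\boldsymbol{k} = (\sigma - \tilde s(\sigma))\boldsymbol{k}.
\]
Choose any admissible $\sigma$ with $\tilde s(\sigma)\neq 0$; such a $\sigma$ exists since $\boldsymbol{\Gamma}\not\equiv0$. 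Dividing through yields $A\boldsymbol{k} = \mu\boldsymbol{k}$ with $\mu = (\sigma-\tilde s)/(\sigma\tilde s)$, so $\boldsymbol{k}$ is an eigenvector. The Perron argument above then identifies $\mu=\lambda_1$. As a cross-check, $\mu$ must equal $\sum k_i^2/\sum k_i$, matching the harmonic constant used in Proposition~\ref{prop:initial_entropy}.

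The main obstacle is rigor at points where $\mathbf{1}^T\boldsymbol{\Gamma}$ vanishes or where entries are non-positive, so that "normalized distribution" is delicate. I would handle this by working with the direction of $\boldsymbol{\Gamma}(\sigma)$ on the open set where the sum is nonzero. Because every entry is rational in $\sigma$, constancy on that dense set is equivalent to proportionality $\boldsymbol{\Gamma}(\sigma)\parallel\boldsymbol{k}$ identically. Alternatively, the spectral expansion gives the same conclusion directly: the functions $\sigma/(1+\sigma\lambda_r)$ for distinct eigenvalues are linearly independent. Proportionality to a fixed vector therefore forces all $c_r$ outside a single eigenspace to vanish, i.e.\ $\boldsymbol{k}$ lies in one eigenspace.
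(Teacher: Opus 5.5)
Your proposal is correct and follows essentially the paper's route: Perron--Frobenius for sufficiency, and for necessity the small-$\sigma$ Neumann limit to force $\boldsymbol{p}\propto\boldsymbol{k}$, followed by $(\sigma A+I)^{-1}\boldsymbol{k}\propto\boldsymbol{k}$, which you make explicit by rearranging $(\sigma A+I)\boldsymbol{\Gamma}=\sigma\boldsymbol{k}$ at a single $\sigma$. Your extra care about the finitely many $\sigma$ where $\mathbf{1}^T\boldsymbol{\Gamma}$ vanishes, and your spectral linear-independence variant, tighten the argument without changing it; the paper sidesteps the first issue by normalizing with $\|\boldsymbol{\Gamma}\|$ in the limit step.
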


\begin{proof}
\noindent\textbf{Sufficiency:} Suppose $A\boldsymbol{k} = \lambda \boldsymbol{k}$. Since $G$ is connected, every component of $\boldsymbol{k}$ (a vertex degree) is a positive integer; by the Perron--Frobenius theorem, a positive eigenvector must correspond to $\lambda_1$, so $\lambda = \lambda_1$. Then
\[
\left(\sigma A + I\right)\boldsymbol{k} = \left(\sigma\lambda_1 + 1\right)\boldsymbol{k}
\quad\Longrightarrow\quad
\left(\sigma A + I\right)^{-1}\boldsymbol{k} = \frac{1}{\sigma\lambda_1 + 1}\boldsymbol{k},
\]
and hence
\[
\boldsymbol{\Gamma} = \theta\sigma\left(\sigma A + I\right)^{-1} A\mathbf{1} = \frac{\theta\sigma}{\sigma\lambda_1 + 1}\boldsymbol{k} \propto \boldsymbol{k}.
\]
The normalized distribution is always the degree distribution, independent of $\sigma$. Since $\boldsymbol{k}$ is the Perron vector, the three distributions coincide.

\noindent\textbf{Necessity:} Suppose the normalized DomiRank distribution is the constant vector $\boldsymbol{p}$, i.e., $\boldsymbol{\Gamma}(\sigma) = c(\sigma)\,\boldsymbol{p}$. By the Neumann-series expansion in the proof of Theorem~\ref{thm:sigma_zero}, as $\sigma \to 0$, $\boldsymbol{\Gamma} = \theta\sigma\boldsymbol{k} + O(\sigma^2)$; hence $\boldsymbol{\Gamma}/\|\boldsymbol{\Gamma}\| \to \boldsymbol{k}/\|\boldsymbol{k}\|$, i.e., $\boldsymbol{p} \propto \boldsymbol{k}$. Thus $(\sigma A + I)^{-1}\boldsymbol{k} \propto \boldsymbol{k}$ for every $\sigma$, meaning $\boldsymbol{k}$ is an eigenvector of $(\sigma A + I)^{-1}$ and hence of $A$.
\end{proof}

\begin{remark}
Lemma~\ref{lemma:regular} shows that the normalized DomiRank is uniform on regular graphs; Proposition~\ref{prop:invariance} gives a finer characterization: the exact graph class on which $\sigma$-invariance holds is the class of harmonic graphs, i.e., graphs whose degree vector is an eigenvector of the adjacency matrix (the classical notion of harmonic, or pseudo-regular, graphs \cite{p31,p32}). Regular graphs form a proper subclass: e.g., the 7-vertex harmonic tree with degree sequence $(3,2,2,2,1,1,1)$ satisfies $A\boldsymbol{k} = 2\boldsymbol{k}$ but is not regular. By contrast, biregular bipartite graphs are harmonic if and only if they are regular: for a biregular bipartite graph whose bipartition degrees are $a$ and $b$, $A\boldsymbol{k} = ab\,\mathbf{1}$, which is parallel to $\boldsymbol{k}$ only when $a = b$. In other words, the specificity of $\sigma$ manifests precisely on non-harmonic graphs.
\end{remark}

\begin{proposition}[Limiting behavior as $\sigma \to \sigma^*$]\label{prop:limit}
Let $G$ be a connected graph whose smallest eigenvalue $\lambda_N$ is simple, with corresponding unit eigenvector $\boldsymbol{v}_N$, and suppose $c_N := \boldsymbol{v}_N^T A\mathbf{1}_{N \times 1} \neq 0$. Then as $\sigma \to \sigma^* := -\dfrac{1}{\lambda_N}$,
\begin{equation}\label{eq:limit}
\frac{\boldsymbol{\Gamma}(\sigma)}{\|\boldsymbol{\Gamma}(\sigma)\|_2} \longrightarrow \mathrm{sgn}(c_N)\,\boldsymbol{v}_N .
\end{equation}
i.e., under high competition intensity, DomiRank converges to the direction of the\linebreak[4] smallest-eigenvalue eigenvector.
\end{proposition}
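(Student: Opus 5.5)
We diagonalize $A$ and read off which spectral component of $\boldsymbol{\Gamma}(\sigma)$ blows up as $\sigma \to \sigma^*$.

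Write the spectral decomposition $A = \sum_{r=1}^{N} \lambda_r \boldsymbol{v}_r \boldsymbol{v}_r^T$ with an orthonormal eigenbasis, and expand the degree vector as $A\mathbf{1} = \sum_r c_r \boldsymbol{v}_r$, where $c_r = \boldsymbol{v}_r^T A\mathbf{1}$. Then Theorem~\ref{thm:analytical} gives the exact representation
\[
\boldsymbol{\Gamma}(\sigma) = \theta\sigma \sum_{r=1}^{N} \frac{c_r}{1+\sigma\lambda_r}\,\boldsymbol{v}_r ,
\]
valid on the whole admissible interval $(0,\sigma^*)$. No Neumann series is needed here, so the argument is not restricted to $\sigma < 1/\lambda_1$.

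Next we isolate the singular term. As $\sigma \to \sigma^*$ from below, $1+\sigma\lambda_N = 1 - \sigma/\sigma^* \to 0^+$. Since $\lambda_N$ is simple, every other eigenvalue satisfies $\lambda_r > \lambda_N$. Hence $1+\sigma\lambda_r \to 1 - \lambda_r/\lambda_N$, which is strictly positive: for $\lambda_r<0$ it is positive because $\lambda_r > \lambda_N$, and for $\lambda_r \ge 0$ it is at least $1$. So the remaining coefficients $\theta\sigma c_r/(1+\sigma\lambda_r)$ stay bounded, uniformly near $\sigma^*$.

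Multiplying through by $\varepsilon(\sigma) := 1+\sigma\lambda_N > 0$ gives
\[
\varepsilon\,\boldsymbol{\Gamma}(\sigma) = \theta\sigma c_N \boldsymbol{v}_N + \varepsilon\,\boldsymbol{R}(\sigma), \qquad \|\boldsymbol{R}(\sigma)\|_2 \text{ bounded}.
\]
This tends to $\theta\sigma^* c_N \boldsymbol{v}_N$, which is nonzero because $c_N \neq 0$, $\theta=1>0$ and $\sigma^*>0$. Normalization is invariant under the positive scalar $\varepsilon$, and $\boldsymbol{x}\mapsto\boldsymbol{x}/\|\boldsymbol{x}\|_2$ is continuous away from $\boldsymbol{0}$. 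Therefore
\[
\frac{\boldsymbol{\Gamma}(\sigma)}{\|\boldsymbol{\Gamma}(\sigma)\|_2} = \frac{\varepsilon\boldsymbol{\Gamma}}{\|\varepsilon\boldsymbol{\Gamma}\|_2} \to \frac{c_N\boldsymbol{v}_N}{|c_N|} = \mathrm{sgn}(c_N)\,\boldsymbol{v}_N ,
\]
as claimed.

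The main point needing care is the bookkeeping of signs and simplicity. The positivity of $\varepsilon$ is what fixes the sign $\mathrm{sgn}(c_N)$ rather than an ambiguous $\pm$. Simplicity of $\lambda_N$ guarantees that only a single direction diverges; otherwise the limit would be the projection of $A\mathbf{1}$ onto the whole $\lambda_N$-eigenspace. The hypothesis $c_N\neq0$ excludes the degenerate case, such as harmonic graphs where $A\mathbf{1}\parallel\boldsymbol{v}_1$ and $c_N=0$, in which $\boldsymbol{\Gamma}$ stays bounded and the limit direction is different. I would remark on this degenerate case after the proof, consistent with Proposition~\ref{prop:invariance}.
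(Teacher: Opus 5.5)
Your proof is correct and follows the paper's own argument: you expand $\boldsymbol{\Gamma}(\sigma)=\theta\sigma\sum_r c_r(1+\sigma\lambda_r)^{-1}\boldsymbol{v}_r$ in an orthonormal eigenbasis, observe that by simplicity only the $\lambda_N$ denominator tends to $0^+$ while the others stay bounded away from zero, and then normalize. Rescaling by $\varepsilon=1+\sigma\lambda_N$ and invoking continuity of $\boldsymbol{x}\mapsto\boldsymbol{x}/\|\boldsymbol{x}\|_2$ away from $\boldsymbol{0}$ is simply a cleaner way to write the paper's $O(1)$ and $(1+o(1))$ bookkeeping.
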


\begin{proof}
Let the spectral decomposition of $A$ be $A = \sum_{i=1}^{N} \lambda_i \boldsymbol{v}_i \boldsymbol{v}_i^T$ and write $c_i = \boldsymbol{v}_i^T A\mathbf{1}$. Then
\[
\boldsymbol{\Gamma}(\sigma) = \theta\sigma\left(\sigma A + I\right)^{-1} A\mathbf{1} = \theta\sigma\sum_{i=1}^{N} \frac{c_i}{1 + \sigma\lambda_i}\boldsymbol{v}_i.
\]
As $\sigma \to \sigma^*$, $1 + \sigma\lambda_N \to 0^+$; since $\lambda_N$ is simple, all other denominators $1 + \sigma^*\lambda_i > 0$ have a positive lower bound. Hence the $N$-th term dominates:
\[
\boldsymbol{\Gamma}(\sigma) = \frac{\theta\sigma^*\, c_N}{1 + \sigma\lambda_N}\,\boldsymbol{v}_N + O(1),
\qquad
\|\boldsymbol{\Gamma}(\sigma)\|_2 = \frac{\theta\sigma^*\, |c_N|}{1 + \sigma\lambda_N}\left(1 + o(1)\right).
\]
Dividing the two expressions yields \eqref{eq:limit}.
\end{proof}

\begin{remark}[Continuous interpolation mechanism of $\sigma$]\label{rem:interpolation}
Combining Theorem~\ref{thm:sigma_zero} with Proposition~\ref{prop:limit}: as $\sigma \to 0$, DomiRank degenerates to degree centrality (purely local information), and as $\sigma \to \sigma^*$, it converges to smallest-eigenvector centrality (global negative-spectral structure, encoding the approximate bipartivity of the network). The competition parameter $\sigma$ makes DomiRank interpolate continuously between these two extremes. We note that DomiRank is a member of the classical family of resolvent (Bonacich-type) centralities: $\boldsymbol{\Gamma} = \theta\sigma(\sigma A + I)^{-1}\boldsymbol{k} = c(I - \alpha A)^{-1}\boldsymbol{\beta}$ with $\alpha = -\sigma$, $\boldsymbol{\beta} = \boldsymbol{k}$, and $c = \theta\sigma$ (cf.\ Katz \cite{p17} and Bonacich \cite{p15,p16}); moreover, PageRank itself contains a tunable damping factor. The interpolation behaviour of such parameter-dependent resolvent measures is not new: Benzi and Klymko \cite{p18} proved that Katz-type and subgraph centralities interpolate between degree and eigenvector centrality as the algebraic parameter varies, and characterized the endpoints of that interpolation. We therefore do not claim the interpolation phenomenon as a new discovery; the specificity of DomiRank within this family lies on three axes. (i) \emph{Endpoint.} The limit \eqref{eq:limit} is the \emph{smallest} eigenvector, not the principal one. This requires the negative-parameter regime $\alpha = -\sigma < 0$, which produces non-positive values, whereas the standard positive-parameter Katz/Bonacich range studied in \cite{p18} cannot reach a signed, negative-spectrum eigenvector; the two endpoints thus lie in different qualitative regimes, not merely at different parameter values. (ii) \emph{Spectral datum and interpretation.} In \cite{p18} the family is parameterized over generic non-negative algebraic parameters with $\boldsymbol{\beta} = \boldsymbol{1}$; here the choice $\boldsymbol{\beta} = \boldsymbol{k}$ (the degree vector) carries the competition interpretation of $\sigma$, so that the resolvent acts on a degree-indexed datum and the competition threshold $\theta$ remains visible in the closed form. (iii) \emph{Degeneracy.} \cite{p18} concerns generic members of the family on graphs admitting the relevant limits; here the family provably degenerates on an identifiable class, namely the harmonic graphs, i.e.\ graphs whose degree vector is an eigenvector of $A$ (Proposition~\ref{prop:invariance}), on which $\sigma$ is entirely inert and the entropy is constant (Lemma~\ref{lemma:polarization}(i)). This exact characterization of where the parameter becomes vacuous is, to our knowledge, not part of the existing theory of resolvent centralities.
\end{remark}

\begin{remark}[Relation to the literature on bipartivity and negative eigenvectors]\label{rem:bipartivity}
The limit object $\boldsymbol{v}_N$ of Proposition~\ref{prop:limit} is not new: the sign structure of the smallest eigenvalue and its eigenvector is a classical tool for characterizing the approximate bipartivity of networks --- Estrada and Rodr\'iguez-Vel\'azquez \cite{p20} proposed bipartivity measures based on the negative spectrum that quantify how far a network is from being bipartite; in social-network analysis with negative relations, Bonacich and Lloyd \cite{p21} used the smallest eigenvector to define a signed status index; the original DomiRank paper \cite{p11} also noted that the solution is dominated by the smallest eigenvector as $\sigma \to -1/\lambda_N$. Therefore, the contribution of DomiRank does not lie in the limiting endpoints themselves, but in the fact that $\sigma$ provides a continuous path connecting degree centrality (local, positive) with smallest-eigenvector centrality (global, signed, carrying bipartivity information): it is the interpolation mechanism, rather than either endpoint, that distinguishes DomiRank from the static spectral indices of \cite{p20,p21}.
\end{remark}

\begin{remark}[Ranking reversal]\label{rem:reversal}
On non-harmonic graphs, the node ranking can reverse as $\sigma$ increases: a node ranked high at low $\sigma$ (determined by the local degree) may be overtaken at high $\sigma$ by a node whose neighbor structure is more dominant. Take the Karate club network as an example (see Table~\ref{tab:networks}): node 3 (degree 10) ranks above node 2 (degree 9) at low $\sigma$, but the two cross at $\sigma/\sigma^* \approx 0.390$; at $\sigma/\sigma^* = 0.9$, $\Gamma_2 = 2.729$ is far above $\Gamma_3 = 1.446$. Figure~\ref{figure11} shows the evolution trajectories of all nodes' DomiRank values with $\sigma$.
\end{remark}

\begin{figure}[!h]
	\centering
	\includegraphics[width=0.75\textwidth]{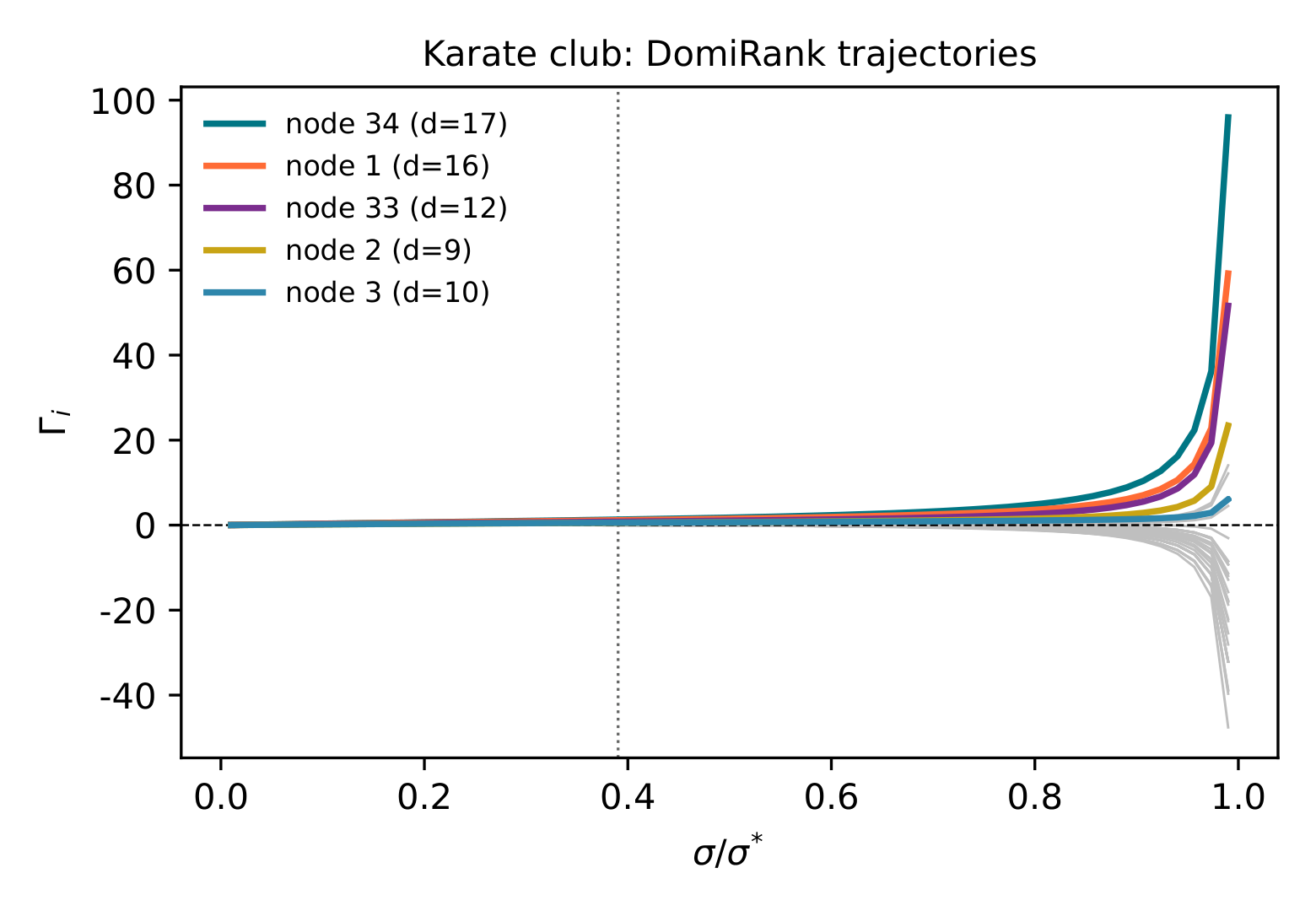}\\
	\caption{Evolution of the DomiRank values of all nodes with $\sigma/\sigma^*$ in the Karate club network.}
	\textbf{Alt text:}\ A family of curves showing the DomiRank value of every vertex of the Karate club network against the relative competition intensity. The trajectory of node 34 rises steeply while those of the other nodes stay low, and the curves of nodes 2 and 3 cross at an intensity near 0.39.
	\label{figure11}
\end{figure}

\subsection{Real-world network data and\\allowbreak experimental setup}

To verify the above theoretical results, we select four classical real-world networks from different domains: the Zachary karate club social network (Karate) \cite{p33}, the bottlenose dolphin social network (Dolphins) \cite{p34}, the co-occurrence network of characters in {\it Les Mis\'erables} (LesMis) \cite{p35}, and the US political books copurchase network (Polbooks) \cite{p36}. All networks are restricted to their largest connected component and treated as unweighted undirected graphs; their statistics and spectral parameters are in Table~\ref{tab:networks}. For each network we take 60 equidistant sampling points on the relative grid $\sigma = x\,\sigma^*$, $x$ equidistant in $[0.01,\; 0.99]$, and solve $(\sigma A + I)\boldsymbol{\Gamma} = \sigma A\mathbf{1}$ directly for the DomiRank vector; ranking correlations use Kendall's $\tau$ \cite{p37} in the tie-aware (variant-$b$) convention, which is the one that matches the comparison of $\sigma=0$ against the degree ranking, where every tie is genuine. Breaking ties instead by an arbitrary positional rank (for instance via \texttt{argsort}) gives noticeably smaller values, on the order of $0.82$ rather than $0.93$ for $\tau(\boldsymbol{\Gamma},\text{degree})$ at the low-$\sigma$ end; the convention is stated here to make the tables reproducible.

\begin{table}[!h]
\centering
\caption{Basic statistics and spectral parameters of the four real-world networks}
\begin{tabular}{lcccccc}
\hline
Network & $N$ & $M$ & $\langle k \rangle$ & $\lambda_1$ & $\lambda_N$ & $\sigma^* = -1/\lambda_N$ \\
\hline
Karate   & 34  & 78  & 4.59 & 6.726  & $-4.487$ & 0.2229 \\
Dolphins & 62  & 159 & 5.13 & 7.194  & $-3.821$ & 0.2617 \\
LesMis   & 77  & 254 & 6.60 & 12.006 & $-5.267$ & 0.1899 \\
Polbooks & 105 & 441 & 8.40 & 11.933 & $-5.201$ & 0.1923 \\
\hline
\end{tabular}
\label{tab:networks}
\end{table}

\subsection{Experimental results and analysis}

\textbf{(1) Verification of entropy monotonicity.} The theoretical entropy \eqref{eq:entropy} is defined for positive entries only, whereas DomiRank admits non-positive values under strong competition (see item (3) below). We therefore compute the experimental entropy on the \emph{effective distribution} obtained by truncating negative DomiRank values to zero and renormalizing (Algorithm 1); in the range where all entries are positive this coincides with the theoretical entropy. Figure~\ref{figure12} shows the normalized effective entropy $H(\sigma)/\log_2 N$ versus $\sigma/\sigma^*$ on the four networks. On all networks the entropy decreases monotonically with $\sigma$, consistent with Theorem~\ref{thm:entropy_decrease} and Conjecture~\ref{conj:entropy}; for example, on Karate it decreases from $0.9188$ ($\sigma/\sigma^* = 0.05$) to $0.4989$ ($\sigma/\sigma^* = 0.9$), showing that increasing the competition intensity continuously concentrates the importance distribution toward a few nodes.

\begin{figure}[!h]
	\centering
	\includegraphics[width=0.75\textwidth]{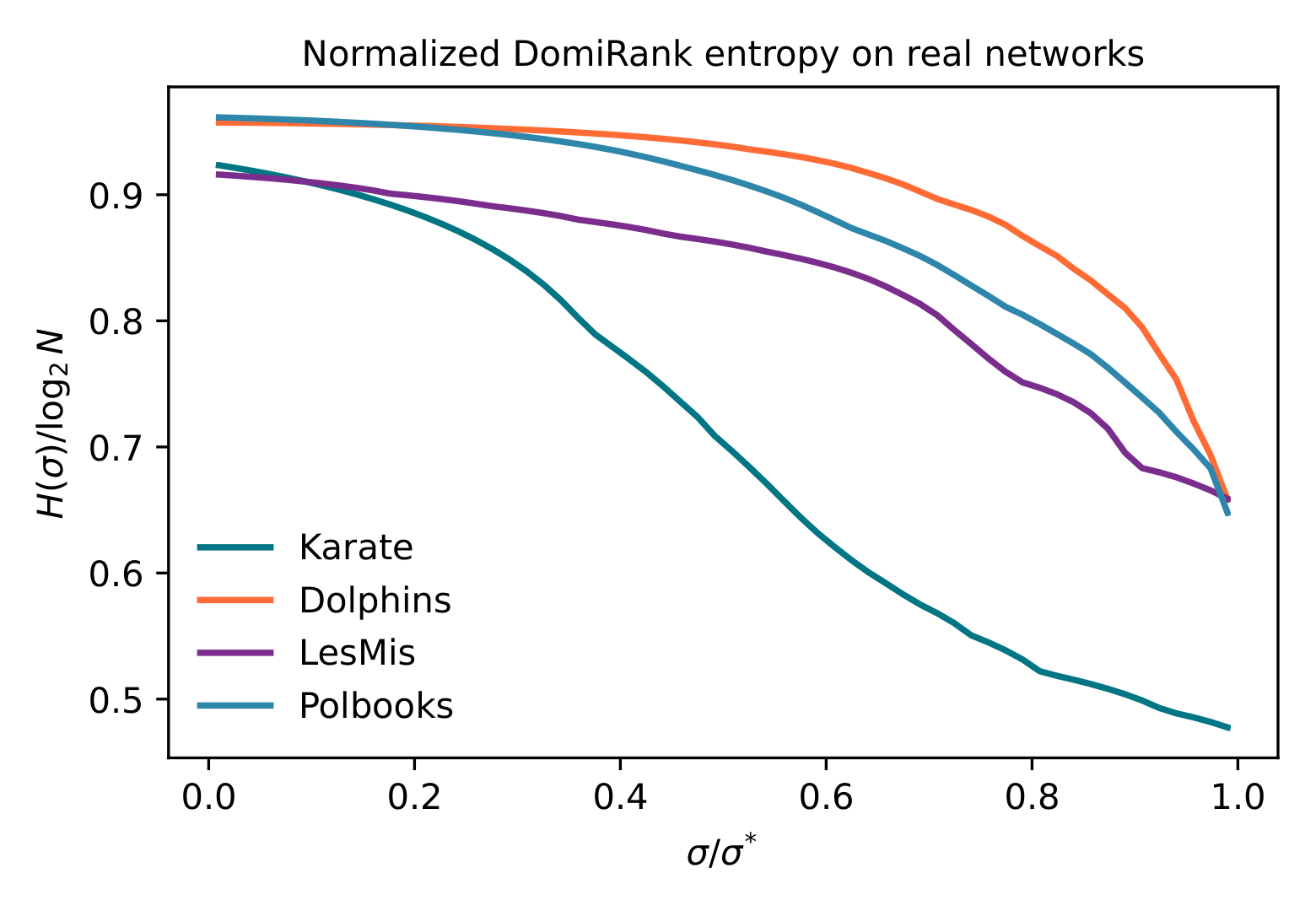}\\
	\caption{Normalized effective DomiRank entropy $H(\sigma)/\log_2 N$ versus $\sigma/\sigma^*$ on the four real-world networks (computed on the effective distribution of Algorithm 1). On the initial segment where all DomiRank entries are positive the effective distribution coincides with the theoretical one and the curves depict the theoretical entropy; the right-hand tails, where negative entries arise, depict the truncated effective entropy.}
	\textbf{Alt text:}\ A line plot of the DomiRank entropy of the four real-world networks against the relative competition intensity. All four curves decrease monotonically across the grid.
	\label{figure12}
\end{figure}

\textbf{(2) Verification of continuous ranking interpolation.} Figure~\ref{figure13} shows the Kendall $\tau$ correlation between the DomiRank ranking and four reference rankings versus $\sigma/\sigma^*$. At the low-$\sigma$ end, $\tau(\boldsymbol{\Gamma}, \text{degree})$ ranges between $0.9264$ and $0.9551$, consistent with (but not equal to $1$, since the sampling floor is $x = 0.01$ rather than $\sigma = 0$, and second-order terms are still present) the degree-centrality limit of Theorem~\ref{thm:sigma_zero}; as $\sigma$ increases, this correlation decays continuously, dropping to $0.1979$--$0.3335$ at $\sigma/\sigma^* = 0.9$. At the same time, $\tau(\boldsymbol{\Gamma}, \mathrm{sgn}(c_N)\boldsymbol{v}_N)$ increases monotonically and reaches $0.9945$ (Karate), $0.8675$ (Dolphins), $0.9720$ (LesMis), and $0.7531$ (Polbooks) at the endpoint of the sampling range ($\sigma/\sigma^* = 0.99$), verifying the smallest-eigenvector limit of Proposition~\ref{prop:limit}. Notably, the correlation between DomiRank and principal-eigenvector centrality decays rapidly toward zero in the medium-to-high $\sigma$ region (e.g., $\tau \approx -0.0764$ on Karate at $\sigma/\sigma^* = 0.9$), showing that DomiRank is not a simple deformation of eigenvector centrality but characterizes a genuinely different dominance structure.

\begin{figure}[!h]
	\centering
	\includegraphics[width=0.95\textwidth]{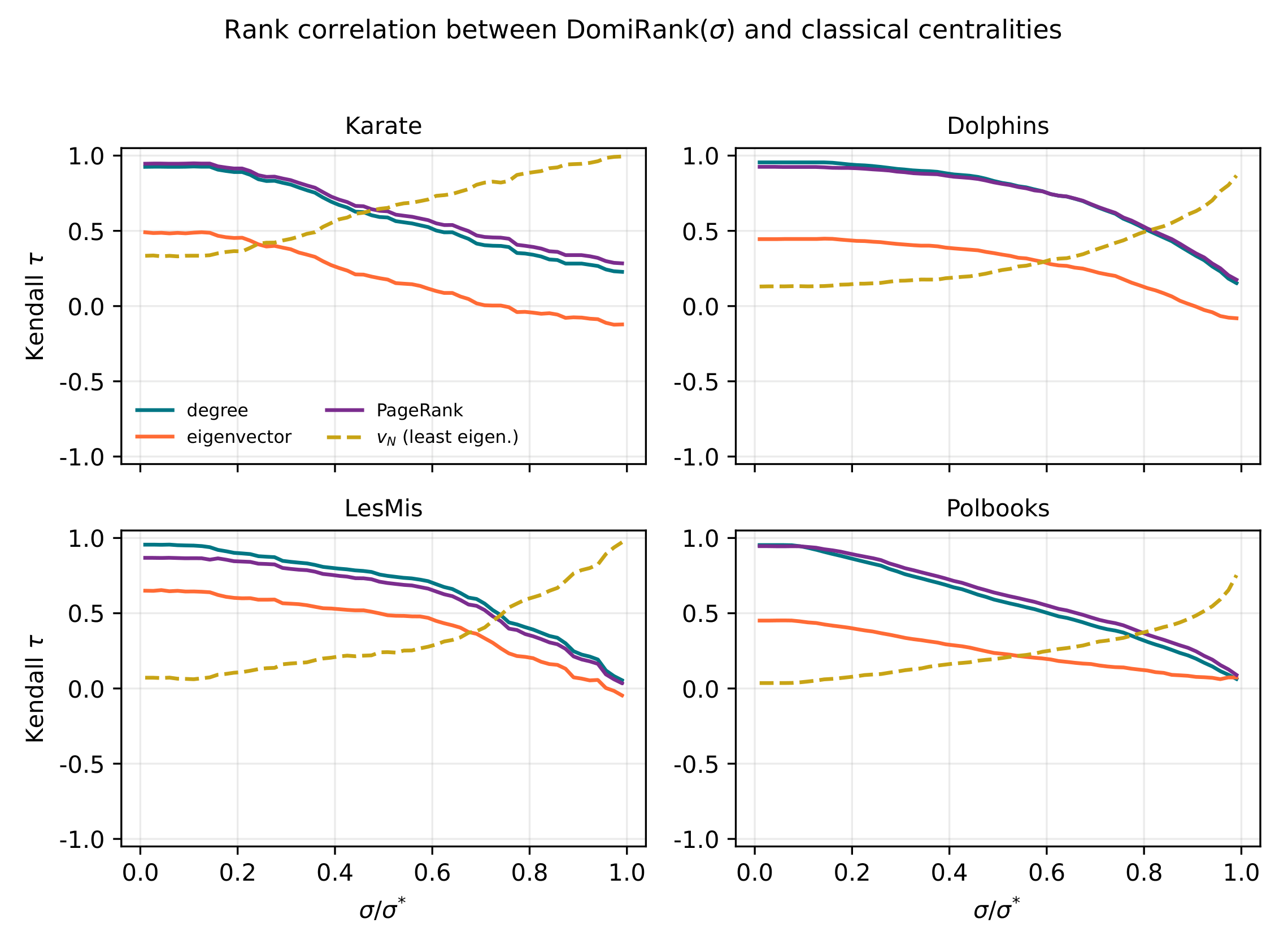}\\
	\caption{Kendall $\tau$ between the DomiRank$(\sigma)$ ranking and the degree, principal-eigenvector, PageRank, and smallest-eigenvector rankings versus $\sigma/\sigma^*$.}
	\textbf{Alt text:}\ A line plot of Kendall's tau between the DomiRank ranking and four reference rankings, against the relative competition intensity. The correlation with the degree ranking falls from about 0.93 to between 0.20 and 0.33, while the correlation with the smallest-eigenvector ranking rises to between 0.75 and 0.99.
	\label{figure13}
\end{figure}

\textbf{(3) Positive-fraction decay and the emergence of the dominance core.} DomiRank allows nodes to take non-positive values --- a distinctive phenomenon separating it from degree, eigenvector, and PageRank centralities, which are always positive; a negative node can be interpreted as having its dominance completely suppressed by its neighbors. Figure~\ref{figure14} shows that the fraction of positive nodes decays continuously from $100\%$ at the low-$\sigma$ end, reaching $23.5\%$ (Karate), $48.4\%$ (Dolphins), $51.9\%$ (LesMis), and $40.0\%$ (Polbooks) at the endpoint of the sampling range. That is, as competition intensifies, only the ``dominance core'' of the network remains, providing real-network evidence for the theoretical discussion in Remark~\ref{rem:pendant_sharp} that DomiRank values can exceed the dominance threshold $\theta$.

\begin{figure}[!h]
	\centering
	\includegraphics[width=0.75\textwidth]{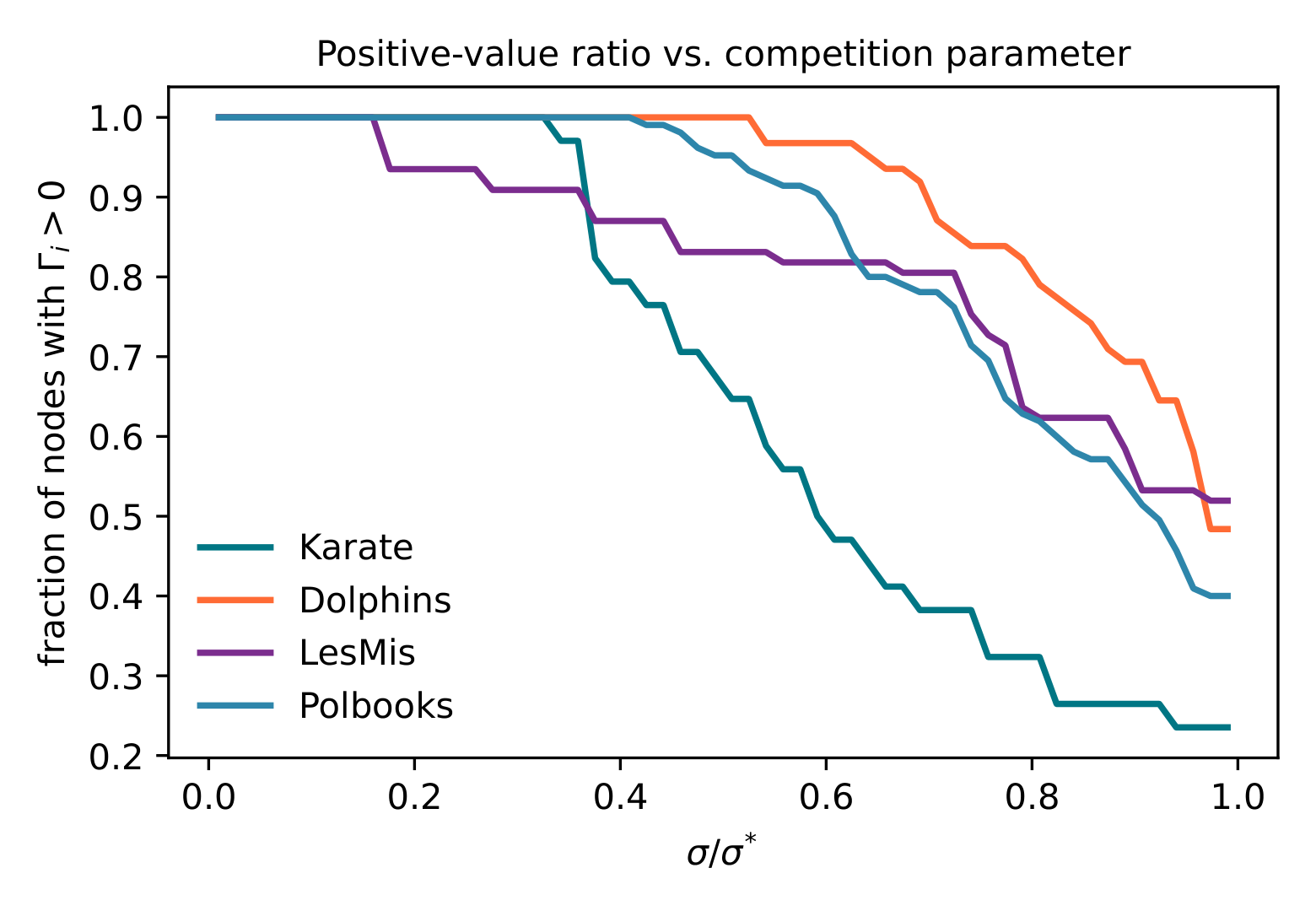}\\
	\caption{Fraction of nodes with positive DomiRank values versus $\sigma/\sigma^*$ on the four real-world networks.}
	\textbf{Alt text:}\ A line plot of the fraction of vertices with strictly positive DomiRank values against the relative competition intensity, for the four real networks. All four curves fall from 100 percent to between 23 and 52 percent.
	\label{figure14}
\end{figure}

\textbf{(4) Evolution of the key-node set.} Table~\ref{tab:top5} lists the top-5 key nodes of the Karate network under low, medium, and high competition intensities, compared with degree centrality, eigenvector centrality, and PageRank. At low $\sigma$, DomiRank coincides completely with degree centrality; at medium and high $\sigma$, node 2 overtakes node 3 (ranking reversal), and the magnitude of the DomiRank values expands sharply (the $\Gamma$ value of node 34 grows from $0.1583$ to $10.4660$), exponentially widening the gap between the head nodes and the rest, strongly reinforcing the dominant key nodes.

\begin{table}[!h]
\centering
\caption{Top-5 key nodes of the Karate network under different competition intensities}
\begin{tabular}{ll}
\hline
Method / parameter & Top-5 nodes (descending) \\
\hline
DomiRank ($\sigma/\sigma^* \approx 0.04$) & 34, 1, 33, 3, 2 \\
DomiRank ($\sigma/\sigma^* \approx 0.51$) & 34, 1, 33, 2, 3 \\
DomiRank ($\sigma/\sigma^* \approx 0.91$) & 34, 1, 33, 2, 3 \\
Degree centrality & 34, 1, 33, 3, 2 \\
Eigenvector centrality & 34, 1, 3, 33, 2 \\
PageRank & 34, 1, 33, 3, 2 \\
\hline
\end{tabular}
\label{tab:top5}
\end{table}

In summary, the real-world experiments show that $\sigma$ is not a redundant parameter: it drives the DomiRank ranking on real networks to evolve systematically from the degree ranking to the smallest-eigenvector ranking, accompanied by a series of quantifiable specificity phenomena --- entropy monotonic decrease, positive-fraction decay, and ranking reversals. In practice, $\sigma$ can be chosen according to the application goal: weak competition (small $\sigma$) corresponds to ``important-node'' identification, while strong competition (large $\sigma$) corresponds to ``dominant key-node'' identification.

\subsection{Large-scale synthetic networks: verifying\\allowbreak Propositions~\ref{prop:initial_entropy} and~\ref{prop:limit} beyond the classical benchmarks}
\label{sec:large}

The four networks of Table~\ref{tab:networks} are small and hand-picked, so a natural objection is that the phenomena above are artefacts of a particular choice of benchmarks. We therefore repeat the $\sigma$-scan on synthetic network families of two sizes, $N \approx 10^3$ and $N \approx 10^4$, in the three standard generators Erd\H{o}s--R\'enyi (ER, mean degree $6$), Barab\'asi--Albert (BA, $m = 5$) and Watts--Strogatz (WS, $k = 6$, $\beta = 0.1$), each reduced to its largest connected component. This subsection verifies two claims: that the entropy behaves as described in Section~\ref{sec:sigma} at scales three orders of magnitude beyond the classical benchmarks, and that the interpolation endpoints of Theorem~\ref{thm:sigma_zero} and Proposition~\ref{prop:limit} are a property of the model rather than of the four networks. We also report the harmonicity residual $\|A\boldsymbol{k} - \bar{d}\,\boldsymbol{k}\|_\infty / \|\boldsymbol{k}\|_\infty$, where $\boldsymbol{k}$ is the degree vector and $\bar{d}$ the mean degree, which vanishes exactly on the harmonic graphs of Proposition~\ref{prop:invariance}; it is large throughout, so none of these graphs is harmonic and $\sigma$ is genuinely active on all of them.

\begin{table}[!h]
\centering
\caption{The $\sigma$-scan on synthetic network families. $H$ is the DomiRank entropy at the two ends of the relative grid $x = \sigma/\sigma^* \in [0.01, 0.99]$ and $\Delta H$ the net change; ``mono.'' records whether $H$ decreases at every grid step and ``frac.'' the fraction of steps at which it does. $\tau_{\mathrm{deg}}$ and $\tau_{N}$ are the tie-aware Kendall correlations of the DomiRank ranking with the degree and smallest-eigenvector rankings at $x = 0.01$ and $x = 0.99$. The harmonicity residual is $\bar{d}$-normalized and vanishes only on harmonic graphs. Grid sizes differ because sparse LU factorization of $\sigma A + I$ dominates the cost at $N \approx 10^4$; every grid point is an exact linear solve.}
\label{tab:large}
{\scriptsize
\setlength{\tabcolsep}{3.5pt}
\begin{tabular}{llrrlrrlrr}
\hline
Model & $N$ & grid & $\lambda_N$ & $H(x{=}0.01)$ & $H(x{=}0.99)$ & $\Delta H$ & mono. & frac. \\
\hline
ER  & 1000  & 60 & $-5.7445$ & 9.8566 & 9.3858 & $-0.4708$ & yes & 1.0000 \\
BA  & 1000  & 60 & $-10.9131$ & 9.5547 & 8.3679 & $-1.1868$ & yes & 1.0000 \\
WS  & 998   & 60 & $-3.2669$ & 9.9183 & 9.5894 & $-0.3289$ & yes & 1.0000 \\
ER  & 9994  & 20 & $-5.7452$ & 13.1812 & 12.7019 & $-0.4793$ & yes & 1.0000 \\
BA  & 10000 & 20 & $-19.0779$ & 12.8141 & 11.9178 & $-0.8963$ & yes & 1.0000 \\
WS  & 9993  & 20 & $-3.4066$ & 13.2447 & 13.0085 & $-0.2362$ & yes & 1.0000 \\
\hline
\end{tabular}
}
\vspace{4pt}
{\scriptsize
\setlength{\tabcolsep}{3.5pt}
\begin{tabular}{llrrrrrrr}
\hline
 & & \multicolumn{2}{c}{pos. fraction} & \multicolumn{2}{c}{$\tau_{\mathrm{deg}}$} & \multicolumn{2}{c}{$\tau_{N}$} & harm. \\
Model & $N$ & $x{=}0.01$ & $x{=}0.99$ & $x{=}0.01$ & $x{=}0.99$ & $x{=}0.01$ & $x{=}0.99$ & residual \\
\hline
ER  & 1000  & 1.0000 & 0.8750 & 0.9927 & 0.5207 & 0.0355 & 0.3037 & 2.04 \\
BA  & 1000  & 1.0000 & 0.6130 & 0.9858 & 0.1395 & $-0.0004$ & $-0.6124$ & 9.45 \\
WS  & 998   & 1.0000 & 0.9208 & 0.9531 & 0.6801 & $-0.0632$ & $-0.0757$ & 1.41 \\
ER  & 9994  & 1.0000 & 0.8851 & 0.9932 & 0.5361 & $-0.0155$ & $-0.4913$ & 2.87 \\
BA  & 10000 & 1.0000 & 0.7438 & 0.9844 & 0.2274 & 0.0337 & 0.4246 & 13.98 \\
WS  & 9993  & 1.0000 & 0.9593 & 0.9555 & 0.7441 & 0.0101 & 0.0027 & 1.73 \\
\hline
\end{tabular}
}
\end{table}

Three findings follow. First, the entropy decreases at \emph{every} grid step on all six graphs, with the net decrease ranging from $-0.24$ (WS, $N \approx 10^4$) to $-1.19$ (BA, $N = 10^3$); this is consistent with Conjecture~\ref{conj:entropy} far beyond the four benchmarks, and the decrease is not a small-$\sigma$ effect, continuing past $x = 0.5$, well outside the range of the sufficient conditions of Theorem~\ref{thm:entropy_decrease}. Second, the low-competition endpoint is reproduced with high fidelity ($\tau_{\mathrm{deg}} \in [0.953, 0.993]$ at $x = 0.01$), after which the correlation with the degree ranking decays to $0.14$--$0.74$, most sharply on the strongly hub-dominated BA graphs. Third, and against our initial expectation, $\tau_N$ at $x = 0.99$ does not approach $1$ uniformly: it is $0.30$ on ER ($N = 10^3$), $-0.61$ on BA ($N = 10^3$), $-0.49$ on ER ($N = 10^4$) and $0.42$ on BA ($N = 10^4$), while remaining near zero on WS. The reason is visible in the same table: the fraction of strictly positive entries at $x = 0.99$ falls to $0.61$--$0.96$, so the ranking being correlated is no longer a probability distribution and the comparison with $\boldsymbol{v}_N$ loses part of its meaning; the limit $\sigma \to \sigma^*$ is a statement about the unnormalized solution vector (Proposition~\ref{prop:limit}), not a guarantee of high rank correlation at the last grid point. We report this rather than tuning the grid, because it is exactly the regime in which DomiRank is used to discriminate dominant nodes, and the negative values are the signature of the mechanism rather than a defect of the measure.

\subsection{A running example: the 5-vertex graph of Example~\ref{ex:first} through the whole scan}\label{sec:anchor}

The 5-vertex graph of Example~\ref{ex:first} is small enough to be followed exactly, and it is worth returning to once, since it puts the three parts of the paper on a single object. Its degree sequence is $(2,2,4,1,1)$ and $\lambda_N = -1.8136$, so $\sigma^* \approx 0.5514$. On the 60-point relative grid of Section~6.2, all five entries stay strictly positive up to $x = 0.5083$ ($\sigma = 0.2803$), and on that domain the theoretical entropy of Proposition~\ref{prop:initial_entropy} decreases monotonically from $H = 2.1179$ at $x = 0.01$ through $2.0360$ at $x = 0.16$ to $1.2033$ at $x = 0.5083$, exactly as Conjecture~\ref{conj:entropy} requires: the edge covariance here is $\mathrm{Cov}(k_i, \log_2 k_j) = -0.72 < 0$, the disassortative side. The positive fraction then leaves $1$ at $x = 0.5249$ (three of five vertices positive), and the effective entropy of the truncated distribution, the object plotted in Figure~\ref{figure12}, reaches $H = 0$ at $x = 0.7575$ and stays there, a single vertex carrying the whole positive mass. Over the full grid the effective entropy decreases on $45$ of the $59$ steps and is constant on the other $14$, so it is not strictly decreasing throughout. That is a property of the truncation rather than a counterexample: the step at which the entropy stops moving is exactly the step at which one vertex is left, and the theoretical entropy is monotone throughout the region where it is defined. The same graph satisfies the symmetry axiom in the strictest way, the two orbits $\{1,2\}$ and $\{4,5\}$ receiving identical values at every $\sigma$, as Example~\ref{ex:first} shows at $\sigma = 0.25$.

Finally, we returned to the entropy-minimality claim of Conjecture~\ref{conj:star} on a random sample rather than an exhaustive one: 722 random graphs (ER, BA and WS with $N \in \{10, 20, 50\}$, reduced to the largest connected component) were compared with the star $S_N$ at the 19 relative intensities $x \in \{0.05, 0.10, \dots, 0.95\}$ of the operationalization given after Conjecture~\ref{conj:star}. Across $13{,}718$ comparisons, $H_G \ge H_{S_N}$ held in every case, with no counterexample; this complements, and does not replace, the exhaustive verification over all connected graphs with $N \le 7$ reported in Remark~\ref{rem:star_evidence}, which used a separate batch of 900 random networks at the same three values of $N$.

\section{Relation to companion work}
\label{sec:relation}

This paper characterizes DomiRank as a mathematical object: its behaviour under graph automorphisms, its entropy, its response to graph transformations, and the spectral meaning of $\sigma$. A companion study (Y.~Y.~Zhang and C.~Y.~Zhao, unpublished results) addresses a different question, namely whether a centrality measure's advantage over a degree-preserving (configuration-model) null model can be attributed to the measure itself. That companion work decomposes any null-model $z$-score into a fragility component, shared with degree centrality, and a specificity component, the increment over the degree baseline $z_{\mathrm{metric}} = z_{\deg} + \Delta z_{\mathrm{metric}}$, and applies the decomposition to DomiRank on the network-dismantling and influence-maximization tasks. It finds that on dismantling the increment $\Delta z_{\mathrm{DomiRank}}$ is not statistically distinguishable from zero across eighteen networks, while on influence maximization a positive increment reappears for several networks.

The two studies use the term \emph{specificity} in different senses, and we state the distinction to avoid a misreading. In the companion study specificity is a \emph{statistical} quantity, the part of a task-level advantage that survives subtraction of the degree baseline. Here the specificity of $\sigma$ is a \emph{structural} property of the model: while the parameter is inert on the class of harmonic graphs, where the normalized DomiRank distribution is invariant under $\sigma$ (Proposition~\ref{prop:invariance}), it actively reshapes the ranking off that class, interpolating continuously between degree centrality and smallest-eigenvector centrality (Theorem~\ref{thm:sigma_zero} and Proposition~\ref{prop:limit}). The two statements are not in tension: the companion study quantifies how much of DomiRank's \emph{task-level} edge over degree survives an explicit degree baseline, whereas this paper shows that $\sigma$ nevertheless has a well-defined effect on the ranking itself. Because the resolution of $\sigma$ is not uniform across graphs, the companion study's protocol of maximizing $z$ over a grid of $\sigma$ values is, if anything, favourable to DomiRank: it grants the measure a degree of freedom that the parameter-free baselines do not possess.

Two consequences for practice. First, null-model evaluations of centrality should report the degree baseline $z_{\deg}$ alongside $z_{\mathrm{metric}}$ and interpret only the increment $\Delta z_{\mathrm{metric}}$ as measure-specific evidence; the structural results derived here describe the measure, not its advantage over a degree baseline, and should not be read as a specificity claim on a downstream task. Second, an application seeking an advantage beyond degree centrality should expect it to be task-conditional, and should verify it on the task of interest rather than infer it from the parameterization alone.

\section{Conclusion}

This paper carried out a systematic study of the DomiRank algorithm. The main results are as follows:

\begin{enumerate}
    \item \textbf{Symmetry axiom and its combinatorial strength:} we proved that DomiRank is invariant under the automorphism group, and that the number of distinct DomiRank entries is bounded by the number of automorphism orbits; we gave the smallest-order example in which inequivalent vertices receive equal values, showing the bound to be strict. We also derived the DomiRank behaviour of special structures (regular and vertex-transitive graphs), proved that a connected graph has a constant DomiRank vector if and only if it is regular, and gave the constraints of local structures (identical neighborhoods, subset neighborhoods, and pendant vertices) on DomiRank values, with a counterexample showing that the threshold condition in the pendant-vertex lemma is indispensable.

    \item \textbf{Entropy bound and initial derivative:} we proved that the maximum DomiRank entropy $\log_2 N$ of a connected graph is attained only by regular graphs, and established, under sufficient conditions, the monotonic decrease of the entropy with $\sigma$; we further identified the initial entropy derivative with the degree--log-degree edge covariance (Proposition~\ref{prop:initial_entropy}), which settles the low-competition behavior rigorously on structurally disassortative graphs and yields an assortative counterexample showing that the unconditional monotonicity fails; the refined version is posed as Conjecture~\ref{conj:entropy}, supported by exhaustive and real-network numerics.

    \item \textbf{Spectral role of $\sigma$ and its degeneracy:} we derived the closed-form sensitivity formula $\theta(\sigma A+I)^{-2}A\mathbf{1}$; proved that the normalized DomiRank distribution is $\sigma$-invariant if and only if the degree vector is an eigenvector of the adjacency matrix (harmonic graphs), and that DomiRank converges to the smallest-eigenvector direction as $\sigma \to -1/\lambda_N$, so that $\sigma$ realizes a continuous interpolation between degree centrality and smallest-eigenvector centrality within the resolvent (Bonacich-type) family. Remark~\ref{rem:interpolation} states how this differs from the existing theory of parameter-dependent resolvent centralities: the endpoint is the smallest rather than the principal eigenvector, reachable only in the negative-parameter regime; the resolvent acts on the degree vector; and the degenerate class is characterized exactly. Experiments on four real networks and on six synthetic graphs with $10^3 \le N \le 10^4$ (Section~\ref{sec:large}) verified the $\sigma$-specific phenomena of entropy monotonic decrease, continuous ranking interpolation, positive-fraction decay, and ranking reversals, and showed that the entropy decreases at every grid step on all six synthetic graphs; they also delimit the high-$\sigma$ regime in which the positive-fraction decay makes the smallest-eigenvector limit a statement about the solution vector rather than about the induced ranking, and Section~\ref{sec:anchor} traces that distinction on a single 5-vertex graph, where the theoretical entropy stays monotone for as long as it is defined while the truncated entropy flattens at zero as soon as one vertex is left.

    \item \textbf{Response to structural transformations:} we studied the influence of vertex partitions, edge deletions, edge swaps, and $m$-products on DomiRank values and derived the corresponding theorems. In particular, we proved that the DomiRank values of the $m$-product graph $G^{(m)}$ are explicitly related to those of the original graph at parameter $\sigma m$.
\end{enumerate}

Compared with traditional algorithms such as principal-eigenvector centrality, DomiRank quantifies node dominance through a dynamical-system model and respects the symmetry constraints imposed by graph automorphisms, giving it distinctive evaluation behavior and providing theoretical tools and application support for node-importance evaluation in complex networks. The companion paper \cite{p22} further exploits the tunability of $\sigma$ established here in DomiRank-based fusion strategies for link prediction, where, under unified parameters, DomiRank-based fusion attains stable and statistically significant improvements over local baselines that PageRank-based fusion does not consistently provide. That comparison is carried out on the standard link-prediction benchmark; the implicit degree bias of that benchmark, and the degree-only reference level against which the fusion gain must therefore be read, are quantified in the companion paper itself.

Finally, we stress the scope of the present results. The theoretical statements collected here --- invariance under automorphisms, the entropy bound, the behaviour under graph transformations, and the spectral role of $\sigma$ --- characterize DomiRank as a \emph{measure}, not as a superior \emph{ranking method} on any particular task. Whether DomiRank's ranking carries information beyond what degree centrality already provides is a separate, task-level question, examined in a second companion study (Y.~Y.~Zhang and C.~Y.~Zhao, unpublished results) through an explicit degree-baseline decomposition of null-model $z$-scores; we refer the reader to Section~\ref{sec:relation} for a discussion of how the two notions of specificity relate.

\section*{Acknowledgements}

The authors thank the College of Sciences of China Jiliang University for institutional support.

\medskip

\noindent
\textbf{Declaration of generative AI and AI-assisted technologies in the
writing process.} The manuscript, including its text, figures, numerical analysis
and code, was written and produced by the authors. No generative-AI tool was used
to generate any part of it, and no such tool is listed as an author.

\section*{Funding}

This research received no specific grant from any funding agency in the public, commercial, or not-for-profit sectors.

\section*{Author contributions}

\textbf{Yingying Zhang:} Conceptualization, Methodology, Software, Formal analysis, Investigation, Data curation, Writing -- original draft, Visualization. \textbf{Chengye Zhao:} Conceptualization, Methodology, Supervision, Project administration, Writing -- review \& editing.

\section*{Competing interests}

The authors declare that they have no known competing financial interests or personal relationships that could have appeared to influence the work reported in this paper.

\section*{Data availability}

All figures and tables of Section~\ref{sec:sigma} are reproducible from the process described in Section~6.2: given the four edge lists, the DomiRank vector at each sampling point is obtained by the single linear solve $(\sigma A + I)\boldsymbol{\Gamma} = \sigma A\mathbf{1}$. The four networks are public benchmark datasets, taken from the sources given in Section~6.2 and used unchanged except for the restriction to the largest connected component; the exact edge lists used here, in the reduced form in which they were analysed, are deposited in the public archive accompanying this article (see Code availability below), together with the scripts and figures. No new data were collected for this study, and none are held back. The synthetic networks of Table~\ref{tab:large} and Section~\ref{sec:anchor} are generated by the procedure and parameters stated in Section~6.3.

\section*{Code availability}

The core computation underlying every DomiRank vector and every entropy value reported here is given in full as Algorithm~\ref{alg:grid}: one linear solve per sampling point, no iteration and no third-party solver. The remaining figures are analytical post-processing of those same vectors and are specified in the text where they appear: the Kendall $\tau$ correlations of Figure~\ref{figure13} use the tie-aware convention of Section~6.2, the positive fractions of Figure~\ref{figure14} are the counts of strictly positive entries of $\boldsymbol{\Gamma}$, and the trajectories of Figure~\ref{figure11} are the same vectors sampled along the grid. The full implementation used for the experiments of Section~\ref{sec:sigma}, together with the synthetic-network scans of Table~\ref{tab:large}, the four reduced edge lists and all fifteen figures are deposited in a public archive at \textbf{doi:10.5281/zenodo.23366403} \cite{p38}. The implementation depends only on dense or sparse linear algebra at each sampling point; the network edge lists are those of Zachary \cite{p33}, Lusseau et~al.\ \cite{p34}, Knuth \cite{p35} and Newman \cite{p36}. A manuscript on null-model tests of centrality specificity is cited above as unpublished results; it is neither under review nor part of the archive. Earlier versions of this work were posted to arXiv as arXiv:2610.00107 and have since been revised; the present version supersedes them.

\section*{Abbreviations}

\noindent BA, Barab\'asi--Albert; DomiRank, node dominance ranking;\\
ER, Erd\H{o}s--R\'enyi; KP, Katz--PageRank; LCC, largest connected component;\\
ORCID, Open Researcher and Contributor ID; PR, PageRank;\\
WS, Watts--Strogatz.

\end{document}